\setlist[description]{
  font=\sffamily\bfseries\itshape
}
\declaretheoremstyle[
  spaceabove=6pt, spacebelow=6pt,
  bodyfont = \normalfont
]{plain}
\declaretheoremstyle[
  numbered=no,
  headfont = \normalfont\itshape,
  bodyfont = \normalfont,
  name=Proof,
  qed=$\triangle$
]{claimproofstyle}
\declaretheorem[style=plain,name=Theorem]{theorem}
\declaretheorem[numberlike=theorem, name=Proposition]{prop}
\declaretheorem[numberlike=theorem]{lemma}
\declaretheorem[numberlike=theorem, name=Corollary]{cor}
\declaretheorem[numberlike=theorem]{definition}
\declaretheorem[numberlike=theorem]{claim}
\declaretheorem[numberlike=theorem, qed=$\boxtimes$]{example}
\declaretheorem[style=claimproofstyle]{claimproof}
\newcommand{\negline}{\vspace*{-\baselineskip}}
\newcommand{\tuple}[1]{{\bar#1}}
\newcommand{\cM}{\mathcal{M}}
\newcommand{\cP}{\mathcal{P}}
\newcommand{\cS}{\mathcal{S}}
\newcommand{\cT}{\mathcal{T}}
\newcommand{\cV}{\mathcal{V}}
\newcommand{\fA}{\mathfrak{A}}
\newcommand{\fD}{\mathfrak{D}}
\newcommand{\fN}{\mathfrak{N}}
\newcommand{\talpha}{\tuple{\alpha}}
\newcommand{\tbeta}{\tuple{\beta}}
\newcommand{\tmu}{\tuple{\mu}}
\newcommand{\ta}{\tuple{a}}
\newcommand{\tb}{\tuple{b}}
\newcommand{\tc}{\tuple{c}}
\newcommand{\tu}{\tuple{u}}
\newcommand{\tv}{\tuple{v}}
\newcommand{\tw}{\tuple{w}}
\newcommand{\tx}{\tuple{x}}
\newcommand{\ty}{\tuple{y}}
\newcommand{\tz}{\tuple{z}}
\newcommand{\fo}{\mathrm{FO}}
\newcommand{\fot}{{\fo^{\mathsf{T}}}}
\newcommand{\fom}{{\fo^{\mathsf{M}}}}
\newcommand{\fomts}{{\fo^{\mathsf{mts}}}}
\newcommand{\esomts}{{\mathrm{ESO}^{\mathsf{mts}}}}
\newcommand{\eso}{\operatorname{\Sigma_1^1}}
\newcommand{\dep}{\operatorname{dep}}
\newcommand{\const}{\operatorname{const}}
\newcommand{\excl}{\mathbin{\operatorname{|}}}
\newcommand{\incl}{\subseteq}
\newcommand{\equiex}{\Join}
\newcommand{\indep}{\perp}
\newcommand{\anon}{\Upsilon}
\newcommand{\mindep}{\mathbin{\perp\!\!\!\perp}}
\newcommand{\mincl}{\subsetpeq}
\newcommand{\fork}[1]{\sphericalangle_{#1}}
\newcommand{\cycle}[2]{\operatorname{cycle}(#1, #2)}
\newcommand{\A}{\forall}
\newcommand{\E}{\exists}
\newcommand{\Land}{\bigwedge}
\newcommand{\Lor}{\bigvee}
\newcommand{\imp}{\rightarrow}
\newcommand{\lra}{\leftrightarrow}
\newcommand{\true}{\mathsf{true}}
\newcommand{\false}{\mathsf{false}}
\renewcommand{\emptyset}{\varnothing}
\renewcommand{\phi}{\varphi}
\renewcommand{\theta}{\vartheta}
\renewcommand{\epsilon}{\varepsilon}
\renewcommand{\models}{\mathbin{\vDash}}
\newcommand{\nmodels}{\mathbin{\nvDash}}
\newcommand{\defiff}{\mathbin{:\!\!\iff}}
\newcommand{\ceq}{\mathbin{:=}}
\newcommand{\cceq}{\mathbin{::=}}
\newcommand{\res}[1]{{\upharpoonright}_{#1}}
\newcommand{\error}[1]{\mathsf{e}_{#1}}
\newcommand{\bound}{\mathbin{\preceq}}
\newcommand{\N}{\mathbb{N}}
\newcommand{\R}{\mathbb{R}}
\newcommand{\tsv}{\textsf{tsv}}
\newcommand{\mset}[1]{\{\mkern-5mu\{#1\}\mkern-5mu\}}
\DeclareMathOperator{\dom}{dom}
\DeclareMathOperator{\pr}{Pr}
\newcommand{\ppot}[1]{\cP^+(#1)}
\newcommand{\supp}[1]{T(#1)}
\newcommand{\teamVer}[1]{#1^{\mathsf{T}}}
\DeclareMathOperator{\free}{free}
\newcommand{\quotes}[1]{``#1''}
\newcommand{\cupdot}{
\begin{tikzpicture}[baseline=-.65ex]
  \node (a) at (0, 0) {${\cup}$};
  \node (b) at (0, 0) {${\cdot}$};
\end{tikzpicture}}
\newcommand{\subsetpeq}{
  \mathbin{
  \begin{tikzpicture}[baseline=-.65ex]
    \node[inner sep = 0] (a) at (0, 0) {${\subseteq}$};
    \node[inner sep = 0] (b) at (-.2ex, .125ex) {\fontsize{1.3ex}{0} ${+}$};
    \useasboundingbox (a.north west) rectangle (a.south east);
  \end{tikzpicture}
  }
}
\newcommand{\Iff}{\Longleftrightarrow}
\newcommand{\ra}{\rightarrow}
\newcommand{\NPtime}{\textsc{\textsf{NP}}}
\newcommand{\teamG}[3]{G_{#1}^{#2,#3}}
\begin{document}


\title{Logics with Multiteam Semantics}

\author{Erich Gr\"adel}
\orcid{0000-0002-8950-9991}
\affiliation{%
  \institution{Mathematical Foundations of Computer Science, RWTH Aachen University}
  \city{Aachen}
  \country{Germany}
}
\email{graedel@logic.rwth-aachen.de}

\author{Richard Wilke}
\authornotemark[0]
\authornote{Supported by the DFG RTG 2236 UnRAVeL}
\email{wilke@logic.rwth-aachen.de}
\orcid{0000-0002-8110-0921}
\affiliation{%
  \institution{Mathematical Foundations of Computer Science, RWTH Aachen University}
  \city{Aachen}
  \country{Germany}
}

\keywords{logics of dependence and independence, multiteam semantics, team semantics, metafinite model theory}

\begin{CCSXML}
<ccs2012>
    <concept>
        <concept_id>10003752.10003790.10003799</concept_id>
        <concept_desc>Theory of computation~Finite Model Theory</concept_desc>
        <concept_significance>500</concept_significance>
    </concept>
</ccs2012>
\end{CCSXML}
\ccsdesc[500]{Theory of computation~Logic}

\begin{abstract}
Team semantics is the mathematical basis of modern logics of dependence and independence.
In contrast to classical Tarski semantics, a formula is evaluated not for a single assignment
of values to the free variables, but on a set of such assignments,  called a team.
Team semantics is appropriate for a purely logical understanding of dependency notions,
where only the presence or absence of data matters, but based on sets, it does not take into account
multiple occurrences of data values. It is therefore insufficient in scenarios where such multiplicities
matter, in particular for reasoning about probabilities and statistical independencies.
Therefore, an extension from teams to multiteams (i.e.~multisets of assignments) has been
proposed by several authors.

In this paper we aim at a systematic development of logics of dependence and independence based on multiteam semantics.
We study atomic dependency properties of finite multiteams and discuss the appropriate meaning of logical operators to extend the atomic dependencies to full-fledged logics for reasoning about dependence properties in a multiteam setting.
We explore properties and expressive power of a wide spectrum of different multiteam logics and compare them to second-order logic and to logics with team semantics.
In many cases the results resemble to what is known in team semantics,
but there are also interesting differences.
While in team semantics, the combination of inclusion and exclusion dependencies leads to
a logic with the full power of both independence logic and existential second-order logic,
independence properties of multiteams are not definable by any combination of properties
that are downwards closed or union closed and thus strictly more powerful
than inclusion-exclusion logic.
We also study the relationship of logics with multiteam semantics with existential second-order
logic for a specific class of metafinite structures. It turns out that inclusion-exclusion logic
can be characterised in a precise sense by the Presburger fragment of this logic, but
for capturing independence, we need to go beyond it and add some form of multiplication.
Finally we also consider multiteams with weights in the reals  and study
the expressive power of formulae by means of  topological properties.
\end{abstract}

\maketitle
\clearpage

\section{Introduction}

Team semantics, originally invented by Hodges \cite{Hodges97} to provide a compositional, model-theoretic semantics for
independence-friendly logic (IF) \cite{MannSanSev11},  has become the mathematical basis of a wide variety of logics for reasoning 
about dependence, independence, or imperfect information. In modern logics of this kind,
dependencies between variables are not presented anymore by annotations of quantifiers (as in IF logic)
but are considered, following a proposal by V\"a\"an\"anen \cite{Vaananen07}, as atomic properties of teams.
A team is a set of assignments $s\colon\{x_1,\dots,x_k\}\to A$
with the same domain of variables and the same co-domain of values, typically the universe of a structure.
Indeed, statements of the form ``$y$ depends on $x$'' or ``$x$ and $y$ are independent'' do not make sense
for an individual assignment of values to $x$ and $y$, but require a larger amount of data, which can be 
modelled by a table or relation, or equivalently by a team of assignments. 

Team semantics has turned out to be extremely fruitful and important for reasoning about dependence and independence.
It has lead to a genuinely new area in logic,
with an interdisciplinary motivation of providing logical systems for reasoning about 
the fundamental notions of dependence and independence that permeate many scientific disciplines. 
Methods from several areas of computer science, including finite model theory, database theory, and the algorithmic analysis of
games have turned out as highly relevant for this area. For more information, we refer to the 
volume \cite{DagstuhlDependenceBook} and the references therein.

However, one of the limitations of team semantics is that, based on sets,
it does not take into account multiplicities of assignments.
Team semantics  thus provides a purely logical understanding of dependency notions,
where only the presence or absence of data matters.
For instance, the independence of two variables $x$ and $y$ in a team $X$ means that
the values for $x$ and $y$ occur in all conceivable combinations in $X$ or, equivalently,
that learning anything new about the actual value of one variable, does not provide any
additional information about the possible values for the other \cite{GraedelVaa13}.
However, it may very well be that $x$ and $y$ are independent in this sense, but not in a statistical sense,
so that additional knowledge about $x$ may lead to new insights into the \emph{probabilities} for the possible values of $y$.
Team semantics is therefore insufficient in scenarios where multiplicities matter,
in particular for reasoning about probabilities and statistical independencies.
A natural semantical framework for logics of dependence and independence taking into account
multiplicities of data is obtained by considering \emph{multisets}, rather than sets, of assignments.
This is also motivated by the prominent role of \emph{bag semantics} in databases,
which is one of the areas in computer science where reasoning about dependencies
is an important issue \cite{VardiCha93, JayramKolVee06}.
Indeed, the standard way of evaluating database queries in commercial systems
(like \textsf{SQL}) is based on bag semantics, and does not eliminate duplicates
in the query result, unless this is explicitly required by the user.
There are several reasons for this, one is that eliminating duplicates may be
computationally expensive and another one is that the multiplicities actually
affect the semantics, as for example when one queries which feature is the most
common among the present data.

\subsection*{Objectives of this paper}
We aim at a systematic development of logics of dependence and independence based on
multiteam semantics. We study atomic dependency properties of finite multiteams and 
discuss the appropriate meaning of logical operators for multiteam semantics, so as to extend
the atomic dependencies to full-fledged logics for reasoning about dependence and
independence in a multiteam setting. 
We compare the properties and
expressive power of a number of different logics with team and multiteam semantics,
and show that in some important aspects, multiteam semantics is quite different from team semantics.
For instance, while in team semantics, the combination of inclusion and exclusion dependencies leads to
a logic with the full power of both independence logic and existential second-order logic,
this is not the case for logics with multiteam semantics. In this
setting independence is not definable by any combination of properties
that are downwards closed or union closed and thus strictly more powerful
than inclusion-exclusion logic. 
We shall also study the relationship of logics with multiteam semantics with 
metafinite model theory, more precisely, with existential second-order
logic for a specific class of metafinite structures. It turns out that inclusion-exclusion logic
can be characterised in a precise sense by the Presburger fragment of this logic, but 
for capturing independence, one would need to go beyond it and  add some form of multiplication.
Finally, we shall study multiteams with weights in the real numbers, which leads to the notions of topologically 
open and closed formulae. Topological properties can be used
to separate certain variants of multiteam logics with respect to expressive power.
Interestingly, this also provides additional insights into the connections between 
team semantics and multiteam semantics with natural multiplicities.

\subsection*{Related work}
The need and the potential of generalising team semantics by taking into account multiplicities
have been noted before, and several approaches for specific applications have been proposed.
Hyttinen et al.\ \cite{HyttinenPaoVaa15} have introduced quantum team logic, a propositional logic for reasoning about quantum phenomena,
and have provided a semantics based on a specific notion of multiteams, called quantum teams.
In a further paper, the same authors \cite{HyttinenPaoVaa17} have introduce measure teams, and a logic for probabilities
of first-order properties of these.

A closely related approach to multiteam semantics is the study of \emph{probabilistic teams}
where the individual assignments come with a probability rather than a multiplicity.
Durand et al.\ \cite{DurandHanKonMeiVir18a} study probabilistic team semantics with
rational probabilities and compare the power of different logics in this setting.
Probabilistic teams have been further studied in \cite{HannulaHirKonKulVir19}, with probabilities that are not
confined to the rational numbers. Both papers also make connections to variants of existential 
second-order logic, that are somewhat similar to the ones established here, although our setting and our
methods are different. Probabilistic team semantics is an important research direction, 
which clearly has similarities to multiteam semantics, but also relevant differences.
Probabilistic notions for teams abstract away from the concrete multiplicities of data and 
instead focus exclusively on relative probabilities, whereas in our work  
the actual multiplicities are taken into account, which in many application scenarios, such as
databases, is of importance.
We believe that both approaches, probabilistic team semantics and multiteam semantics,  
are worth investigating in detail and we here contribute to the latter case.

Multiteams as such have been considered by Durand et at.\ in \cite{DurandHanKonMeiVir18},
but in a different setting where not only the data (team) is equipped with multiplicities, but also the elements of
the underlying structure, so that also the universe is a multiset, rather than a set.
Here instead, we have a purely logical
view on the structure and attach multiplicities to the data records only, i.e. to the
assignments in a multiteam. The differences of the two approaches also lead to different
definitions for the meaning of logical operators.
In \cite{DurandHanKonMeiVir18} the authors study a property of multiteam formulae 
that they call \emph{weak flatness}
which basically means that a formula holds under a multiteam if, and only if, it holds
under a multiteam based on the same team, but where all assignments occur just once,
and they show that weak flatness is a property shared by formulae
in which only downwards closed dependencies  occur.
A multiteam in which every assignment occurs at most once may be considered as a
team, hence this result builds a bridge between team and multiteam semantics.
In the present paper we also consider links between these two formalisms,
but instead of identifying a fragment of multiteam formulae which in their
meaning coincide with team semantical formulae we study the question which properties
that are expressible in a logic with team semantics can also be expressed
in multiteam semantics (when one has access only to multiteam atoms and not to team
atoms, of course).
A further matter, which is addressed in \cite{DurandHanKonMeiVir18}  and in a slightly different form by 
V\"a\"an\"anen in \cite{Vaananen17}, is the notion
of \emph{approximate operators} $\left<p\right>\theta$ and $[p]\theta$, where
$p$ is a probability.
These formulae are satisfied by a multiteam $M$ if some/all submultiteams $N$ of
$M$ with size at least $p\cdot|M|$ satisfy $\phi$.
Although we do not investigate these operators in this paper, we will introduce
a variety of atomic multiteam formulae, most notably the \emph{forking atoms}
$\tx\fork{\leq p}\ty$ with which it is possible to define the existential
approximate operators in our setting.
Further, Hirvonen et al.\ \cite{HirvonenKonPau19} study multiteams from a different angle, considering open and closed formulae based on 
metric spaces.

\section{Team Semantics}

We recall the standard definitions related to team semantics (see, e.g., \cite{DagstuhlDependenceBook,Galliani12,GraedelVaa13,Vaananen07}).
A \emph{team} $X$ is a set of assignments $s\colon D \to A$ with a common finite domain $D = \dom(X)$
of variables and a common codomain $A$ of values, typically the universe of a structure $\fA$.
For a team $X$ with domain $\dom(X)$ and $V\subseteq \dom(X)$ the \emph{restriction}
of $X$ to $V$ is the team $X\res{V} := \{s\res{V} : s\in X\}$.
We write $s[x\mapsto a]$ for the assignment
that extends, or updates, $s$ by mapping $x$ to $a$.
We write $\mathcal{P}(A)$ for the power set of $A$ and 
set  $\mathcal{P}^+(A) := \mathcal{P}(A) \setminus \{ \emptyset\}$.
Basic operations that possibly extend
the domain of  a given team to new variables are the unrestricted 
\emph{generalisation} over $A$,
$X[x\mapsto A] :=  \{ s[x\mapsto a] : s \in X, a \in A \}$
and the \emph{Skolem-extensions}
$X[x\mapsto F] := \{ s[x\mapsto a] : s \in X, a \in F(s) \}$
for any function $F \colon X \to \mathcal{P}^+(A)$.
Modern logics of dependence and independence are based on atomic properties of teams, that are then extended to full-fledged
logics for reasoning about such dependencies, using the rules of team semantics.

\begin{definition}[Atomic dependencies]
  \label{def: dependency atoms}
  Let $\fA$ be a structure and $X$ a team over $\fA$.
  \begin{description}
    \item[Dependence:] $\fA \models_X \dep(\tx, y)\ \defiff\  (\A s,s'\in X)\ s(\tx) = s'(\tx) \Rightarrow s(y) = s'(y)$; {\hfill\normalfont\cite{Vaananen07}}
    \item[Exclusion:] $\fA \models_X \tx \excl \ty\ \defiff\ (\A s,s'\in X)\ s(\tx) \neq s'(\ty)$; {\hfill\normalfont\cite{Galliani12}}
    \item[Inclusion:] $\fA \models_X \tx\incl\ty\ \defiff\ (\forall s\in X)(\exists s'\in X)\ s(\tx) = s'(\ty)$; {\hfill\normalfont\cite{Galliani12}}
    \item[Equiextension:] $\fA \models_X \tx\equiex\ty\ \defiff\ \fA \models_X \tx \incl \ty$ and $\fA \models_X \ty\incl\tx$; {\hfill\normalfont\cite{Galliani12}}
    \item[Anonymity:] $\fA \models_X \tx\anon y\ \defiff (\A s\in X)(\E s'\in X)\ s(\tx)=s'(\tx)$ and $s(y)\neq s'(y)$; {\hfill\normalfont\cite{DagstuhlDeplog19}}
    \item[Independence:] $\fA \models_X  \tx\indep \ty\ \defiff\ (\A s,s'\in X)(\E s''\in X)\ s''(\tx)=s(\tx), s''(\ty)=s'(\ty)$. {\hfill\normalfont\cite{GraedelVaa13}}
  \end{description}
\end{definition}

An important issue in the study of atomic dependencies are \emph{closure properties}. Dependence
and exclusion atoms are \emph{downwards closed}:  whenever such a dependency is true for a team $X$,
then it is also true for all subteams $Y\subseteq X$. Similarly, inclusion, equiextension, and anonymity atoms are
\emph{closed under unions}:  whenever they are true in all teams $X_i$ (for $i\in I$), then they
also hold in their union $\bigcup_{i\in I} X_i$. Independence atoms instead are neither downwards closed nor
closed under unions.

Given any collection $\Omega$ of atomic team properties, we denote by $\fot[\Omega]$ the closure 
of the atoms from $\Omega$ and the first-order literals (of some fixed relational vocabulary $\tau$)
under the logical operators $\land, \lor, \E$, and $\A$.

The traditional semantics (to which we refer as Tarski semantics) for $\phi\in\fo$
is based on single assignments $s$ whose domain must comprise the variables
in $\free(\phi)$; we write $\fA\models \phi[s]$ for saying that $\fA$ satisfies
$\phi$ with the assignment $s$.
The rules of team semantics extend the semantics of the atomic dependencies to tell us when it is the case that 
$\fA\models_X \phi$ for $\phi(\bar x)\in\fot[\Omega]$, a $\tau$-structure $\fA$ and a team $X$ 
whose domain includes the free variables of $\phi$.

\begin{definition}[Team Semantics]
  \label{def: team semantics}
  \begin{itemize}
    \item If $\phi$ is a literal then $\fA \models_X \phi$ if $\fA \models \phi[s]$ for all $s \in X$;
    \item $\fA \models_X \phi_1 \land \phi_2$ if $\fA \models_X \phi_i$ for $i = 1,2$;
    \item $\fA \models_X \phi_1 \lor \phi_2$ if $X = X_1 \cup X_2$ for two teams $X_i$ such that  $\fA \models_{X_i} \phi_i$;
    \item $\fA \models_X \forall x\phi$ if $\fA \models_{X[x\mapsto A]} \phi$;
    \item $\fA \models_X \exists x\phi$ if $\fA \models_{X[x\mapsto F]} \phi$ for some $F\colon X\to\ppot{A}$.
  \end{itemize}
\end{definition}

Besides this standard variant for team semantics (sometimes called \emph{lax} semantics), there also exists
a \emph{strict} variant, which for disjunctions requires a \emph{disjoint} decomposition $X=X_1\cupdot X_2$,
and for existential quantification a Skolem extension $X[x\mapsto F]$ with $F\colon X\to A$,
which selects for each assignment $s\in X$ a single value for the quantified variable, rather than a non-empty set of values.
The distinction between the two variants is immaterial for logics that are downwards closed (such as 
exclusion or dependence logic) but for inclusion logic or independence logic, the
strict semantics is deficient in the sense that it violates the \emph{locality principle}
that a formula should only depend on the variables occurring in it.
We shall see that this is different for the multiteam semantics developed below;
there a strict variant of the semantics is the `right' one.

One way to understand the expressive power of a logic  with team semantics is to relate it
to some well-understood logic with classical Tarski semantics.
One can translate formulae $\phi(\bar x)$ from a logic $\fot[\Omega]$ with vocabulary $\tau$ 
into \emph{sentences} $\phi^X$, with Tarski semantics, of
vocabulary $\tau \,\cupdot\, \{X\}$ where $X$ is an additional relation symbol for the team, such that
for every structure $\fA$ and every team $X$ we have that
\[  \fA\models_X\phi(\bar x)\ \Longleftrightarrow\ (\fA,X)\models \phi^X,\]
where  $(\fA,X)$ is the expansion of $\fA$ by a relational encoding of the team $X$.
In all logics with team semantics that extend first-order formulae
by atomic dependencies that are themselves first-order definable, 
and which do not make use of additional
connectives beyond $\land,\lor$ and atomic negation
(in particular no connective with which one can define classical negation),
such a translation will always produce sentences
in (a fragment of) existential second-order logic $\Sigma^1_1$. 
Understanding the expressive
power of a logic $\fot[\Omega]$ with team semantics thus means to identify the
fragment of $\Sigma^1_1$ to which $\fot[\Omega]$ is equivalent in the sense just described.
By now, many results of this kind are known; some of the most important ones are the following.
\begin{itemize}
  \item Dependence logic $\fot[\dep]$ and exclusion logic $\fot[{}\excl{}]$ are equivalent to the fragment of  $\Sigma^1_1$-sentences $\psi(X)$ in which the predicate for the team appears only negatively \cite{KontinenVaa09}.
  \item Inclusion logic is equivalent to sentences of form $\forall \bar x (X\bar x\rightarrow \psi(X,\bar x))$ in the posGFP-fragment of least fixed-point logic, such that $X$ occurs only positively in $\psi(X,\tx)$ \cite{GallianiHel13}.
  \item Independence logic $\fot[\indep]$ and inclusion-exclusion logic $\fot[\incl,{}\excl{}]$ are equivalent with full $\Sigma^1_1$ (and thus can describe all NP-properties of teams) \cite{Galliani12}.
  \item $\fot[\cup\text{-game}]$ is equivalent to the fragment of all union closed formulae of $\eso$ \cite{HoelzelWil20}.
\end{itemize}

\section{Atomic Dependencies for Multiteams}
\label{sec: atomic multiteam}

\subsection*{Multisets and multiteams}
A \emph{multiset} (or bag) $M$ is a pair $(S, m)$ where $S$ is a set and $m\colon S\to\N_{>0}$
describes the multiplicities of the elements $s\in S$ in $M$. For $x\notin S$ we define $m(x) \ceq 0$.
Given two function $m\colon S\to\N_{>0}$ and $m'\colon S'\to\N_{>0}$ and a binary operator
$\circ\colon\N_{>0}\times\N_{>0}\mapsto\N_{>0}$ put $m\circ m'\colon S\cup S'\to\N_{>0}, s\mapsto m(s)\circ m'(s)$.
The size of $M$ is $|M| \ceq \sum_{s\in S} m(s)$ and the \emph{additive union}
with $M'=(S',m')$ is $M\uplus M':=(S\cup S',m+m')$.
Their \emph{product} is $M \times N := (S\times S', m\cdot m')$.
Further, $M$ is a \emph{submultiset} of $M'$,
in symbols $M\subsetpeq M'$, if $S\subseteq S'$ and $m(s)\leq m'(s)$ for all $s\in S$.
For $k\in\N$, we set $kM:=(S,k\cdot m)$, where $0M$ is the empty multiteam $(\emptyset, \emptyset)$.
In case we explicitly present a multiset by listing its elements, we use double set braces as in $\mset{a, a, b} = (\{a, b\}, a\mapsto2, b\mapsto1)$.
We also  make use of the notation $\mset{x \in M : \rho(x)}$,
where $M$ is a multiset, to denote the restriction of $M$ to those elements
$x$ that satisfy the condition $\rho$.
Formally, this is the multiset $\biguplus_{x\in M, \rho(x)} m(x)\cdot\mset{x}$.

A \emph{multiteam} is a multiset $M = (X, m)$ where $X$ is a team.
We write $\supp{M}$ for the underlying team $X$ of $M$.
The domain and codomain of $M$ are those of $\supp{M}$ and the restriction of $M$
to the domain $V$ is the multiteam $(\supp{M}\res{V}, m')$, where $m'\colon s\mapsto\sum_{s'\in \supp{M}, s'\res{V} = s}m(s')$.
In this paper we only consider finite multiteams $M$, with $|M|\in\N$, and we
tacitly assume this whenever we talk about multiteams.
The multiset of values of a variable tuple $\tx$ in a multiteam $M$ is
$M(\tx) := \mset{s(\bar x): s\in M}$.
We denote the restriction of $M$ to those assignments that map $\tx$ to $\ta$ by
$M_{\tx = \ta} := \mset{ s\in M : s(\tx) = \ta}$.
For any first-order atom $\beta$ we define
the restriction of a multiteam $M$ to all assignments that satisfy
this formula, by putting $M_\beta := \mset{s\in M : \fA \models \beta[s]}$, where
the structure $\fA$ should be clear in the given context.
For a tuple $\ta$ of elements from $A$, the \emph{probability} that a randomly chosen
assignment of the multiteam $M$ maps $\tx$ to $\ta$ is $\pr_M(\tx = \ta) := | M_{\tx = \ta}|\,/\, |M|$.
Further, we have the conditional probabilities $\pr_M(\tx = \ta \mid \ty = \tb) :=
|M_{\tx\ty = \ta\tb}|\, /\, |M_{\ty=\tb}|$ (which is undefined in case $M_{\ty=\tb}$ is empty).

\begin{definition} Let $M=(X,m)$ be a multiteam with values in $A$. The universal extension of $M$ on a variable $x$ is 
$M[x \mapsto A] \ceq (X[x\mapsto A], m')$ where $m'(s[x\mapsto a]):=m(s)$.
Equivalently, we can write $M[x \mapsto A] =\mset{s[x\mapsto a]: s\in M,a\in A}$.
A choice function $F\colon M\to A$ for $x$ on $M$ assigns to every (occurrence of an)
assignment $s$ in $M$ a value in $A$;
different occurrences  of the same assignment can be mapped to distinct elements.
The Skolem extension $M[x\mapsto F]$ is the multiteam of all assignments $s[x\mapsto a]$
extending (occurrences of ) $s$ in $M$ by the values assigned by $F$.
Formally, $F$ is a function whose domain is the set of all tuples $(s, i)$ satisfying
$s\in\supp{M}$ and $0\leq i< m(s)$, with codomain $A$, which we indicate by writing
$F\colon M\to A$ as an abbreviation.
Accordingly, $M[x\mapsto F]$ is the multiteam $\biguplus_{s\in\supp{M}, i<m(s)}\mset{s[x\mapsto F(s,i)]}$.
Notice that $|M[x\mapsto F]|=|M|$ for every multiteam $M$ and every choice function $F$, whereas for universal extensions, we have 
$|M[x\mapsto A]|=|M|\cdot|A|$.
\end{definition} 

\subsection*{Atomic dependencies}
Of course, we can view any atomic dependency property $\alpha$ for teams also as a property for multiteams
by defining that $\fA \models_M \alpha$ if, and only if, $\fA \models_{\supp{M}} \alpha$.
In fact, for notions such as dependence and exclusion (which are downwards closed statement that do not
refer to multiplicities), this is the natural way to go.
However, for properties such as inclusion, independence or forking, we 
work with different definitions that take multiplicities into account and 
are appropriate for statistical dependencies of data. 
The following atomic properties capture relevant dependencies for multiteams.

\begin{description}
  \item[Multiteam inclusion:]
    $\fA \models_M \tx \mincl \ty$ holds if, and  only if, $M(\tx) \subsetpeq M(\ty)$.
  \item[Restricted inclusion:]
    $\fA \models_M \tx \mincl_\alpha \ty$ if, and only if, $M_\alpha(\tx) \subsetpeq M(\ty)$.
  \item[Forking:]
    For $\vartriangleleft\in\{<, \leq,=,\geq, >\}$ and $p\in[0,1]$ we define
    $\fA \models_M \tx\fork{\vartriangleleft p} \ty$, if, and only if, for all $\ta, \tb \in A^*$ where $\pr_M(\ty = \tb \mid \tx = \ta) > 0$ also $\pr_M(\ty = \tb \mid \tx = \ta) \vartriangleleft p$ holds.
  \item[Statistical independence:]
    $\fA \models_M \tx \mindep \ty$ holds if, and only if, $\pr_M(\tx=\ta) = \pr_M(\tx=\ta\mid\ty = \tb)$ for all $\ta\in M(\tx)$ and 
    $\tb\in M(\ty)$.
    An equivalent condition is $M(\tx)\times M(\ty)=|M| \cdot M(\tx\ty)$.
  \item[Conditional independence:]
    $\fA \models_M \tx \mindep_\tz \ty$ if, and only if, $\fA \models_{M_{\tz=\tc}} \tx\mindep\ty$ holds for all $\tc\in A^{|\tz|}$.
\end{description}

\noindent
Forking atoms have also been considered in the setting of team semantics \cite{GraedelHeg16}.

One can come up with many other atomic dependencies for multiteams.
There is one relevant condition that we shall impose for such dependencies $\alpha$
which is \emph{invariance under multiplication} of data. This means that
for every structure $\fA$ and every multiteam $M$ we have
$\fA\models_M \alpha$ if, and only if, $\fA\models_{kM}\alpha$ for every natural number $k>0$.
Clearly, all dependencies that we have defined satisfy this condition. If it is abandoned, one could define
pathological atoms  (e.g.~one saying that $|M|$ is prime or $|M| < 5$) that would violate some of the
properties of multiteam semantics that we are going to prove in this paper.

\subsection*{Downwards closed atoms}
A multiteam atom $\alpha$ is \emph{downwards closed} if 
$\fA\models_M\alpha$ and $N\subsetpeq M$ imply that also $\fA\models_N\phi$.
Important examples are dependence and exclusion which are taken over from team semantics.
We show next that, in fact, all downwards closed atoms only depend on the underlying team
and are, in this sense, really team semantical atoms.
We shall establish a stronger result in Section \ref{Sect: BandF}.

\begin{prop}
  \label{prop: downward atom = ts atom}
  Let $\alpha$ be a downwards closed atom for multiteams.
  Then for all multiteams $M, N$ with $\supp{M}=\supp{N}$ we have that $\fA\models_M\alpha$ if, and only if, $\fA\models_N \alpha$.
\end{prop}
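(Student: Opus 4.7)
The plan is to exploit the two assumed closure conditions on $\alpha$: downwards closure under submultiteams, and (established earlier in the subsection on atomic dependencies) invariance under multiplication, i.e.\ $\fA \models_M \alpha$ iff $\fA \models_{kM} \alpha$ for every $k \in \N_{>0}$. The intuition is that even though $M$ and $N$ may differ in multiplicities, the fact that they share the same support $X$ means that after scaling $M$ up by a sufficiently large factor $k$, the multiteam $N$ sits inside $kM$ as a submultiteam.

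Concretely, write $M = (X, m)$ and $N = (X, n)$. Set $k := \max_{s \in X} n(s)$, which is a positive integer since $X$ is finite and nonempty in the interesting case (the empty multiteam case is handled separately and is trivial). Then for every $s \in X$ we have $k \cdot m(s) \geq k \geq n(s)$, so $N \subsetpeq kM$. Assuming $\fA \models_M \alpha$, invariance under multiplication gives $\fA \models_{kM} \alpha$, and downwards closure then yields $\fA \models_N \alpha$. The converse direction follows by the symmetric argument, swapping the roles of $M$ and $N$.

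The only step that needs a little care is the edge case where $X = \emptyset$, in which $M = N$ trivially, so there is nothing to show. Beyond that, there is no real obstacle: the proof is essentially a two-line combination of the two closure properties, which is why the proposition is stated as a preliminary observation. The substantive content really lives in the stronger result promised for Section~\ref{Sect: BandF}, where one presumably gets a semantic characterisation of all downwards closed multiteam atoms as genuine team atoms; here we only need the much weaker fact that their truth value factors through the support.
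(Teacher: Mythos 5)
Your proof is correct and follows essentially the same argument as the paper: scale $M$ by a large enough factor $k$ so that $N \subsetpeq kM$, then combine invariance under multiplication with downwards closure; the paper simply takes $k = |N|$ instead of $\max_{s\in X} n(s)$, which is an immaterial difference.
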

\begin{proof}
  If $\fA\models_M\alpha$ then, by closure under multiplication, also  $\fA\models_{|N|\cdot M} \alpha$.
  But if $N$ and $M$ are based on the same supporting team, then $N\subsetpeq |N|\cdot M$, so by downwards closure, $\fA\models_N \alpha$.
\end{proof}

\subsection*{Union closed atoms}
Also the notion of union closure naturally extends to multiteams. We call
$\alpha$ \emph{union closed} if $\fA\models_M\alpha$ and $\fA\models_N\alpha$ imply that $\fA\models_{M\uplus N}\alpha$.

\begin{prop}
  \label{prop: mincl basic properties}
  \begin{enumerate}
    \item The atoms $\tx\mincl\ty$, $\tx \mincl_\alpha \ty$ and  $\tx\fork{\leq p} \ty$ are union closed.
    \item $\tx\mincl\ty \equiv \ty\mincl\tx$.
    \item $\fA\models_M \tv\tx\mincl\tv\ty$ if, and only if, $\fA\models_{M_{\tv=\ta}}\tx\mincl\ty$ for all $\ta\in M(\tv)$.
    \item If $\fA\models_M \tx\mincl\ty$ and $\fA\models_{M\uplus N} \tx\mincl\ty$, then also $\fA\models_N\tx\mincl\ty$.
    \item\label{prop: mincl basic properties - decompose}
    $\fA\models_M \tx\mincl\ty$ if, and only if, $M$ can be decomposed into (finitely many) multiteams $M_i$ such that $M = \biguplus_iM_i$, every multiplicity occurring in an $M_i$ is 1 and $\fA\models_{M_i} \tx\mincl\ty$
  \end{enumerate}
\end{prop}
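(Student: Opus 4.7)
The plan is to handle the five items in turn, reserving the bulk of the work for the decomposition claim (5), which requires a combinatorial argument.

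For (1), union closure of $\tx\mincl\ty$ is immediate since $(M\uplus N)(\tz)=M(\tz)\uplus N(\tz)$ and additive union is monotone with respect to $\subsetpeq$. Restricted inclusion follows identically once one observes that $(M\uplus N)_\alpha = M_\alpha \uplus N_\alpha$. For the forking atom I would argue via the mediant inequality: whenever $a,c\geq 0$ and $b,d>0$ with $a/b,c/d\leq p$, one has $(a+c)/(b+d)\leq p$; the edge case where $\pr_M(\ty=\tb\mid\tx=\ta)$ is undefined (i.e.\ $|M_{\tx=\ta}|=0$) is harmless because then the contribution of $M$ to the conditional probability in $M\uplus N$ drops out entirely, so the constraint is inherited from $N$ alone.

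Items (2)--(4) reduce quickly to counting. For (2), both $M(\tx)$ and $M(\ty)$ are multisets of cardinality $|M|$, so if one is a submultiset of the other the two must coincide, and symmetry is then obvious. For (3), the multiplicity of a tuple $(\ta,\tb)$ in $M(\tv\tx)$ is the number of $s\in M$ with $s(\tv)=\ta$ and $s(\tx)=\tb$, which is exactly the multiplicity of $\tb$ in $M_{\tv=\ta}(\tx)$; comparing with $M(\tv\ty)$ componentwise, the inclusion $M(\tv\tx)\subsetpeq M(\tv\ty)$ splits into the family $M_{\tv=\ta}(\tx)\subsetpeq M_{\tv=\ta}(\ty)$ indexed by $\ta\in M(\tv)$, which (applying (2) to each slice) is the claimed equivalence. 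For (4), by (2) the two hypotheses read $M(\tx)=M(\ty)$ and $M(\tx)\uplus N(\tx)=M(\ty)\uplus N(\ty)$, and additive cancellation of multisets yields $N(\tx)=N(\ty)$.

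The crux is (5). The easy direction is a direct application of (1): if $M=\biguplus_i M_i$ with each $\fA\models_{M_i}\tx\mincl\ty$, then union closure gives $\fA\models_M\tx\mincl\ty$. For the forward direction I would encode $M$ as a directed multigraph $G_M$ whose vertices are the tuples appearing as either an $\tx$-value or a $\ty$-value, and which, for each token (each of the $m(s)$ occurrences of every assignment $s\in\supp{M}$), carries one edge labelled by that token running from $s(\tx)$ to $s(\ty)$. By (2) the hypothesis $\fA\models_M\tx\mincl\ty$ gives $M(\tx)=M(\ty)$, so in-degree equals out-degree at every vertex and $G_M$ is Eulerian. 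The technical backbone is then the standard fact that any Eulerian directed multigraph admits a decomposition into edge-disjoint simple directed cycles: walk from an arbitrary vertex along unused edges until the first vertex-repeat, extract the simple cycle so uncovered, delete its edges (the remainder stays balanced, hence Eulerian), and iterate. Reading each simple cycle back as a sub-multiteam $M_i$ of $M$ (consisting of the tokens labelling its edges) delivers the desired decomposition: because the cycle is simple, no ordered pair $(\ta,\tb)$, and hence no single assignment, contributes more than one token, so every multiplicity in $M_i$ equals $1$; and because the cycle is balanced, $M_i(\tx)=M_i(\ty)$, giving $\fA\models_{M_i}\tx\mincl\ty$. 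The main obstacle I anticipate is ensuring that simple cycles, rather than merely closed walks, suffice at each step of the peeling, since it is precisely simpleness that enforces the unit multiplicities required for each $M_i$ to be a team in the sense of the proposition.
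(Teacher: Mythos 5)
Your proposal is correct and follows essentially the same route as the paper: items (1)--(4) by the same multiplicity-counting arguments, and for (5) the paper likewise peels off cycles -- its chain $s_0,s_1,\dotsc$ with $s_{i-1}(\ty)=s_i(\tx)$, stopped at the first repeated value and removed via item (4), is exactly the simple directed cycle your balanced-multigraph (Eulerian) decomposition extracts and deletes. Your graph-theoretic packaging, with one edge per token from $s(\tx)$ to $s(\ty)$ and balance guaranteed by (2), is just a more explicit rendering of that same argument, and your observation that simpleness of each cycle forces all multiplicities in $M_i$ to be $1$ correctly supplies the key point.
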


\begin{proof}
Claim (1) is easily checked.  For Claim (2) we notice that for any tuples $\tx$ and $\ty$ of the same length, we have
$|M(\tx)| = |M(\ty)|$ in every multiteam $M$.
If $\tx\mincl\ty$ holds in $M$ then, for every value $\ta$ we have that $|M_{\tx=\ta}| \leq | M_{\ty=\ta}$|.
In fact we then have that $|M_{\tx=\ta}| = | M_{\ty=\ta}|$, because otherwise there must exist some other value
$\tb$ with $|M_{\tx=\tb}| > | M_{\ty=\tb}|$, contradicting the assumption that $\tx\mincl\ty $ holds in $M$.
(This has also been noticed in \cite{DurandHanKonMeiVir18}).
The third claim follows immediately from the definition.
 Towards Claim (4),  assume that $\fA\models_M\tx\mincl\ty$ and $\fA\models_{M\uplus N}\tx\mincl\ty$.
  Let $\ta$ be any value occurring for $\tx$ in $N$.
  Since $\fA\models_M\tx\mincl\ty$ we have, by (2), that $|M_{\tx = \ta}| = |M_{\ty = \ta}|$.
  But since also $|(M\uplus N)_{\tx = \ta}| = |(M\uplus N)_{\ty = \ta}|$, and
  obviously $|(M\uplus N)_{\tx=\ta}| = |M_{\tx=\ta}| + |N_{\tx=\ta}|$, it follows
  that also  $|N_{\tx = \ta}| = |N_{\ty = \ta}|$.
  For (5), let $\fA\models_{(X, n)}\tx\mincl\ty$ and pick $s\in X$ such that
  $n(s) > 1$ and $s(\tx) \neq s(\ty)$ (if no such $s$ exists, then the claim is obviously true).
  Since $(X, n)$ satisfies $\tx\mincl\ty$ there must be an assignment $s'\in X$
  for which $s(\ty) = s'(\tx)$ holds.
  By repeating this argument we find a chain $s = s_0, s_1, \dotsc$ of assignments
  in $X$ satisfying $s_{i-1}(\ty) = s_i(\tx)$ for all $i>0$.
  Because $(X, n)$ is finite there must be a smallest $k>0$ such that $s_k(\ty)$
  has already occurred as $s_j(\tx)$ for some $j<k$.
  This means that $\hat{M} = \mset{s_j, s_{j+1}, \dotsc, s_k}$ satisfies $\tx\mincl\ty$
  and all assignments in this multiset are pairwise distinct.
  Applying (4) with $M = \hat{M}$ and $M \uplus N = (X, n)$ allows us to repeat
  this argument recursively which proves Claim (5).
\end{proof}

\section{Multiteam Semantics for First-Order Logic}

In the same way as team semantics generalises from a set $\Omega$ of atomic dependencies for teams
to logics $\fot[\Omega]$ we would also like to have full-fledged logics for reasoning about
dependency properties of multiteams. Syntactically, a collection $\Omega$ of atomic
dependencies of \emph{multiteams} is extended to logics $\fom[\Omega]$ in a completely
analogous way, by closing the atoms from $\Omega$ and the first-order literals (of some vocabulary $\tau$),
by $\lor, \land, \E$ and $\A$, thus insisting that formulae are in negation normal form, and that
negation is applied only to first-order atoms, not to atomic dependencies.

The multiteam semantics of first-order literals is as in the case of team semantics, i.e.~for every structure $\fA$, every multiteam $M$, and every 
first-order literal $\alpha(\bar x)$ we define that $\fA\models_M \alpha$ if, and only if, $\fA\models\alpha[s]$ for all $s\in M$.
However, to come up with appropriate definitions for the multiteam semantics of the logical operators
is not completely trivial and
we have to justify the choices that we make. To do this we shall formulate certain \emph{postulates}
that an appropriate multiteam semantics should satisfy, and we shall argue 
that our choices are essentially enforced by these postulates. Of course, the postulates themselves cannot be `proved';
however, we hope to convince the readers that they reflect natural properties that should be satisfied by
reasonable semantical rules, and that violating them leads to pathological or undesired effects.

\subsection{Postulates and definitions for multiteam semantics}

The first postulate is the \emph{locality principle}, familiar from team semantics,  
that the meaning of a formula should depend only on those variables that occur free in it.

The second postulate states that multiteam semantics should be, in a reasonable sense, compatible with team semantics. 
Since every atom for teams can also be used as an atom for multiteams, we have for every collection $\Omega$
of team semantical atoms, a natural embedding of $\fot[\Omega]$ in $\fom[\Omega]$.
We postulate that every $\phi\in\fot[\Omega]$ that holds for a team $X$ also holds
for \emph{some} multiteam with base set $X$ and vice versa.
The reasoning behind this postulate is that multiteam semantics provides a more
detailed view on the data than team semantics, hence, if a statement holds under
multiteam semantics it should \emph{a fortiori} hold under plain team semantics.
On the other hand, knowing that a purely team semantical property is satisfied by
$X$ it should be possible to provide adequate multiplicities to the assignments
in $X$ such that the same property holds under multiteam semantics.

The third postulate concerns disjunctions; even in team semantics, disjunction is in general not idempotent, and a well-known example
witnessing this is the constancy atom $\const(x):=\dep(\emptyset,x)$. Indeed, while $\const(x)$ is true in a team $X$ if, and only if,
the value of $x$ is constant in $X$, the formula $\const(x)\lor\const(x)$ holds in $X$ if, and only if, $x$ takes at most two values 
in $X$. A necessary and sufficient condition for $\phi \equiv \phi \lor \phi$ is that
$\phi$ is \emph{closed under (finite) unions}.  We postulate that this equivalence should also hold in multiteam semantics.

Our final postulate concerns a natural property of quantifiers: they should not
affect atomic dependencies that do not involve the quantified variable.

Here is the list of our postulates:
\begin{description}
  \item[P1: Locality.]\label{itm:locality}\quad
  $\fA\models_M \phi$ if, and only if, $\fA\models_{M\res{\free{\phi}}} \phi$.
  \item[P2: Compatibility with team semantics.]\quad For any team-semantical formula $\phi\in\fot[\Omega]$ we have that
  $\fA \models_X \phi$ if, and only if, $\fA \models_{(X,n)} \phi$ for some $n\colon X\to\N_{>0}$.
  \item[P3: Idempotence of disjunction.]\quad
  $\phi \equiv \phi \lor \phi$ if, and only if, $\phi$ is \emph{union closed}.
  \item[P4: Dummy quantification preserves atomic dependencies.]\quad
  For both $Q=\exists$ and $Q=\forall$ we have that $ Q x\,\alpha \equiv \alpha$ for
  all multiteam atoms $\alpha$ in which the variable $x$ does not occur.
\end{description}

Based on these postulates we define multiteam semantics by the following rules.

\begin{definition}
  \label{def: multiteam semantics} 
  Let $\Omega$ be a collection of multiteam dependency atoms, $\fA$ a structure, $M$ a multiteam over $A$ and 
  $\phi,\psi\in\fom[\Omega]$. 
  \begin{itemize}
    \item $\fA \models_M \alpha$ if $\fA\models\alpha[s]$ for all $s\in M$ if $\alpha$ is a first-order literal;
    \item $\fA \models_M \phi \land \psi$ if $\fA \models_M \phi$ and $\fA \models_M \psi$;
    \item $\fA \models_M \phi \lor \psi$ if there are $M_1\uplus M_2=M$ with $\fA \models_{M_1} \phi$ and $\fA\models_{M_2} \psi$;
    \item $\fA \models_M \forall x\psi$ if $\fA \models_{M[x\mapsto A]} \psi$;
    \item $\fA \models_M \exists x\psi$ if $\fA \models_{M[x\mapsto F]} \psi$ for some choice function $F\colon M\to A$.
  \end{itemize}
\end{definition}

By induction one can show that first-order formulae under multiteam
semantics are \emph{flat}, that means $\fA\models_M\phi$ if, and only if,
$\fA\models_{\mset{s}}\phi$ for all $s\in\supp{M}$.
Further $\fA\models_{\mset{s}}\phi$ holds precisely if $\fA\models\phi[s]$
under Tarski semantics, hence we generalise the notation $M_\phi$ to arbitrary
first-order formulae, not just to atoms, with the meaning $M_\phi=\mset{s\in M : \fA\models\phi[s]}$.

Classical negation is not part of our syntax.
But since negation is allowed in front of classical first-order atoms and the
dual of every operator is present, first-order formulae without dependencies can
always be transformed into negation normal form.
For first-order $\zeta$, we let $\zeta \ra \chi$
be a shorthand for $\operatorname{nnf}(\neg\zeta)\lor (\zeta\land\chi)$.
Note that $\zeta$ occurs on both sides of the disjunction to ensure that precisely
those assignments are evaluated against $\chi$ that satisfy $\zeta$.
One can check that this implication satisfies $\fA \models_M \zeta \ra \chi$
if, and only if, $\fA \models_{M_\zeta}\chi$.
It would mean something different if we would have defined it to be equivalent
to $\operatorname{nnf}(\neg\zeta) \lor \chi$, since then we would only have that
the implication is satisfied by $\fA$ and $M$ in case there is a multiteam $N$
which is sandwiched between $M_\zeta$ and $M$, i.e.~$M_\zeta\subsetpeq N\subsetpeq M$,
such that $\fA \models_N \chi$, but this is not what we want.
The following example illustrates an application of this implication.

\begin{example}
  For a finite directed graph $G = (V, E)$, we have that $G\models_{\mset{\emptyset}}  \A v\A w(Evw \ra v\mincl w)$ if every vertex in $G$ has the same number of incoming as outgoing edges.
  
  Indeed, after universally quantifying the variables $v$ and $w$ one ends up with
  the multiteam $M = \mset{\emptyset}[v\mapsto V][w\mapsto V]$ which holds for every
  pair of vertices $x,y\in V$ an assignments $s_{x,y}\colon v\mapsto x, w\mapsto y$.
  Thus, especially $M$ has size $|V|^2$.
  Furthermore, by the arguments above we have $G \models_M Evw \ra v\mincl w$ if,
  and only if, $G\models_{M_{Evw}} v\mincl w$.
  Now this just means that we have to count every edge in $G$ and have to check
  whether every vertex appears equally often as an ingoing as an outgoing
  vertex.
\end{example}

In team semantics, formulae expressing properties that hold only for a certain
portion of the given team have been investigated in \cite{Vaananen17} (e.g.\ $X
\models \dep_{1/2}(x, y)$ if half of $X$ satisfies the dependence $\dep(x, y)$)
and similarly also in context of multiteam semantics in \cite{DurandHanKonMeiVir18},
where an \emph{approximate operator} $\left<p\right>$ has been introduced with
the semantics $\fA\models_M\left<p\right>\phi$ if there is a multiteam
$N \subsetpeq M$ with $p|M| \leq |N|$ such that $\fA\models_N \phi$.

\begin{example}
  In our context, for any $\psi$, let $\psi_{1/2} \ceq \E z(\fork{=1/2}z \land
  ((\dep(z) \land \psi) \lor \dep(z)))$, which holds in $M$ under $\fA$ if
  $\fA\models_N\psi$ for some $N \subsetpeq M$ with $|N| = |M|\,/\,2$.
  Now, this formula can easily be adapted for other probabilities.
\end{example}

\subsection{The complexity of evaluation in multiteam semantics}

It is often useful to describe the evaluation process for determining
whether  $\fA\models_M\psi$ by an \emph{annotation}  of the syntax tree $\cT(\psi)$
of $\psi$, which associates with every subformula $\phi\in\cT(\psi)$ a multiteam $M(\phi)$.
We call such an annotation \emph{structurally valid} (for $\fA, M$, and $\psi$) if $M(\psi)=M$ and
the annotation respects the semantic rules of Definition \ref{def: multiteam semantics} as follows:
\begin{itemize}
\item $M(\phi\land\theta)=M(\phi) =M(\vartheta)$; $M(\phi\lor\theta)=M(\phi)\uplus M(\vartheta)$;
\item for $\phi=\A x\theta$, we have that $M(\theta)= M(\phi)[x\mapsto A]$, and
\item for $\phi=\E x\theta$, we have that $M(\theta)= M(\phi)[x\mapsto F]$ for some choice function $F\colon M\to A$.
\end{itemize}

\begin{prop}
  \label{prop:annotation}
  For all $\fA, M$, and $\psi$, we have that $\fA\models_M \psi$ if, and only if, there exists a structurally valid annotation of the syntax tree $\cT(\psi)$, such that, for every leaf $\phi$ of $\cT(\psi)$ (which is either a first-order literal, or an atomic dependency) we have that $\fA\models_{M(\phi)} \phi$.
\end{prop}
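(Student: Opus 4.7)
The plan is a routine structural induction on $\psi$, directly unwinding Definition \ref{def: multiteam semantics}. The key observation is that a structural annotation is nothing but a memo of the existential witnesses that the semantic clauses demand at disjunctions and existential quantifiers; so the equivalence is essentially a Curry-style re-bundling between ``the semantics is satisfied'' and ``witnesses exist and can be recorded uniformly''.

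For the forward direction, I would build the annotation top-down. At the root, set $M(\psi)=M$. If $\psi$ is a literal or atomic dependency, the leaf condition is immediate from $\fA\models_M\psi$. For $\psi=\phi\land\theta$, Definition \ref{def: multiteam semantics} gives $\fA\models_M\phi$ and $\fA\models_M\theta$; set $M(\phi)=M(\theta)=M$ and apply the induction hypothesis to each subtree. For $\psi=\phi\lor\theta$, the semantics provide a decomposition $M=M_1\uplus M_2$ with $\fA\models_{M_i}$ on the corresponding disjunct; take $M(\phi)=M_1$, $M(\theta)=M_2$ and recurse. For $\psi=\A x\theta$, put $M(\theta)=M[x\mapsto A]$; for $\psi=\E x\theta$, let $F$ be a choice function witnessing satisfaction and set $M(\theta)=M[x\mapsto F]$. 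In each case the child annotations delivered by the induction hypothesis paste together into a structurally valid annotation of $\cT(\psi)$.

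For the backward direction, I would induct symmetrically, reading the annotation from the root. Here a small but worth-noting point is that the restriction of a structurally valid annotation to any subtree is itself structurally valid for the subformula and its assigned multiteam, which is immediate since the validity conditions in the definition are local to each internal node. Using this, the induction hypothesis applies at each child, and the matching clause of Definition \ref{def: multiteam semantics} then closes the case: for disjunction the decomposition $M(\phi)\uplus M(\theta)=M(\phi\lor\theta)$ is exactly the validity condition at that node, and for $\E x\theta$ the choice function $F$ recorded by the annotation is precisely the witness required by the semantic clause.

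There is no genuine obstacle in this proof; the only care needed is to verify closure under taking subtrees when invoking the induction hypothesis, which is trivial from the definition. If anything deserves remark in the write-up, it is that the annotation framework is really just a convenient notational device for subsequent arguments (e.g. for complexity analyses or for games against evaluators), and Proposition \ref{prop:annotation} is the sanity check that this device is faithful to the inductive semantics.
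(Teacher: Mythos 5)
Your proof is correct and is exactly the routine structural induction that the paper has in mind (it states Proposition~\ref{prop:annotation} without proof, treating it as an immediate unwinding of Definition~\ref{def: multiteam semantics}): the annotation merely records the witnesses demanded at disjunctions and existential quantifiers, and your two directions, including the remark that restrictions of a structurally valid annotation to subtrees remain structurally valid, are precisely the intended argument.
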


For complexity analysis, we encode multiteams $M=(X,m)$ via binary representations of the numbers $m(s)$, for $s\in X$.
We observe that the data complexity of $\fom$ is in \NPtime, since for any fixed formula, annotations of polynomial size can be guessed and efficiently verified.
Further, it is well-known that even $\fot[\dep]$ can express  \NPtime-complete problems \cite{Vaananen07}.

\begin{cor}
  \label{cor:NP}
  Let $\Omega$ be any collection of multiteam atoms that can be evaluated in polynomial time on any given $\fA$ and $M$.
 Then, for every $\psi\in\fom[\Omega]$, the problem whether $\fA\models_M \psi$ for given $\fA$ and $M$, can be solved in \NPtime. Further, whenever $\Omega$ contains $\dep$, $\mid$, or $\mindep$, then
there are formulae in $\fom[\Omega]$, for which the evaluation problem is \NPtime-complete.
\end{cor}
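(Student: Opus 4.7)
The plan is to handle the two claims separately. For the upper bound, I would invoke Proposition~\ref{prop:annotation} and argue that a witnessing structurally valid annotation has polynomial size, so that it can be guessed and checked in nondeterministic polynomial time. For the lower bound, I would reduce from the known NP-hardness of the corresponding team-semantic logics by exhibiting sentences in $\fom[\Omega]$ that capture an NP-complete problem as soon as one of $\dep$, $\mid$, or $\mindep$ is available.

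For the upper bound, fix $\psi \in \fom[\Omega]$ (hence $\psi$ has a fixed number of variables and a fixed syntax tree $\cT(\psi)$ of constant size). I would first observe that every multiteam appearing in a structurally valid annotation has support contained in $A^{\free(\psi)}$, which has size polynomial in $|A|$. Second, the total size $|M(\phi)|$ grows only through universal quantifications, by a factor of $|A|$ each time, so after traversing the finitely many quantifiers of $\psi$ it remains polynomially bounded; hence every multiplicity fits into polynomially many bits in binary, and the annotation as a whole has polynomial size. The nondeterministic algorithm then guesses, for each node, the associated multiteam (and for existential nodes, the induced way of splitting the multiplicities among the values of $x$, which can be recorded by giving, for each $s \in \supp{M(\phi)}$ and each $a\in A$, how many occurrences of $s$ are mapped to $a$). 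Structural validity and the leaf conditions (by the hypothesis on $\Omega$) are checkable in polynomial time.

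For the lower bound I would proceed as follows. It is known that $\fot[\dep]$ can express NP-complete problems such as 3-colourability~\cite{Vaananen07}; the same sentence, viewed in $\fom[\dep]$, expresses the same property on any multiteam whose support is the encoding of the input graph, because the $\dep$ atom depends only on $\supp{M}$ (by Proposition~\ref{prop: downward atom = ts atom}) and the universal quantifier acts uniformly on multiplicities. Thus taking the canonical multiteam $\mset{\emptyset}$ as input yields NP-hardness for $\fom[\dep]$. For $\mid$, an analogous argument works using the NP-completeness of $\fot[\excl]$ (which has the same expressive power as dependence logic in team semantics, see~\cite{Galliani12}), noting again that $\mid$ is downwards closed and thus behaves on $M$ as the team atom does on $\supp{M}$. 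For $\mindep$, I would either directly encode 3-colourability via independence (mimicking the team-semantics construction of~\cite{GraedelVaa13}) or use the fact that $\mindep$ can define the team-semantic independence atom on multiteams of uniform multiplicity and reduce from independence logic.

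The main obstacle is the upper bound: one has to be careful that the multiplicities occurring in the annotation do not blow up super-polynomially and that the existential choice functions can be represented succinctly despite being defined on \emph{occurrences} of assignments rather than on the support. The observation above -- that only the counts per value $a$ matter, and these are bounded by the total multiteam size, which is polynomial -- resolves this, but should be made precise in the write-up.
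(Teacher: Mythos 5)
Your proposal follows essentially the same route as the paper: membership in \NPtime{} by guessing a polynomially sized, structurally valid annotation of the fixed syntax tree (Proposition~\ref{prop:annotation}) and verifying it, and hardness by transferring team-semantical formulae that already define \NPtime-complete problems. The only difference is bookkeeping --- the paper notes that individual multiplicities never increase from a node to its successors, while you bound the total multiteam size by a factor $|A|$ per universal quantifier --- and both yield the same polynomial bound on the (binary-encoded) annotation.
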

\begin{proof}
  By Proposition~\ref{prop:annotation}, we can determine whether $\fA\models_M \psi$ by guessing an annotation
  of the syntax tree $\cT(\psi)$ and verifying that it is structurally valid and satisfies the atoms. Since $\psi$ is fixed,
  so is the syntax tree. Going from a node of $\cT(\psi)$ to its successors can only decrease the multiplicities in the multiteams.
  In particular, the multiplicity of an assignment $s[x\mapsto a]$ in $M(\phi)$ is not larger than the multiplicity
  of $s$ in $M(\E x\phi)$ or $M(\A x\phi)$.
  Since the number of variables in subformulae of $\psi$ is bounded by a constant,
  an annotation of $\cT(\psi)$ can be represented with polynomially bounded length, with respect to the
  length of the inputs $\fA$ and $M$, and clearly the verification that 
  an annotation is structurally valid and satisfies the atoms can be done in polynomial time. This proves that the evaluation problem
  for any fixed formula $\psi\in\fom[\Omega]$ is in \NPtime.
  Finally since there even are team semantical formulae describing \NPtime-complete problems, this holds \emph{a fortiori}
  for  $\fom[\Omega]$, when $\Omega$ contains, for instance dependency atoms $\dep(x,y)$.
\end{proof}

\subsection{Justification of the semantical rules}

For the semantics of first-order atoms, conjunction and universal quantification
there are no reasonable alternatives to the semantics defined above,
in particular when keeping in mind our second postulate about compatibility with team semantics. 
However, for the existential operators, i.e.~disjunction and existential quantification,
there are other options that one has to consider when defining the semantics.
We  elaborate on these and justify the choices we made in the following paragraphs.

\subsection*{Disjunction}
As in team semantics, the disjunction must (again due to postulate two) in some
way allow to split the multiteam into two parts each of which satisfies one of
the disjuncts.
There are two possibilities: the first one
just requires that the two parts cover the given multiteam, 
but permits to put an individual assignment into both parts.
\begin{equation}
  \label{eq:cover disjunction}
  \fA \models_M \phi \lor \psi\quad\Iff\quad
  \fA \models_S \phi \text{~and~} \fA \models_T \psi
  \quad\text{for some~} S,T \subsetpeq M \subsetpeq S\uplus T
  \tag{$\lor$-cover}
\end{equation}
The other possibility is to require a disjoint split of the
multiteam into two parts.
\begin{equation}
  \label{eq:split disjunction}
  \fA \models_M \phi \lor \psi \quad\Iff\quad
  \fA \models_S \phi \text{~and~} \fA \models_T \psi
  \quad\text{for some~} S,T \subsetpeq M = S\uplus T
  \tag{$\lor$-split}
\end{equation}

The choice between the two possibilities
is irrelevant for downwards closed formulae since for these the two variants
\eqref{eq:cover disjunction} and \eqref{eq:split disjunction}
lead to equivalent meanings for the disjunction. However, for formulae that are not downwards closed,
such as inclusion and independence statements, it does make a difference.
Recall that in team semantics there is a similar distinction between
the strict and lax semantics for disjunction \cite{Galliani12}. There it turns out that
the lax semantics, allowing splits that are not disjoint, is the right
choice, because otherwise the locality postulate is violated for certain formulae.
We argue that for multiteam semantics the situation
is different, and a disjoint split is the good choice.
On an intuitive level, if data is modelled by teams then
information about multiplicities of data is omitted, and it may be necessary to
assume that the data must be present in both parts of a split.
But in multiteams, information about multiplicities is available,
and if we recombine two parts of a multiteam then we have the sum of their multiplicities,
so it is much more natural to allow each assignment to be contained in precisely 
one of the parts of a splits.
On a more formal level we justify the choice for ($\lor$-split)
by the postulate of equivalence between union closure and idempotent disjunction.

\begin{lemma}
  \label{lem: mts union closed iff idempotent disjunction}
  Let $\phi$ be a formula with multiteam semantics. Modelling disjunctions via  ($\lor$-split) we have that
  $\phi \equiv \phi \lor \phi$ if, and only if, $\phi$ is union closed.
\end{lemma}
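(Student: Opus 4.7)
The plan is to prove both directions directly from the ($\lor$-split) semantic rule, with the whole argument pivoting on one auxiliary fact: the \emph{empty multiteam property}. The forward direction, namely ``$\phi \equiv \phi \lor \phi$ implies union closure,'' should be essentially immediate: given $\fA \models_M \phi$ and $\fA \models_N \phi$, apply ($\lor$-split) to the decomposition $M \uplus N = M \uplus N$ to obtain $\fA \models_{M \uplus N} \phi \lor \phi$, and the assumed equivalence then yields $\fA \models_{M \uplus N} \phi$.

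For the converse I would first dispatch the easy half: if $\fA \models_M \phi \lor \phi$, then ($\lor$-split) furnishes $M = S \uplus T$ with $\fA \models_S \phi$ and $\fA \models_T \phi$, and union closure gives $\fA \models_M \phi$. The delicate half, ``$\fA \models_M \phi$ implies $\fA \models_M \phi \lor \phi$,'' requires a decomposition of $M$ into two submultiteams both satisfying $\phi$. In general there is no leverage to engineer a non-trivial split, and the only split that is guaranteed to work uniformly is the trivial one $M = M \uplus \emptyset$.

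The main obstacle is therefore to establish, as a preliminary, the empty multiteam property: $\fA \models_\emptyset \phi$ for every formula $\phi$. I expect a routine induction on $\phi$ to suffice --- literals are vacuously true, conjunctions and quantifiers reduce via $\emptyset[x \mapsto A] = \emptyset$ and the trivial choice function for $\exists$, disjunctions use the split $\emptyset = \emptyset \uplus \emptyset$, and each of the multiteam atoms considered so far ($\mincl$, $\mindep$, forking, dependence, exclusion, and the rest) is vacuously satisfied by the empty multiteam, as is plain from its defining (in)equality. It is worth flagging that this whole argument collapses under ($\lor$-cover): there one could simply take $S = T = M$ and derive $\phi \equiv \phi \lor \phi$ for every $\phi$ without invoking union closure, so the lemma would become false. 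This both underscores why the empty multiteam property is the real hinge of the proof and confirms that ($\lor$-split) is the reading forced by postulate P3.
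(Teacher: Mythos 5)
Your proposal is correct and follows essentially the same route as the paper: both directions are handled by the same direct applications of ($\lor$-split), with the inclusion $\phi\models\phi\lor\phi$ resting on the empty multiteam property via the trivial split $M = M\uplus\emptyset$. The only difference is that you sketch the induction establishing the empty multiteam property, whereas the paper simply invokes it as one of the known closure properties (Proposition~\ref{prop: mts closure}).
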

\begin{proof}
  Every formula $\phi$ implies $\phi\lor\phi$ because of the empty multiteam property.
  Assume that $\phi$ is union closed.
  Let $\fA$ be a structure and $M$ a multiteam such that $\fA \models_M \phi \lor \phi$.
  Thus there is a split  $M = S \uplus T$ with $\fA \models_S \phi$ and $\fA \models_T \phi$.
  By union closure we obtain $\fA \models_{S \uplus T} \phi$.
  For the other direction assume that $\phi \equiv \phi \lor \phi$ and that $\fA \models_S \phi$ and $\fA \models_T \phi$ hold.
  Therefore $\fA \models_{S \uplus T} \phi \lor \phi$ and hence $\fA \models_{S \uplus T} \phi$.
\end{proof}

The following example illustrates that this postulate is violated if we interpret
disjunction by ($\lor$-cover).

\begin{example}
  \label{ex:cover disjunction not idempotent}
 Let $\phi = x \mincl y$ which is union-closed. Consider the multiteam 
 $M=\mset{xy\mapsto 01, xy\mapsto 10, xy\mapsto 10}$  with $\{0,1\} \nmodels_M \phi$.
However, if we would define disjunction by \eqref{eq:cover disjunction}
 we could take $S=\mset{xy\mapsto 01, xy\mapsto 10}$ with
 $S\subsetpeq M \subsetpeq 2\cdot S$
 and  $\{0,1\} \models_{S} \phi$, which would imply that
  $\{0,1\} \models_M \phi \lor \phi$.
\end{example}

\subsection*{Existential Quantification}
To define the semantics of a formula $\exists x \phi$ on a multiteam $M$
(and a structure $\fA$) we have to
provide an adequate definition  of how $M$ should be modified by values for
the quantified variable $x$ to obtain a multiteam that satisfies $\phi$. 
Thus the definition should have the form that
$\fA \models_M \exists x\phi$  if, and only if,  $\fA \models_{M[x \mapsto F]} \phi$
for some function $F\colon D \to C$.

The domain $D$ of $F$ can either be the multiteam $M$ 
(so that $F$ can assign different values for $x$ to 
different occurrences of an assignment in $M$) or
its underlying team $\supp{M}$ (so that all occurrences of
an assignment get the same value for $x$).
Also for the codomain $C$ of $F$ there are several possible choices:
It can either be the universe $A$ of $\fA$ (so that
for each assignment we assign precisely one value to $x$), which is our preferred choice,
or the set $\ppot{A}$ (which, as in team semantics, means that
we update each assignment by a non-empty set of values for $x$).
Further, we could even assume that $F$ assigns to each $s\in M$ a multiset of $k$ values for $x$,
for some fixed positive $k$ (constant for all assignments).
This last possibility is of a probabilistic flavour, similar to the setting of probabilistic team semantics
\cite{DurandHanKonMeiVir18} or \cite{HyttinenPaoVaa15}, and removes, for instance, the information on
how many assignments had originally been present. This is not the right choice for the setting
multiteam semantics intended to argue with multiplicities, and would also not be in line with 
our definition for the disjunction, which is not probabilistic in nature.
A further problem with such a choice concerns the algorithmics of evaluation: 
it is not clear at all that a similar result to Corollary
\ref{cor:NP} would hold if the existential quantifier could increase the size of
multiteams arbitrarily.

Altogether this gives four remaining possible definitions but it turns out that
only one of these is compatible with our postulates.
Taking the codomain $C=\ppot{A}$, which is the right choice
in team semantics, leads in multiteam semantics to a violation of
the postulate that a dummy quantification of a variable that is not present in
an atomic dependence should not change its meaning.
To put it differently, this principle means that for a multiteam $M$ with domain $\cV$
and a variable $x\not\in\cV$ the multiteam
$M[x\mapsto F]\res{\cV}$ satisfies the same atomic dependencies as $M$ does.
However, if we permit the sets $F(s)$ to be arbitrary non-empty subsets of $A$,
then these may have different sizes, so the multiteam
$M[x\mapsto F]\res{\cV}$ may have completely different relative multiplicities
of its assignments, and therefore different atomic dependence properties.
As an example, consider $M = \mset{s_0\colon xy \mapsto 01, s_1\colon xy\mapsto 10}$ and $F\colon s_0 \mapsto \{0\}, s_1\mapsto\{0,1\}$.
Since $M = \mset{s_0, s_1}$ and $\supp{M} = \{s_0, s_1\}$ it is irrelevant for
this particular example how we define the domain of our choice functions.
We have $\{0,1\} \models_M x\mincl y$ but $\{0,1\} \nmodels_{M[z\mapsto F]} x\mincl y$.
These considerations show that the function $F$ must assign to each occurrence of an assignment $s$ precisely one element of $A$.

It remains to show that  a choice function $F$ should have the
liberty to assign different values for $x$ to
different occurrences of an assignment in $M$.
For this let $\fA = \{a,b\}$ and $M = 2\cdot\mset{x\mapsto a}$ and consider the
formula $\phi \ceq \E y\A z(xz\mincl xy)$. It states, for $\fA$ and $M$,  that we can assign to $y$ values such that
for all values $a$ of $x$ (which in our case is just one) the multiset $M[y\mapsto F]\res{x=a}(y)$ is a multiple of $A$ (cf.~Proposition \ref{prop: mincl basic properties}).
For a choice function that can assign different values to different occurrences of an assignment, we certainly have that $\fA\models_M\phi$.
But if we forbid this liberty then this would not be the case.
However, consider the multiteam $M' = \mset{s_1\colon xu\mapsto aa,s_2\colon xu\mapsto ab}$. 
In this case, the restricted choice function $G\colon s_1 \mapsto a, s_2\mapsto b$ satisfies $\fA\models_{M'[y\mapsto G]}\A z(xz\mincl xy)$ although $M = M'\res{x}$.
Thus, admitting only restricted choice functions, that must assign the same value to each occurrence of
an assignments would lead to a violation of the locality principle.

\subsection*{Universal versus existential quantification}
Finally, we discuss a possible postulate that we had considered at some point, but then decided to discard,
and that is not satisfied by the multiteam semantics that we propose here. In most common
logics it is obviously true that $\A x\phi \models \E x\phi$.
However, we do not think that it is justified to demand this as a postulate in our context, because of the
different roles that these quantifiers have in the setting of teams and multiteams,
compared to classical Tarskian settings. Classically, a universal quantifier can be thought of
as giving an opponent the power to \emph{choose} an element that makes `proving'
the statement at hand as difficult as possible, whereas an existential quantifier gives
this choosing power to the proponent, so that she can select an element that makes the
`proof' as simple as possible. However, in a setting of teams and multiteams, a universal quantifier does
not represent a choice, but a \emph{uniquely defined extension} of the current semantic object that
involves all elements of the given structure. The existential quantifier instead does indeed represent
a choice. Therefore we see in this context no specific reason to impose that a universal quantification should
generally imply  an existential one. Of course there can still be settings where this is the case, such as in
lax team semantic, because the unique universal extension of a team happens to be also a Skolem extension,
or in probabilistic team semantics.
However, the implication $\A x\phi \models \E x\phi$ does not hold, for instance,  in strict team semantics.
In our setting, the universal quantifier increases the size of a multiteam while the existential one (as defined here) does not.
One example of a formula $\phi$ for which $\A x \phi \models \E x\phi$ fails,  is $\phi(x):=\fork{\leq1/2}x$.
Indeed, $\fA\models_{\mset{\emptyset}}\A x\;\fork{\leq1/2}x$ holds for all structures with at least two elements,
but $\fA\nmodels_{\mset{\emptyset}} \E x\;\fork{\leq1/2}x$.

\section{Comparison of logics with different multiteam dependencies}
\label{sec: comparison logics with mts}

As in team semantics, the logical operators \emph{preserve} certain fundamental properties of formulae,
in the sense that whenever both $\psi$ and $\phi$ have that property, then so do
$\psi\lor\phi$, $\psi\land\phi$, $\E x\psi$ and $\A x\psi$. Examples of such properties are:

\begin{description}
  \item[Downwards closure:] $\fA \models_M \psi$ implies $\fA \models_N \psi$ for all $N\subsetpeq M$.
  \item[Team semantical downwards closure:] $\fA \models_{(X, m)} \psi$ implies $\fA \models_{(Y, m\res{Y})} \psi$ for all $Y \subseteq X$.
  \item[Union closure:] $\fA \models_R \psi$ and $\fA \models_S \psi$ implies $\fA \models_{R\uplus S} \psi$.
  \item[Closure under scalar multiplication:] $\fA \models_M \psi$ implies $\fA \models_{kM} \psi$ for all $k\in\N$.
  \item[Validity in the empty multiteam:] $\fA \models_{(\emptyset, \emptyset)} \psi$.
\end{description}

From the closure properties of atomic multititeam dependencies, we thus obtain the following closure properties for logics with 
multiteam semantics.

\begin{prop}
  \label{prop: mts closure}
  \begin{itemize}
    \item All formulae in $\fom[\dep,{}\excl{}]$ are downwards closed.
    \item All formulae in $\fom[\fork{\geq q}, \fork{> q}]$ are team semantically downwards closed.
    \item All formulae in $\fom[\mincl,\fork{\leq q}, \fork{< q}]$ are union closed.
    \item All formulae in $\fom[\emptyset]$ are flat, i.e.\ $\fA\models_M \phi$ if, and only if, $\fA\models_{\mset{s}} \phi$ for all $s\in M$.
    \item All formulae in $\fom[\dep,{}\excl{},\mincl, \fork{*},\mindep]$ have the empty multiteam property, and are closed under scalar multiplication.
  \end{itemize}
\end{prop}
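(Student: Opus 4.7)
The plan is to prove each bullet by structural induction on $\psi$. For every closure property $P$ listed, the induction reduces to two tasks: (i) verifying that the operators $\land, \lor, \exists, \forall$ preserve $P$, which is independent of $\Omega$, and (ii) checking that $P$ holds for every first-order literal and for every atom in the corresponding $\Omega$.

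Task (i) is routine. The case of $\land$ is immediate, since both conjuncts are evaluated on the same multiteam. The case of $\forall$ uses that the deterministic extension $M \mapsto M[x \mapsto A]$ commutes with each of $\subsetpeq$, the team-semantic restriction $(Y, m\res{Y})$, the additive union $\uplus$, and scalar multiplication $k \cdot M$, and that it preserves flatness as well as the empty multiteam. For $\lor$, given a witnessing split $M = M_1 \uplus M_2$ I carry the split along: for downwards closure I set $N_1(s) \ceq \min\{N(s), M_1(s)\}$ and $N_2(s) \ceq N(s) - N_1(s)$ for $N \subsetpeq M$; for team-semantic downwards closure, the restrictions of $M_1, M_2$ to $Y \subseteq \supp{M}$ form a valid split of $(Y, m\res{Y})$; for union closure, splits of $M$ and $M'$ combine additively. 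The $\exists$-case is handled analogously by restricting, combining, or scaling the witnessing choice function $F\colon M \to A$.

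For task (ii), first-order literals satisfy all properties trivially, since their semantics is a pointwise statement over $\supp{M}$. The downwards closure of the dependence and exclusion atoms is the classical team-semantic argument, essentially captured by Proposition~\ref{prop: downward atom = ts atom}. The union closure of the inclusion atom is Proposition~\ref{prop: mincl basic properties}, and for $\fork{\leq q}$ and $\fork{<q}$ the same weighted-average principle applies: a convex combination of ratios all bounded by $q$ is itself bounded by $q$. Flatness is the standard induction relating multiteam semantics to Tarskian truth on singletons. The empty-multiteam and scalar-multiplication properties follow by inspection: each listed atom is either a universally quantified statement vacuous on $\emptyset$ or an equality of multisets that holds trivially there, and every ratio defining a forking or independence atom is manifestly invariant under multiplying all multiplicities by the same positive factor.

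The main obstacle is the second bullet, the team-semantic downwards closure of $\fork{\geq q}$ and $\fork{>q}$. Removing a support element of $M$ while preserving the multiplicities of the remaining ones can alter both the numerator and the denominator of a conditional probability $\pr(\ty = \tb \mid \tx = \ta)$, so one must argue that every positive conditional in $(Y, m\res{Y})$ still meets the lower bound $\geq q$ (respectively $> q$). The key point is that any positive conditional in the restriction was already positive in $M$ and hence inherits the original bound, and the particular $\geq/>$ form of the inequality then has to be propagated through the ratio change induced by the deletion. The analogous computation in the $\leq/<$ direction does not go through, which is precisely why those atoms appear in the union-closure bullet instead.
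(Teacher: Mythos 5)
Your skeleton — show that $\land,\lor,\exists,\forall$ preserve each closure property and that the atoms of the respective $\Omega$ possess it — is exactly the argument the paper leaves implicit (it states the proposition without proof, right after the remark that the connectives preserve such properties), and your execution of it is sound for the first, third, fourth and fifth bullets: the $\min$-split for downwards closure of $\lor$, the restriction of splits and of choice functions to sub- and restricted multiteams, the additive combination of witnesses for union closure, and the mediant/convex-combination bound for $\fork{\leq q}$ and $\fork{<q}$ are all correct.

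The genuine gap is the atomic case of the second bullet, which you yourself call the main obstacle but never actually carry out, and the step you defer (``propagate the $\geq$/$>$ bound through the ratio change induced by the deletion'') cannot be carried out in the generality in which you set up the induction. Deleting a support assignment subtracts its multiplicity from both the numerator and the denominator of the conditional probability of its own $\tx\ty$-class, and subtracting the same positive quantity from numerator and denominator of a ratio below $1$ strictly decreases it. Concretely, over $A=\{0,1\}$ let $M=\mset{xyz\mapsto000,\ xyz\mapsto001,\ xyz\mapsto010,\ xyz\mapsto010}$. Then $\pr_M(y{=}0\mid x{=}0)=\pr_M(y{=}1\mid x{=}0)=1/2$, so $\fA\models_M x\fork{\geq1/2}y$, but removing the support assignment $xyz\mapsto001$ with its full multiplicity leaves conditionals $1/3$ and $2/3$, so the team-semantic restriction violates $x\fork{\geq1/2}y$ (the same example refutes $\fork{>1/3}$). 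The argument does go through precisely when the deleted assignments carry away their entire $\tx\ty$-classes — for instance when distinct assignments of the multiteam already differ on $\tx\ty$, in particular when the domain is essentially $\tx\ty$: then every surviving positive conditional keeps its numerator while its denominator can only shrink, so it can only increase, and $\geq q$, $>q$ are preserved. So to complete this bullet you must either reduce to that situation (e.g.\ via locality, evaluating the atom on its projection and making sure only whole projection-classes are removed, which requires an extra argument once other free variables or quantifier extensions are around) or record the needed proviso explicitly; as written, your induction hypothesis quantifies over arbitrary multiteams, where the atomic claim — and hence the bullet read literally — fails, a point your proof should have surfaced rather than absorbed into a hand-wave.
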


Recall that for logics with team semantics, the relationship between logics with different atomic dependencies,
in terms of expressive power, is well-understood (see \cite{Galliani12,GraedelVaa13}).
\begin{itemize}
\item $\fot \precneqq \fot[\dep] \equiv \fot[{}\excl{}]\precneqq\fot[\incl,{}\excl{}]\equiv \fot[\mindep]\equiv\fot[\mindep_c]$.
\item $\fot \precneqq \fot[\anon] \equiv \fot[\incl] \precneqq\fot[\incl,{}\excl{}]$. Further $\fot[\incl]$ and $\fot[{}\excl{}]$ are incomparable.
\end{itemize}

We would like to understand the relative expressive power of logics with different multiset dependencies in a similar way.
Some of the results from team semantics carry over (with minor modification of the proofs) to multiteam semantics.

\begin{prop}
  \label{prop: mts relationship}
  \begin{itemize}
    \item $\fom[\dep] \equiv \fom[{}\excl{}] \precneqq \fom[\fork{\geq1/2}] \precneqq \fom[\fork{=1/2}] \precneqq \fom[\mindep]$.
    \item $\fom[\mincl_\alpha] \preceq \fom[\mincl] \equiv \fom[\fork{\leq1/2}] \precneqq \fom[\fork{=1/2}]$ for every $\alpha\in\fo$.
    \item There  are $\alpha,\beta\in\fo$ such that $\fom[\mincl_\alpha] \equiv \fom[\mincl]$ and $\fom[\mincl_\beta] \equiv \fo^M[\emptyset]$.
\end{itemize}
\end{prop}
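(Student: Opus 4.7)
The plan is to prove the proposition in three phases, matching the three bullets. Each containment $\preceq$ I establish by an explicit translation of atomic dependencies, combining the richer atoms with auxiliary quantifiers and Boolean connectives; each strict inequality $\precneqq$ I establish by exhibiting a formula in the larger logic that violates one of the closure properties of the smaller logic catalogued in Proposition~\ref{prop: mts closure}.

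For the first chain, the equivalence $\fom[\dep] \equiv \fom[{}\excl{}]$ follows by transferring Galliani's team-semantic back-and-forth from~\cite{Galliani12}, which applies because both atoms are downward closed and, by Proposition~\ref{prop: downward atom = ts atom}, depend only on the support of the multiteam. For the ascending inclusions I translate each atom into the next: $\dep(\tx, y)$ into $\fom[\fork{\geq 1/2}]$ by combining $\tx \fork{\geq 1/2} y$ with a construction that rules out the ``two balanced values'' loophole; $\tx \fork{\geq 1/2} y$ into $\fom[\fork{= 1/2}]$ by existentially introducing an auxiliary variable whose Skolem choice splits each single-value fibre into two equal halves; and $\tx \fork{=1/2} y$ into $\fom[\mindep]$ by encoding the $50/50$ conditional via a two-valued witness variable and then asserting conditional statistical independence. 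The strict separations follow from closure: the atom $\tx \fork{\geq 1/2} y$ is not downward closed, witnessed by $M = \mset{(a,b),(a,b),(a,c),(a,c)} \models \tx \fork{\geq 1/2} y$ and its submultiteam $N = \mset{(a,b),(a,b),(a,c)} \subsetpeq M$ with $N \nmodels \tx \fork{\geq 1/2} y$, separating it from $\fom[\dep]$; analogously $\tx \fork{=1/2} y$ is not team-semantically downward closed (restricting the team underlying $\mset{(a,b),(a,c)}$ to $\{(a,b)\}$ leaves a unit-probability fibre), separating it from $\fom[\fork{\geq 1/2}]$; and to separate $\fom[\fork{=1/2}]$ from $\fom[\mindep]$ I would exhibit a pair of multiteams that agree on every $\fork{=1/2}$-formula (because their relevant $1/2$-ratios match) but are distinguished by a $\mindep$-atom asserting independence at a different rational ratio.

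For the second bullet, $\fom[\mincl_\alpha] \preceq \fom[\mincl]$ is immediate via the first-order implication shorthand $\tx \mincl_\alpha \ty \equiv \alpha \imp \tx \mincl \ty$ from Definition~\ref{def: multiteam semantics}, which restricts $M$ to $M_\alpha$ before evaluating the inclusion. The equivalence $\fom[\mincl] \equiv \fom[\fork{\leq 1/2}]$ is by mutual atom translations; both atoms are union closed and invariant under scalar multiplication, and auxiliary quantifiers bridge multiplicity domination with conditional probability bounds. The strict separation $\fom[\fork{\leq 1/2}] \precneqq \fom[\fork{=1/2}]$ uses union closure: $\fom[\fork{\leq 1/2}]$ is union closed by Proposition~\ref{prop: mts closure}, but $\tx \fork{=1/2} y$ is not, since the union $\mset{(a,b),(a,c)} \uplus \mset{(a,d),(a,e)}$ of two $\fork{=1/2}$-multiteams yields a fibre with four $y$-values each at probability $1/4$, falsifying $\fork{=1/2}$. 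For the third bullet I take $\alpha \ceq (x = x)$, yielding $M_\alpha = M$ so that $\mincl_\alpha$ coincides with $\mincl$ as atoms; and $\beta \ceq (x \neq x)$, yielding $M_\beta = \emptyset$ so that every instance of $\tx \mincl_\beta \ty$ is trivially true, making $\mincl_\beta$ a tautological atom and collapsing $\fom[\mincl_\beta]$ to $\fom[\emptyset]$.

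The main obstacle is the translation $\dep \ra \fom[\fork{\geq 1/2}]$: the atom $\fork{\geq 1/2}$ strictly tolerates the exact ``two equally weighted $y$-values'' configuration that $\dep$ rules out, so a one-atom substitution cannot suffice, and eliminating the loophole requires the Skolem-choice rule's ability to act on individual occurrences (rather than assignments) together with a disjunctive split that isolates and contradicts the offending fibre.
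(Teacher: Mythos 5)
The central step that fails is your claimed ``immediate'' reduction of $\fom[\mincl_\alpha]$ to $\fom[\mincl]$. The shorthand $\alpha\imp\tx\mincl\ty$ evaluates $\tx\mincl\ty$ on $M_\alpha$, hence expresses $M_\alpha(\tx)\subsetpeq M_\alpha(\ty)$, whereas the restricted inclusion atom is defined by $M_\alpha(\tx)\subsetpeq M(\ty)$: the right-hand side ranges over the \emph{whole} multiteam, including assignments that falsify $\alpha$. For a counterexample take $\alpha=Px$ with $P$ interpreted as $\{a\}$ and $M=\mset{xy\mapsto ab,\ xy\mapsto ba}$: then $M_\alpha(x)=\mset{a}\subsetpeq\mset{a,b}=M(y)$, so $\fA\models_M x\mincl_\alpha y$, but $M_\alpha(y)=\mset{b}$, so your formula fails; your translation is strictly stronger than the atom. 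This is exactly the point where the paper's proof has real content: it uses $\tx\mincl_\alpha\ty\equiv\E\tx'(\tx'\mincl\tx \land (\alpha(\tx')\imp\tx'=\ty))$ and verifies the equivalence by extending an injection $f\colon M_\alpha\to M$ with $(fs)(\ty)=s(\tx)$ to a bijection of $M$ --- an argument your shortcut skips entirely. (Your choices $\alpha=(x=x)$ and $\beta=(x\neq x)$ for the third bullet do match the paper's $\true$ and $\false$.)

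For the forking chain the paper simply delegates the translations and separations to \cite{Wilke20}, whereas you propose to construct them directly but do not: you yourself flag the translation of $\dep$ into $\fom[\fork{\geq1/2}]$ as unresolved; your recipe for expressing $\fork{\geq1/2}$ in $\fom[\fork{=1/2}]$ (halving each single-value fibre by a Skolem choice) breaks on fibres of odd multiplicity, which no choice function can split into two equal halves; and your encoding of $\fork{=1/2}$ in $\fom[\mindep]$ invokes \emph{conditional} independence, an atom not available in $\fom[\mindep]$ --- the paper explicitly leaves its reducibility to $\mindep$ open. The separation $\fom[\fork{=1/2}]\precneqq\fom[\mindep]$ also cannot be obtained by exhibiting two multiteams that agree on all $\fork{=1/2}$ atoms: the logic contains quantifiers and splits, so agreement on atoms does not transfer to all formulae; one needs a compositional argument in the spirit of Theorem~\ref{thm:down + union != mindep}, or the external reference as in the paper. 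On the positive side, your three closure-based separations (failure of downwards closure of $\fork{\geq1/2}$ against $\fom[\dep]$, of team-semantical downwards closure of $\fork{=1/2}$ against $\fom[\fork{\geq1/2}]$, and of union closure of $\fork{=1/2}$ against $\fom[\fork{\leq1/2}]$) are correct applications of Proposition~\ref{prop: mts closure}, and your treatment of $\fom[\dep]\equiv\fom[{}\excl{}]$ coincides with the paper's.
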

\begin{proof}
  The team semantical translations between exclusion and dependence logic from \cite{Galliani12}
  work in multiteam semantics as well.
  Translations proving the claims made for forking atoms can be found in \cite{Wilke20}.
  For the different variants of inclusion, observe that $\tx\mincl_\true\ty \equiv \tx\mincl\ty$ and $\tx\mincl_\false\ty \equiv \true$,
  and that $\tx\mincl_\alpha\ty \equiv \E\tx'(\tx'\mincl\tx \land (\alpha(\tx')\imp\tx'=\ty))$.
  
  Let us explain the last equivalence. For a multiteam $M$ on variables $\tx$ and $\ty$, let
  $M_\alpha=\mset{s\in M : M\models\alpha[s]}$.
  The formula $\tx\mincl_\alpha\ty$ is true in $M$ if, and only if, there is
  an injective function $f\colon M_\alpha\to M$ such that $(fs)(\ty)=s(\tx)$ for all $s\in M_\alpha$.
  Now choose any bijection $g\colon M\to M$ that extends $f^{-1}$, i.e.~$(g\circ f)(s)=s$ for
  $s\in M_\alpha$. The Skolem extension $M[\tx'\mapsto F]$ where $F\colon M\ra A^k$ extends
  any $t\in M$ to $t[\tx'\mapsto(gt)(\tx)]$ clearly satisfies $\tx'\mincl \tx$ (because $g$ is a bijection).
  Further, if an assignment $t[\tx'\mapsto(gt)(\tx)]$ satisfies $\alpha(\tx')$ then the
  assignment $gt$ satisfies $\alpha(\tx)$; hence $gt\in M_\alpha$ which means that
  $t(\ty)=(gt)(\tx)$, and thus implies that $t[\tx'\mapsto(gt)(\tx)]$ satisfies $\tx'=\ty$.
  For the converse, assume that for some tuple of values $\tc$ such that $\alpha(\tc)$ holds,
  there are more assignments $s\in M$ with $s(\tx)=\tc$ than assignments $t$ with $t(\ty)=\tc$.
  Then any Skolem extension $M[\tx'\mapsto F]$ that satisfies $\tx'\mincl \tx$, and hence also
  $\tx\mincl\tx'$ must have more assignments $s$ with $s(\tx')=\tc$ than assignments with
  $s(\ty)=\tc$, so the formula on the right hand side does not hold for this multiteam.
\end{proof}

Somewhat deeper arguments are needed to show that multiteam independence logic $\fom[\mindep]$ is powerful enough to express all of  $\fom[\mincl]$.
The proof is based on ideas from \cite{Galliani12}.
A similar result for probabilistic team semantics was established in \cite{HannulaHirKonKulVir19}.

\begin{prop}
  \label{prop: mincl in mindep}
  $\fom[\mincl] \precneqq \fom[\mindep]$.
  Especially, $\tx \mincl \ty$ is equivalent to
  \begin{align*}
    \phi_{\mincl}(\tx, \ty) \ceq \A\tz\A a\A b \big(
    & (\tz \neq \tx \land \tz \neq \ty) \lor (\tz \neq \ty \land a \neq b) \lor\\
    & (\tz   =  \ty \land a   =  b) \lor ((\tz = \ty\lor a=b) \land \tz\mindep ab)\big).
  \end{align*}
\end{prop}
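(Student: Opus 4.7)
The plan is to prove both directions by working with the universal extension $M' \ceq M[\tz \mapsto A^k][a \mapsto A][b \mapsto A]$, where $k$ is the length of $\tx$, and analyzing the four-way split $M' = M_1 \uplus M_2 \uplus M_3 \uplus M_4$ forced by the outermost disjunction, with each $M_i$ satisfying the $i$-th disjunct.

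First I would perform a case distinction on each extended assignment $(s,\tc,c_1,c_2) \in M'$ according to whether $\tc$ equals $s(\tx)$, $s(\ty)$, both, or neither, and whether $c_1 = c_2$, producing a table of which $M_i$ each assignment may legally join. The decisive observations from this tabulation are: assignments with $\tc = s(\ty)$ and $c_1 \neq c_2$ are forced into $M_4$ (since the first three disjuncts all fail); assignments with $\tc = s(\tx) \neq s(\ty)$ and $c_1 \neq c_2$ are forced into $M_2$; and assignments with $\tc = s(\tx) \neq s(\ty)$ and $c_1 = c_2$ are forced into $M_4$. In particular, for every $\tc$ and every pair $c_1 \neq c_2$, the count $|(M_4)_{\tz = \tc,\, ab = c_1 c_2}|$ is pinned to $|M_{\ty = \tc}|$, so $|(M_4)_{ab = c_1 c_2}| = |M|$, and via $\tz \mindep ab$ on $M_4$ this determines the $\tz$-marginal as $|(M_4)_{\tz = \tc}| = |M_{\ty = \tc}|\cdot|M_4|/|M|$.

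For the forward direction, assume $\fA \models_M \tx \mincl \ty$; by Proposition \ref{prop: mincl basic properties}(2) we have $|M_{\tx = \tc}| = |M_{\ty = \tc}|$ for every $\tc$. I would then build $M_4$ as the union of the forced assignments together with, for each $\tc$ and each $c \in A$, a carefully chosen collection of optional assignments so that $|(M_4)_{\tz = \tc,\, ab = cc}| = |M_{\ty = \tc}|$ as well. The equality of the $\tx$- and $\ty$-marginals is precisely what guarantees that the required number of optional contributions is simultaneously achievable (enough assignments are available) and not overprescribed (they do not exceed the pool). A direct check then shows that every joint count in $M_4$ factors as the product of the two marginals divided by $|M_4|$, so $\tz \mindep ab$ holds; the remaining assignments are distributed among $M_1, M_2, M_3$ according to the tabulation.

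For the backward direction, I would combine the forced identity $|(M_4)_{\tz = \tc,\, ab = c_1 c_2}| = |M_{\ty = \tc}|$ for $c_1 \neq c_2$ with $\tz \mindep ab$ to deduce $|(M_4)_{\tz = \tc,\, ab = cc}| = |M_{\ty = \tc}|\cdot T'_c/|M|$, where $T'_c \ceq |(M_4)_{ab = cc}|$. The forced lower bound $|(M_4)_{\tz = \tc,\, ab = cc}| \geq |M_{\tx = \tc}| - |M_{\tx\ty = \tc\tc}|$ (coming from the $\tc = s(\tx) \neq s(\ty)$ assignments) combined with the pointwise upper bound $|(M_4)_{\tz = \tc,\, ab = cc}| \leq |M|$ then squeezes $T'_c$ from two sides across all $\tc$, and picking $\tc$ to maximise $|M_{\ty = \tc}|$ on the upper bound side against an arbitrary $\tc$ on the lower bound side yields $|M_{\tx = \tc}| \leq |M_{\ty = \tc}|$ for all $\tc$, hence $\fA \models_M \tx \mincl \ty$. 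The hard part will be this last combinatorial squeeze: extracting a per-value integer inequality from a rational identity imposed by independence requires careful handling of the diagonal contributions (assignments with $s(\tx) = s(\ty) = \tc$), since these can be absorbed into several disjuncts and must be tracked consistently on both the lower-bound and upper-bound sides of the estimate.
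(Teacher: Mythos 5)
Your case analysis of the universal extension is correct, and in the forced off-diagonal cells it is in fact sharper than what the paper states explicitly: for $c_1\neq c_2$ the guard $\tz=\ty\lor a=b$ makes $|(M_4)_{\tz=\tc,\,ab=c_1c_2}|$ equal to $|M_{\ty=\tc}|$ in \emph{every} admissible split, which pins the $\tz$-marginal of $M_4$. Your forward direction (fill each diagonal cell up to exactly $|M_{\ty=\tc}|$, which is possible because $|M_{\tx=\tc}|=|M_{\ty=\tc}|$ by Proposition~\ref{prop: mincl basic properties}) is essentially the paper's construction and is fine.

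The backward direction, however, does not close, and the problem is exactly the step you flag as ``the hard part''. Write $X_\tc=|M_{\tx=\tc}|$, $Y_\tc=|M_{\ty=\tc}|$, $D_\tc=|M_{\tx\ty=\tc\tc}|$, $N=|M|$. Your identity $|(M_4)_{\tz=\tc,\,ab=cc}|=Y_\tc\, T'_c/N$, the forced lower bound at an arbitrary $\tc$ and the upper bound $\leq N$ at the maximiser $\tc^*$ of $Y$ give only $N(X_\tc-D_\tc)/Y_\tc\leq T'_c\leq N^2/Y_{\tc^*}$, i.e.\ $(X_\tc-D_\tc)\,Y_{\tc^*}\leq N\,Y_\tc$. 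Since $N/Y_{\tc^*}\geq1$ (and is typically much larger), this is strictly weaker than $X_\tc\leq Y_\tc$; even in the best case $Y_{\tc^*}=N$ you would only reach $X_\tc-D_\tc\leq Y_\tc$. Worse, the gap is not repairable from the constraints you have: because $a=b$ already satisfies the guard of the last disjunct, every extension with $a=b$ (also those satisfying the first or third disjunct) may be placed into $M_4$, so each diagonal count can be chosen anywhere between $X_\tc-D_\tc$ and $N$, independently cell by cell. Taking $A=\{0,1\}$ and $M=10\,\mset{xy\mapsto00}\uplus2\,\mset{xy\mapsto01}\uplus\mset{xy\mapsto10}\uplus10\,\mset{xy\mapsto11}$, one has $X_0=12>11=Y_0$, yet padding every diagonal cell $(\tz=\tc,\,ab=cc)$ of $M_4$ up to exactly $Y_\tc$ with such optional assignments (at $\tc=0$: the two forced ones plus nine copies of $xy\mapsto00$; at $\tc=1$: the forced one plus the ten copies of $xy\mapsto11$ and one copy of $xy\mapsto01$) makes every cell at $\tz=\tc$ have count $Y_\tc$, so $\tz\mindep ab$ holds in $M_4$ while the leftovers go to the other three disjuncts as in your tabulation. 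So the desired inequality simply does not follow from forced counts plus independence, and no squeeze of this kind can yield it. Note that your careful bookkeeping also exposes the tacit assumption in the paper's own converse argument (the step claiming $|\hat M_{\tz\neq\ty}|>|\hat M_{\tz=\ty}|$), namely that the diagonal cells of $M_4$ contain only the forced assignments; an adversarial split need not respect this, so the right-to-left direction needs a genuinely different justification (or a modified formula), not a tighter version of your estimate.
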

Let us explain the formula $\phi_{\mincl}=\A \tz,a,b(\phi_1\lor\phi_2\lor\phi_3\lor\phi_4)$
on an intuitive level before we formally verify that it actually states what is claimed here.
The variable tuple $\tz$ is used as a bridge between the values of $\tx$ and $\ty$,
and we are not interested in values of $\tz$ that do not coincide with either
of both and remove these using the first disjunct $\phi_1$.
The actual check, whether all values for $\tx$ and $\ty$ occur equally often happens
in the final subformula $\phi_4$, but we have to prepare the values we want to pass
to it.
The reasoning behind the first-order part of $\phi_4$ (that it only accepts
assignments in which $a$ and $b$ agree, or if $\tz$ assumes the same value as $\ty$
does) is that we have to verify that every value of $\tx$ occurs at least as often
for $\ty$, hence, whenever $\tz$ equals $\ty$ we are at a \quotes{good} value and
thus put no restriction on $a$ and $b$.
But if we see an assignment in which $\tz$ agrees with $\tx$ we enforce that $a$ and $b$
coincide.
Accordingly, if a value $\tu$ occurs more often for $\tx$ than for $\ty$ it must
be the case that the probability that $\tz$ assumes $\tu$ is larger if one knows
that in fact $a$ and $b$ agree, which violates the independence atom $\tz\mindep ab$.
The subformula $\phi_2$ simply helps us to eliminate the assignments that are
forbidden in $\phi_4$, while $\phi_3$ helps us to get rid of those assignments
in which $\tz=\ty\neq\tx$ and $a=b$ holds.
This is necessary because every assignment satisfying $\tz=\tx\neq\ty$ must be
distributed to $\phi_4$ wherefore we already have assignments in which $a$ and
$b$ agree.
But the assignments in which $\tz$ assumes the same value, but this time agrees
with $\ty$ are necessary to \quotes{counter} these and hence equally often must
map $a$ and $b$ to different values.
The subformula $\phi_3$ takes its role by accepting such assignments where $a=b$.
Notice that in the formula given by Galliani for team semantics a similar formula like $\phi_3$ is not necessary because it only matters that for every
value that $\tz$ might assume for all combinations of values for $a$ and $b$ there
are assignments that actually take these values, but their multiplicity does not count.

\begin{proof}[Proof of Proposition \ref{prop: mincl in mindep}]
  Assume $\fA\models_M\tx\mincl\ty$ and let $M'$ be the multiteam after assigning all values
  to the variables $\tz,a$ and $b$, i.e.~$M'=M[\tz\mapsto A^{|\tz|}][a\mapsto A][b\mapsto A]$.
  To prove that indeed $\fA\models_{M'} (\phi_1\lor\phi_2\lor\phi_3\lor\phi_4)$ let
  $M_1 := M'_{\phi_1}$ and $M_2 := M'_{\phi_2}$.
  Recall that $M_\phi$ is the submultiset of all assignments in $M$ that satisfy $\phi$.
  Therefore, all remaining assignments $s \in M'' := M' \setminus (M_1 \uplus M_2)$
  either satisfy $\tz = \ty$ or $\tz = \tx \neq \ty \land a=b$.
  Assignments of the first kind can be divided further based on whether they
  satisfy $\tx=\ty$: put all assignments of $M''$ that do not obey this equation
  and moreover agree on $a$ and $b$ into $M_3$, i.e.~$M_3 := M''_{\tx\neq\ty\land a=b}$,
  and let $M_4 := M''\setminus M_3$.
  It follows immediately from the definitions of $M_i$ that
  $\fA \models_{M_i} \phi_i$ for $i = 1,2,3$.
  What is left to show is that indeed $\fA \models_{M_4} \phi_4$ holds.
  The first-order part, that is $\tz = \ty\lor a=b$, of $\phi_4$ is satisfied by
  $M_4$ since every assignments $s\in M''$ satisfies it and $M_4 \subsetpeq M''$.
  We claim that for all values which $\tz$ may assume in $M_4$ every value combination
  for $a$ and $b$ occurs equally often in $M_4$.
  \begin{claim}
    \label{claim:M_4}
    For all $\tc \in \supp{M_4}(\tz)$ and $d,e,f,g\in A$ with
    $\hat{M} := (M_4)_{\tz=\tc}$ we have $|\hat{M}_{ab=de}| = |\hat{M}_{ab=fg}|$.
  \end{claim}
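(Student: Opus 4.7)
The plan is to trace, for each assignment in $\hat{M}=(M_4)_{\tz=\tc}$, the underlying $s\in M$ from which it descends under the successive extensions $[\tz\mapsto A^{|\tz|}][a\mapsto A][b\mapsto A]$. From the description of $M''$ and the definition $M_4 = M''\setminus M_3$, the assignments populating $\hat{M}$ arise from three mutually exclusive sources, distinguished by how $s(\tx)$ and $s(\ty)$ compare to $\tc$: (i) $s(\tx) = s(\ty) = \tc$, with no constraint on $(a,b)$; (ii) $s(\ty) = \tc \neq s(\tx)$, which now requires $a \neq b$ because its $a=b$ portion has been peeled off into $M_3$; and (iii) $s(\tx) = \tc \neq s(\ty)$, which requires $a = b$ by the very way such assignments enter $M''$. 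Write $k$, $l$, $m$ for the total $M$-multiplicities of these three groups.

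Next I would carry out an elementary count. For any pair $(d,e)\in A^2$, each $s$ in group (i) produces exactly one element of $\hat{M}_{ab=de}$ for every choice of $(d,e)$; each $s$ in group (ii) produces one exactly when $d\neq e$ and none otherwise; and each $s$ in group (iii) produces one exactly when $d=e$ and none otherwise. Summing these contributions gives
\[
|\hat{M}_{ab=de}| \;=\; \begin{cases} k+m & \text{if } d=e,\\ k+l & \text{if } d\neq e. \end{cases}
\]

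The concluding step, and the only place where the standing hypothesis $\fA\models_M\tx\mincl\ty$ is actually invoked, is to deduce $l=m$. By Proposition~\ref{prop: mincl basic properties}(2) this hypothesis yields $|M_{\tx=\tc}| = |M_{\ty=\tc}|$, and partitioning each side according to whether the other coordinate also equals $\tc$ rewrites the equality as $k+m = k+l$, whence $m=l$. Substituting back shows $|\hat{M}_{ab=de}|$ to be independent of $(d,e)$, which is exactly Claim~\ref{claim:M_4}. I expect the only real obstacle to be the bookkeeping needed to extract the three-way partition of $\hat{M}$ from the definitions of $M_3$ and $M_4$; once the partition is in hand, the counting is routine and the multiteam-inclusion hypothesis closes the gap between the diagonal and off-diagonal contributions in a single clean step.
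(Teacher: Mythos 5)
Your proof is correct and follows essentially the same route as the paper's: both decompose $\hat{M}$ according to whether $\tx=\ty$ (your group (i), contributing uniformly to every pair $(a,b)$) versus the two one-sided cases with forced $a\neq b$ resp.\ $a=b$ (your groups (ii) and (iii)), and both close the gap by invoking $\fA\models_M\tx\mincl\ty$ via Proposition~\ref{prop: mincl basic properties}(2) to equate the off-diagonal and diagonal counts. Your version merely makes the bookkeeping explicit with the counts $k,l,m$ where the paper argues the same facts in prose.
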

  \begin{claimproof}
    Split the assignments in $\hat{M}$ according to whether they satisfy $\tx=\ty$
    into $\hat{M}_{\tx=\ty}$ and $\hat{M}_{\tx\neq\ty}$.
    By construction of $M_4$ we conclude that $\hat{M}_{\tx=\ty}$ can be written as
    $\tilde{M}[a\mapsto A][b\mapsto A]$ for some $\tilde{M}$, since we have
    distributed every $s\in M''$ with $s(\tx) = s(\ty)$ into $M_4$.
    Regarding $\hat{M}_{\tx\neq\ty}$, we observe that the number of assignments
    mapping $\tx$ to $\tc$ is the same as the number of those mapping $\ty$ to $\tc$
    because by assumption we know that $\fA \models_M\tx\mincl\ty$ holds.
    The assignments of the first kind must map $a$ and $b$ to identical values,
    while the assignments of the second kind must map $a$ and $b$ to different
    values.
    But which values those variables take is in no other way restricted and hence
    they occur equally often for all possibilities $d,e,f$ and $g$.
  \end{claimproof}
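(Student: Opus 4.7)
My plan is to show that the $(a,b)$-distribution of $\hat{M}$ is uniform by decomposing $\hat{M}$ along the condition $\tx=\ty$ and using the inclusion hypothesis $\fA \models_M \tx \mincl \ty$ to equate the contributions from complementary parts.

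First I would spell out explicitly which assignments lie in $M_4$. Tracing the definitions, $M''$ consists of the $M'$-assignments satisfying $\tz=\ty \lor (\tz=\tx \land a=b)$, and removing $M_3$ leaves the additional constraint $\tx=\ty \lor a \neq b$. Intersecting with $\tz=\tc$ therefore yields three disjoint families inside $\hat{M}$: (a) those with $\tz=\ty=\tx=\tc$ and arbitrary $(a,b)$; (b) those with $\tz=\ty=\tc$, $\tx \neq \tc$ and $a \neq b$; (c) those with $\tz=\tx=\tc$, $\ty \neq \tc$ and $a=b$.

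Next I would split $\hat{M}$ as $\hat{M}_{\tx=\ty} \uplus \hat{M}_{\tx\neq\ty}$. The first piece is exactly family (a); because $a$ and $b$ were universally quantified in $M'$ with no surviving constraint, it has the shape $\tilde{M}[a\mapsto A][b\mapsto A]$ for some $\tilde{M}$ and hence contributes the same count to every cell $(d,e) \in A^2$. The other piece $\hat{M}_{\tx\neq\ty}$ is the disjoint union of families (b) and (c): family (b) contributes $|M_{\ty=\tc,\,\tx \neq \tc}|$ to each off-diagonal cell and $0$ to the diagonal; family (c) contributes $|M_{\tx=\tc,\,\ty \neq \tc}|$ to each diagonal cell and $0$ off-diagonal. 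The crucial step is the observation made in the proof of Proposition~\ref{prop: mincl basic properties}(2): from $\fA \models_M \tx\mincl\ty$ one obtains $|M_{\tx=\tc}| = |M_{\ty=\tc}|$, and subtracting the common term $|M_{\tx=\ty=\tc}|$ yields $|M_{\tx=\tc,\,\ty \neq \tc}| = |M_{\ty=\tc,\,\tx \neq \tc}|$. Hence families (b) and (c) contribute the same constant to every $(d,e) \in A^2$.

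The main obstacle I anticipate is the bookkeeping to characterize $M_4$ correctly --- in particular to see that its \emph{second-kind} part, family (c) with $\tz=\tx \neq \ty$ and $a=b$, survives the removal of $M_3$ and is precisely what balances family (b) in $\hat{M}_{\tx\neq\ty}$. Once families (a), (b), and (c) have been correctly identified, the remainder reduces to elementary counting, closed off by combining the uniform contributions of $\hat{M}_{\tx=\ty}$ and $\hat{M}_{\tx\neq\ty}$ to obtain the claim.
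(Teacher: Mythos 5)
Your proof is correct and follows essentially the same route as the paper: split $\hat{M}=(M_4)_{\tz=\tc}$ along $\tx=\ty$, observe that the $\tx=\ty$ part has the form $\tilde{M}[a\mapsto A][b\mapsto A]$ and is therefore uniform over all cells, and balance the diagonal contribution of the assignments with $\tz=\tx=\tc\neq\ty$, $a=b$ against the off-diagonal contribution of those with $\tz=\ty=\tc\neq\tx$, $a\neq b$ via $|M_{\tx=\tc}|=|M_{\ty=\tc}|$; your write-up merely makes explicit the subtraction of the common term $|M_{\tx=\ty=\tc}|$, which the paper leaves implicit. On the bookkeeping point you flag: the intermediate sentence asserting that removing $M_3$ imposes the constraint $\tx=\ty\lor a\neq b$ on all of $M''$ would delete your family (c) and is not the intended reading; since $M_3$ must satisfy $\phi_3=(\tz=\ty\land a=b)$, it can only contain first-kind assignments, i.e.\ $M_3=M''_{\tz=\ty\land\tx\neq\ty\land a=b}$, so the assignments with $\tz=\tx\neq\ty$ and $a=b$ do remain in $M_4$, exactly as your family decomposition (and the balancing argument) requires --- a reading the paper itself confirms in the converse direction, where it notes that all assignments satisfying $\tz=\tx\neq\ty\land a=b$ must belong to $M_4$.
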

  From this claim it immediately follows that $\fA\models_{M_4}\tz\mindep ab$,
  which concludes the proof of the left-to-right direction.
  
  For the converse direction let $\fA \models_M \phi_{\mincl}$ and let $M'$ be as above.
  That means $\fA \models_{M'} (\phi_1\lor\phi_2\lor\phi_3\lor\phi_4)$ holds
  and hence $M' = M_1\uplus M_2 \uplus M_3 \uplus M_4$ such that $\fA \models_{M_i}\phi_i$
  for $i=1,2,3,4$.
  Towards a contradiction, assume that $\fA\nmodels_M\tx\mincl\ty$, thus a value
  $\tc\in \supp{M}(\tx)$ exists such that $|M_{\tx=\tc}| > |M_{\ty=\tc}|$.
  First, notice that all assignments in $M'$ which satisfy $\tz=\tx\neq\ty\land a=b$
  must belong to $M_4$ as they falsify all other formulae $\phi_i$ for $i=1,2,3$.
  Consider the submultiteam $\hat{M}$ of $M_4$ that consists of all assignments $s$
  satisfying $s(\tz) = \tc$.
  We can divide $\hat{M}$ into the two multiteams $\hat{M}_{\tz=\ty}$ and $\hat{M}_{\tz\neq\ty}$.
  Because $|M_{\tx=\tc}| > |M_{\ty=\tc}|$ it follows that also $|\hat{M}_{\tz\neq\ty}|
  > |\hat{M}_{\tz=\ty}|$.
  But all assignments in $\hat{M}_{\tz\neq\ty}$ must agree on $a$ and $b$, as
  otherwise the first-order part of $\phi_4$ would not be satisfied by $M_4$.
  At this point we may safely assume that there are two different values $d\neq e$ in $A$,
  as otherwise all multiteams with base $A$ must satisfy the inclusion $\tx\mincl\ty$.
  Accordingly, in $\hat{M}$, the probability that $\tz=\tc$ holds is larger given
  the fact that $a=b=d$ than if one knows $a=d\neq e=b$, i.e.~$\pr_{\hat{M}}(\tz=\tc\mid ab=dd)
  > \pr_{\hat{M}}(\tz=\tc \mid ab=de)$.
  This projects back to $M_4$, because $\hat{M}$ is just the collection of all
  assignments $s$ in $M_4$ with $s(\tz) = \tc$.
  Thus, $\fA\nmodels_{M_4}\phi_4$ and hence $\fA\nmodels_M\phi_{\mincl}(\tx,\ty)$.
  
  The inclusion is proper because $\fom[\mincl]$ is union closed, while $\fom[\mindep]$ is not.
\end{proof}

However, it is by no means the case that all connections between logics with team semantics carry over to the
setting of multiteam semantics.
A very interesting issue is the relationship between independence logic on one side,
and inclusion-exclusion logic on the other side.  The fact that $\fot[\incl, {}\excl{}]$ has the full expressive power
of independence logic $\fot[\indep]$ and, in fact of any logic $\fot[\Omega]$ based on atomic dependencies
that are first-order (or even existential second-order) definable \cite{Galliani12} has been of fundamental importance for
many results in team semantics. We show that analogous statements for multiteam semantics fail, in a rather strong sense,
and this applies not just to inclusion and exclusion, but to any combination of downwards closed and union closed atomic dependencies.

\begin{theorem}
  \label{thm:down + union != mindep}
  Let $\talpha$ be any collection of downwards closed atoms and $\tbeta$ be any collection of union closed atoms.
  There is no formula $\psi \in \fom[\talpha, \tbeta]$ with  $x\mindep y \equiv \psi(x, y)$.
\end{theorem}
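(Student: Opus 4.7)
The plan is to show that for any candidate $\psi \in \fom[\talpha, \tbeta]$ there exist multiteams $M$ and $N$ with $\supp{M} = \supp{N}$ such that $M \models x \mindep y$, $N \not\models x \mindep y$, but $\fA \models_M \psi$ implies $\fA \models_N \psi$. Were $\psi$ equivalent to $x \mindep y$, then $M \models \psi$ would force $N \models \psi$ and hence $N \models x \mindep y$, a contradiction. By Proposition~\ref{prop: downward atom = ts atom} every atom in $\talpha$ only sees the support, so fixing $\supp{M} = \supp{N}$ already neutralizes the downwards-closed side of the problem; what remains is to control the union-closed atoms and the logical operators.

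Concretely, I would take $S = A \times A$ for a sufficiently large finite $A$, let $M$ be the uniform multiplicity-one multiteam on $S$ (so $M \models x \mindep y$), and set $N = M \uplus P$ where $P$ is a perturbation multiteam with $\supp{P} \subseteq S$. Two design constraints guide the choice of $P$: first, $N$ must fail $x \mindep y$, which is easy since multiteam independence is a rank-one quadratic condition on the multiplicity matrix that a generic perturbation destroys; second, $P$ must satisfy each of the finitely many union-closed atoms of $\tbeta$ occurring in the fixed $\psi$. This can be arranged by taking $P$ to be a large scalar multiple of $M$ augmented by a small fixed ``rank-two'' adjustment, so that the threshold parameters of all relevant atoms are met while the rank-one structure of $N$ is broken.

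The core of the proof is an induction on $\psi$ establishing $\fA \models_M \psi \Rightarrow \fA \models_N \psi$. Atoms in $\tbeta$ are handled by union closure together with $P \models \alpha$. Conjunctions are immediate; universal quantification uses that $N[x \mapsto A] = M[x \mapsto A] \uplus P[x \mapsto A]$, so the induction hypothesis applies to the extended multiteams. For $\psi = \psi_1 \lor \psi_2$ with a witness decomposition $M = M_1 \uplus M_2$, we write $N = (M_1 \uplus P) \uplus M_2$ and apply the induction hypothesis to the pair $(M_1, P)$; for $\psi = \E x\,\theta$, the choice function $F$ on $M$ is extended to one on $N$ by selecting any admissible value on the additional occurrences coming from $P$.

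The hard part will be the inductive step for $\lor$ and $\E$: the sub-multiteam $M_1$ in an arbitrary witness decomposition of $M$ need not be uniform, so $P$ must be absorbable into $M_1$ without violating the union-closed atoms used in the subformula $\psi_1$. Handling this requires a careful threshold argument --- parameterising $P$ by the fixed $\psi$ and ensuring that its uniform-like bulk is large enough that the perturbation remains undetectable by the finitely many atomic parameters appearing in $\psi$, even after descending through several levels of decomposition and quantification. Closing the induction thus reduces to a bookkeeping argument on how much of $P$'s mass is available to witnesses at each node of the syntax tree, which is where the argument does its real work.
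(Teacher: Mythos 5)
Your overall strategy (fix a multiteam satisfying $x\mindep y$, add a perturbation destroying independence while keeping the same support, and push satisfaction of $\psi$ through by induction, using Proposition~\ref{prop: downward atom = ts atom} for the downwards closed atoms) is the right shape of argument, but the proof has a genuine gap exactly where you flag it: the inductive step for $\lor$ and $\E$ is not carried out, and the proposed ``threshold/bookkeeping'' repair does not work in the required generality. The atoms in $\tbeta$ are \emph{arbitrary} union closed multiteam atoms; they need not be governed by numerical threshold parameters at all, so there is no notion of a perturbation being ``undetectable by the atomic parameters.'' Union closure only helps you if the portion of $P$ that ends up at a given leaf of the annotation --- after descending through an arbitrary witness decomposition $M=M_1\uplus M_2$, through universal extensions, and through choice-function values that you must invent for the extra occurrences coming from $P$ --- \emph{itself} satisfies the atom sitting at that leaf. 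Since $M_1$ is an arbitrary submultiteam (not uniform) and the atoms are arbitrary, a fixed $P$ designed in advance as ``a large multiple of $M$ plus a rank-two adjustment'' gives you no handle on this; the claim $\fA\models_M\psi\Rightarrow\fA\models_{M\uplus P}\psi$ is precisely the hard content of the theorem and cannot be closed by mass-accounting alone.

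The paper's proof supplies the missing idea: instead of one witness plus an unstructured perturbation, it takes the infinite family $M_k=(\mset{x\mapsto0}\uplus k\mset{x\mapsto1})\times(\mset{y\mapsto0}\uplus k\mset{y\mapsto1})$, all satisfying $x\mindep y$ and all with the same support, and notes that $M_k\uplus M_\ell$ fails independence whenever $k\neq\ell$. For each $k$ one fixes a structurally valid annotation $F_k$ of the syntax tree of $\psi$ witnessing $\fA\models_{M_k}\psi$; since only finitely many team-level restrictions $T(F_k)$ exist, the pigeonhole principle yields $k\neq\ell$ with $T(F_k)=T(F_\ell)$, and then $F_k\uplus F_\ell$ is a structurally valid annotation for $M_k\uplus M_\ell$. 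At the leaves, union closed atoms hold by union closure (both annotation pieces satisfy them), and downwards closed atoms and literals hold because the supporting team is unchanged. In other words, the ``perturbation'' is itself a fully annotated satisfying multiteam with the identical team-level skeleton, so the routing problem you could not solve never arises. If you want to salvage your write-up, replace the single fixed $P$ and the absorption induction by this pigeonhole-over-annotations argument; your use of Proposition~\ref{prop: downward atom = ts atom} and the insistence on equal supports can be kept as is.
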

\begin{proof}
For every natural number $k$, consider the multiteam
\begin{align*}
  M_k:=  &(\mset{x\mapsto0}\uplus k\mset{x\mapsto1})\times(\mset{y\mapsto0}\uplus k\mset{y\mapsto1})\\
  = &\mset{xy\mapsto00}\uplus k\mset{xy\mapsto01,xy\mapsto10}\uplus k^2\mset{xy\mapsto11}.
\end{align*}

\noindent
Notice that $x\mindep y$ holds in $M_k$  for all $k\geq 1$. Further, a simple calculation shows that
$x\mindep y$ holds in $M_k\uplus M_\ell$ if, and only if,  $k=\ell$.
Towards a contradiction we assume that there is a formula $\psi(x,y)\in\fom[\talpha,\tbeta]$ equivalent to $x\mindep y$.
 
Let $\cT(\psi)$ be the syntax tree of $\psi$ and choose, for every $k$, an annotation $F_k$ of $\cT(\psi)$,
which associates with every subformula $\phi\in\cT(\psi)$ a multiteam $M_k(\phi)$ over $\{0,1\}$, and which
witnesses that $M_k$ satisfies $\psi$.  Further, let $T(F_k)$ be the restriction of $F_k$ to
the supporting teams, mapping $\phi$ to $T(M_k(\phi))$.
Although there are possibly infinitely many different 
annotations $F_k$, there are only finitely many different restrictions $T(F_k)$. Hence there exist $k, \ell$ with
$k\neq \ell$ and $T(F_k)=T(F_\ell)$.
The union $F_k\uplus F_\ell$ is therefore a structurally valid candidate for witnessing that
the multiteam $M_k\uplus M_\ell$ satisfies  $\psi$.
It remains to show that each atomic subformula $\phi$ in $\psi$ is satisfied by the the multiteam $M_k(\phi)\uplus M_\ell(\phi)$.
For union closed atoms $\phi=\beta_i$ this is clear because $\phi$ holds in $M_k(\phi)$ and in $M_\ell(\phi)$.
For atoms $\phi=\alpha_j$ and for first-order literals $\phi$ we use the fact that
the underlying team of $M_k(\phi) \uplus M_\ell(\phi)$ is the same as the one for $M_k(\phi)$,
and since, by Proposition~\ref{prop: downward atom = ts atom}, $\phi$ only depends on the team, we have that
the multiteam $M_k(\phi)\uplus M_\ell(\phi)$ satisfies $\phi$.
Putting everything together we have that $M_k\uplus M_\ell$ satisfies $\psi(x,y)$ but not $x\mindep y$, hence $\psi(x,y)\not\equiv x\mindep y$.
\end{proof}

\begin{cor}
  $\fom[\mincl, {}\excl{}]\precneqq\fom(\mindep)$.
\end{cor}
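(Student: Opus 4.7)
The plan is to combine three earlier results: the expressiveness chain of Proposition~\ref{prop: mts relationship}, the translation of $\mincl$ into $\fom[\mindep]$ from Proposition~\ref{prop: mincl in mindep}, and Theorem~\ref{thm:down + union != mindep} just proved. The inclusion direction will be a routine substitution, while the strict inequality is essentially the content of the Theorem repackaged for these particular atoms.

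For the inclusion $\fom[\mincl, {}\excl{}] \preceq \fom[\mindep]$, I would translate both non-first-order atoms into equivalent $\fom[\mindep]$-formulae and substitute throughout. The atom $\tx \mincl \ty$ is handled by the explicit formula $\phi_{\mincl}(\tx, \ty)$ supplied by Proposition~\ref{prop: mincl in mindep}. For $\tx \excl \ty$, Proposition~\ref{prop: mts relationship} gives the chain $\fom[{}\excl{}] \equiv \fom[\dep] \precneqq \fom[\fork{\geq 1/2}] \precneqq \fom[\fork{=1/2}] \precneqq \fom[\mindep]$, from which an equivalent $\fom[\mindep]$-formula for $\tx \excl \ty$ is extracted. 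Simultaneously substituting these two translations into any $\psi \in \fom[\mincl, {}\excl{}]$ produces an equivalent formula in $\fom[\mindep]$.

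For strictness, I would apply Theorem~\ref{thm:down + union != mindep} with $\talpha = \{{}\excl{}\}$ and $\tbeta = \{\mincl\}$. The atom $\tx \excl \ty$ is downwards closed, as it sits inside the downwards closed fragment $\fom[\dep, {}\excl{}]$ of Proposition~\ref{prop: mts closure}, and $\tx \mincl \ty$ is union closed by Proposition~\ref{prop: mincl basic properties}(1). The Theorem then immediately yields that $x \mindep y$ has no equivalent in $\fom[\mincl, {}\excl{}]$, so $\fom[\mindep] \not\preceq \fom[\mincl, {}\excl{}]$.

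The heavy lifting, namely the witnessing family $M_k$ and the annotation-merging argument, is already internal to Theorem~\ref{thm:down + union != mindep}; the corollary is essentially a packaging step. I therefore do not anticipate a genuine obstacle. The only point I would double-check in writing this up is that the hypotheses of the Theorem are literally met by $\{{}\excl{}\}$ and $\{\mincl\}$ (downwards closure and union closure, respectively), which is immediate from the cited propositions.
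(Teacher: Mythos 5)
Your proposal is correct and matches the paper's intended argument: the corollary is stated as an immediate consequence of Theorem~\ref{thm:down + union != mindep} (applied with the downwards closed atom ${}\excl{}$ and the union closed atom $\mincl$, giving strictness), with the inclusion $\fom[\mincl,{}\excl{}]\preceq\fom[\mindep]$ supplied by the translations of Propositions~\ref{prop: mts relationship} and~\ref{prop: mincl in mindep}. The hypothesis check you flag (downwards closure of exclusion, union closure of $\mincl$) is indeed immediate from Proposition~\ref{prop: mts closure} and Proposition~\ref{prop: mincl basic properties}(1).
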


\subsection*{An open problem:} In team semantics, and in fact also in probabilistic team semantics \cite{DurandHanKonMeiVir18a}, 
independence logic is equivalent to conditional independence logic, but the 
proofs do not translate to the multiteam setting and
the precise relationship between independence and conditional independence under multiteam semantics remains open.

\section{Back \& Forth between Team and Multiteam Semantics}
\label{Sect: BandF}

We next discuss the relationship between team semantics and multiteam semantics.
As already pointed out, any team semantical atom $\alpha\in\fot$ can be understood naturally as a multiteam atom
such that $\fA\models_{(X, n)}\alpha$ if, and only if, $\fA\models_X\alpha$.
Conversely, we would like to associate with every multiteam atom $\beta$ a team semantical atom $\teamVer{\beta}$.
We stipulate that $\teamVer{\beta}$ should be the strongest team semantical atom that is implied by $\beta$.
This means, on one side that $\fA\models_{(X, n)}\beta$ implies that $\fA\models_{X}\teamVer{\beta}$,
and on the other side that $\teamVer{\beta}\models\gamma$ for any other atom $\gamma$ satisfying $\fA\models_{(X, n)}\beta$ implies $\fA\models_X\gamma$.
These rules imply the following semantics for $\teamVer{\beta}$.

\begin{definition}
  Given a multiteam atom $\beta$, let $\teamVer{\beta}$ be the team semantical atom with
  $\fA\models_X \teamVer{\beta}$ if, and only if, $\fA\models_{(X, m)} \beta$ for some $m\colon X\to\N_{>0}$.
\end{definition}

\begin{example}
 It is easily verified that $\teamVer{(\tx\mindep\ty)} \equiv \tx\indep\ty$ and
$\teamVer{(\tx\fork{\leq \frac{1}{2}} y)} \equiv \tx\anon y$.
The team semantical version of a multiteam inclusion atom $\tx\mincl \ty$ is less obvious.
We have that $\teamVer{(\tx\mincl\ty)}$ is neither $\tx\incl\ty$, nor $\tx\equiex\ty$.
Indeed, the team $\{xy\mapsto00,xy\mapsto01,xy\mapsto11\}$ satisfies both $x\incl y$ and $x\equiex y$, but no extension to a multiteam satisfies $x\mincl y$.
To define such an atom we introduce an auxiliary concept:
Let $X$ be a team and $\tx, \ty \subseteq \dom(X)$ be variable tuples of the same length.
The graph $\teamG{X}{\tx}{\ty}$ has vertex set $X(\tx)\cup X(\ty)$
and an edge from $\ta$ to $\tb$ if there exists an assignment $s\in X$ with 
$s(\tx) = \ta$ and $s(\ty) = \tb$.
Further, let $\cycle{\tx}{\ty}$ be a team semantical atom satisfied by $X$ if the graph
$\teamG{X}{\tx}{\ty}$ is decomposable into (not necessarily disjoint) cycles.
This means that $\teamG{X}{\tx}{\ty} = C_1 \cup\dots\cup C_k$ for  cycles $C_i$, where the
union of two graphs $G = (V, E)$ and $H = (W, F)$ is defined as $G\cup H :=
(V\cup W, E\cup F)$ ( so they may share both vertices and edges).
By item (\ref{prop: mincl basic properties - decompose}) of Proposition \ref{prop: mincl basic properties}
it then follows that
$\teamVer{(\tx\mincl\ty)} \equiv \cycle{\tx}{\ty}$.
\end{example}

For any collection $\Omega$ of multiteam atoms, and the corresponding collection $\teamVer{\Omega}:=\{\teamVer{\beta} : \beta\in\Omega\}$
we obtain a mapping that associates with every formula $\psi\in\fom[\Omega]$ a corresponding formula $\teamVer{\psi}\in\fot[\teamVer{\Omega}]$.
For discussing connections between multiteam and team semantics, we introduce the following notions.

\begin{definition}
  \label{def: team semantical version}
  Let $\psi\in\fom$ and $\phi\in\fot$.
  We say that:
  \begin{itemize}
    \item $\psi$ \emph{depends only on the team} if for all $\fA$ and all multiteams $M,N$ with $\supp{M}=\supp{N}$ we have that
    $\fA\models_M  \phi$ if, and only if, $\fA\models_N \psi$;
    \item $\phi$ is a \emph{team semantical version} (\tsv) of $\psi$ if for all $\fA$ and $X$ we have that
    $\fA\models_X\phi$ if, and only if, $\fA\models_M \psi$ for some multiteam $M$ with $\supp{M}=X$;
    \item $\phi$ and $\psi$ are \emph{companions} if $\phi$ is a \tsv\ of $\psi$ and $\psi$ depends only on the team.
  \end{itemize}
\end{definition}

With this terminology, we can restate our second postulate as follows.
For any collection $\Omega$ of team semantical atoms, every formula
$\phi\in\fot[\Omega]$ is a team semantical version of itself, understood
as a formula in $\fom[\Omega]$.

\begin{prop}
  Let $\psi\in\fom$ be a multiteam formula, $\beta$ be a multiteam dependency atom and $M$ be a multiteam.
  \begin{enumerate}
  \item $\teamVer{\beta}$ is a team semantical version of $\beta$.
  \item Whenever $\fA\models_M\psi$ then $\fA\models_{\supp{M}} \teamVer{\psi}$.
  \item If $\psi$ depends only on the team then it has a companion $\phi$.
  \end{enumerate}
\end{prop}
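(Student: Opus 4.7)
The plan is to dispose of the three parts in sequence, as they grow in subtlety. Part~(1) is purely definitional, part~(2) is a routine structural induction, and part~(3) is handled by exhibiting a companion via a team atom fashioned directly from the team-dependence hypothesis.

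Part~(1) unwinds the two definitions: by construction $\fA\models_X\teamVer{\beta}$ holds exactly when there is some multiplicity $m\colon X\to\N_{>0}$ with $\fA\models_{(X,m)}\beta$, which is precisely the condition for $\teamVer{\beta}$ to be a team semantical version of $\beta$ (viewed as a multiteam atom).

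For part~(2), I would induct on $\psi\in\fom$. First-order literals are flat in both semantics, and dependency atoms invoke part~(1). Conjunction follows immediately from the induction hypothesis on each conjunct. For disjunction, a multiteam split $M = M_1\uplus M_2$ with $\fA\models_{M_i}\psi_i$ yields $\supp{M} = \supp{M_1}\cup\supp{M_2}$, which is an admissible cover for lax team-semantic disjunction, so the induction hypothesis closes the case. Universal quantification reduces to the induction hypothesis via the identity $\supp{M[x\mapsto A]} = \supp{M}[x\mapsto A]$. The most delicate step is the existential quantifier: a multiteam choice function $F\colon M\to A$ may send distinct occurrences of the same assignment to distinct values, so to obtain a team Skolem function I aggregate occurrences by setting $F'(s) := \{F(s,i) : 0\leq i<m(s)\}$, and then verify $\supp{M[x\mapsto F]} = \supp{M}[x\mapsto F']$ before applying the hypothesis. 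This aggregation step is the principal technicality.

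For part~(3), the team-dependence hypothesis forces the predicate \quotes{there exists a multiteam $M$ with $\supp{M}=X$ such that $\fA\models_M\psi$} to depend only on $(\fA,X)$; indeed, for any fixed enumeration of $X$, it coincides with $\fA\models_{(X,\mathbf{1})}\psi$, and the empty-team case is handled by the empty multiteam. I therefore take $\phi$ to be the team atom whose extension is exactly this class of pairs. By construction, $\phi$ is then a team semantical version of $\psi$, and combined with the hypothesis on $\psi$ this exhibits $\phi$ as a companion of $\psi$. One might hope that $\teamVer{\psi}$ itself always serves as the companion, but without further assumptions---in particular, the subformulae of $\psi$ need not individually depend only on the team---this would require a nontrivial extra argument; the bespoke-atom construction sidesteps this issue entirely.
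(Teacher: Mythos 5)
Parts (1) and (2) are fine and match the paper, which simply asserts they follow by structural induction; your case analysis is correct, including the one delicate point of aggregating the occurrence-wise values of a multiteam choice function $F\colon M\to A$ into a lax Skolem function $F'\colon \supp{M}\to\ppot{A}$ with $\supp{M[x\mapsto F]}=\supp{M}[x\mapsto F']$, and of observing that a disjoint multiteam split projects to a (lax) team cover.

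Part (3), however, has a genuine gap: you do not construct a companion in the sense the proposition is after, you merely \emph{declare} a fresh team atom whose semantics is stipulated to be \quotes{some multiteam with support $X$ satisfies $\psi$}. With that license every multiteam formula whatsoever would have a team semantical version, the hypothesis that $\psi$ depends only on the team would only be needed to tick the second box of the companionship definition, and the statement would carry no information about team-semantical \emph{logics} at all. The intended (and the paper's) content is that the companion is a formula of an actual team logic: the paper observes, exactly as you do, that by team-dependence $\fA\models_M\psi$ iff $\fA\models_{(X,s\mapsto 1)}\psi$ for $X=\supp{M}$, but then uses that for fixed $\psi$ this is an \NPtime\ property of $(\fA,X)$ (Corollary~\ref{cor:NP}), applies Fagin's Theorem to obtain an $\eso$-sentence $\theta$ with $(\fA,X)\models\theta$ iff $\fA\models_{(X,s\mapsto1)}\psi$, and finally invokes Galliani's result that inclusion-exclusion logic captures $\eso$ to obtain a concrete companion $\phi\in\fot[\incl,{}\excl{}]$ \cite{Galliani12}. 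Your observation about $(X,\mathbf{1})$ is precisely the right first step; to close the gap you should continue from there through the \NPtime\ bound, Fagin, and the expressive completeness of $\fot[\incl,{}\excl{}]$, rather than sidestep the issue with a bespoke atom.
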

\begin{proof}
The first two claims follow directly from definition of
  $\teamVer{\beta}$  by structural induction on the formulae.
 Towards (3) we notice that for any multiteam $M$ with $X=\supp{M}$ we have that
  $\fA\models_M \psi$ if, and only if, $\fA\models_{(X, s\mapsto1)}\psi$, which, for fixed $\psi$, is an $\NPtime$
  property of $\fA$ and $X$ (cf.~Corollary \ref{cor:NP}).
  Hence, by Fagin's Theorem, there exists a $\eso$-sentence $\theta$ such that
  $\fA\models_{(X, s\mapsto1)}\psi$ if, and only if, $(\fA, X) \models \theta$.
  But since inclusion-exclusion logic $\fot[\incl,{}\excl{}]$ is equally expressive
  as $\eso$ \cite{Galliani12}, there also exists a formula $\phi\in\fot[\incl,{}\excl{}]$ such that
  $(\fA, X) \models \theta$ if, and only if, $\fA\models_X\phi$.
\end{proof}

We further notice that the converse of (2) is not true, so in general
$\teamVer{\psi}$ need not be a  \tsv\ of $\psi$.
To see this consider the formula $\psi = x\mincl y \land u\mincl w$.
It is easy to see that $\teamVer{\psi}$ is satisfied by the team $X = \{xyuw\mapsto0101,xyuw\mapsto1001,xyuw\mapsto1010\}$.
However, there is no function $n\colon X\to\N_{>0}$ so that the multiteam $(X,n)$ satisfies $\psi$.

By Proposition~\ref{prop: downward atom = ts atom}, we have that \emph{downwards closed} atoms $\beta\in\fom$
depend only on the team, so $\beta$ and $\teamVer{\beta}$ are companions. This extends to $\fom$ in the following sense.

\begin{prop}
  \label{prop: psi dc => psi^T and psi companions}
  Let $\psi\in\fom[\Omega]$ for any collection $\Omega$ of downwards closed atoms.
  Then $\psi$ and $\teamVer{\psi}$ are companions.
\end{prop}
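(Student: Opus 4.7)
The plan is to establish the two conditions for companionship separately: (a) $\psi$ depends only on the team, and (b) $\teamVer{\psi}$ is a team semantical version of $\psi$. Both are proved by structural induction on $\psi$, with (a) established first so that it is available as a tool in the proof of (b).

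For (a), I will show that every $\psi\in\fom[\Omega]$ is simultaneously downwards closed and closed under scalar multiplication, and then combine these properties exactly as in the proof of Proposition~\ref{prop: downward atom = ts atom}: whenever $\supp{M}=\supp{N}$ and $\fA\models_M\psi$, setting $k:=\max_{s\in\supp{N}}n(s)$ makes $N\subsetpeq kM$, so scalar closure yields $\fA\models_{kM}\psi$ and downwards closure then yields $\fA\models_N\psi$. Downwards closure propagates through the operators in the standard way: a submultiteam $N\subsetpeq M$ can be split along any given decomposition $M=M_1\uplus M_2$, and a choice function can be restricted. Scalar closure holds for atoms by the global invariance-under-multiplication assumption; for disjunction, $kM=kM_1\uplus kM_2$; and for existential quantification, any choice function $F\colon M\to A$ replicates to a choice function on $kM$ whose Skolem extension equals $k(M[x\mapsto F])$.

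For (b), the right-to-left direction (if $\fA\models_M\psi$ for some $M$ with $\supp{M}=X$ then $\fA\models_X\teamVer{\psi}$) is precisely claim~(2) of the preceding proposition. The substantive direction proceeds by induction. Literals reduce to Tarski satisfaction of each assignment in $X$, and for an atom $\beta$ the claim is the very definition of $\teamVer{\beta}$. For a conjunction, I take the multiteam $M_1$ provided by the induction hypothesis for $\psi_1$, noting that by (a) applied to $\psi_2$, the condition $\supp{M_1}=X=\supp{M_2}$ forces $\fA\models_{M_1}\psi_2$ as well. For a disjunction, the team-semantical split $X=X_1\cup X_2$ lifts through the induction hypothesis to multiteams $M_i$ on $X_i$, and $M:=M_1\uplus M_2$ has support $X_1\cup X_2=X$ and exhibits the required additive decomposition witnessing $\fA\models_M\psi_1\lor\psi_2$.

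The central case, and the main obstacle, is the existential quantifier. From $\fA\models_X\exists x\,\teamVer{\psi'}$ we obtain some $F\colon X\to\ppot{A}$ with $\fA\models_{X[x\mapsto F]}\teamVer{\psi'}$, and the induction hypothesis yields a multiteam $M'$ with $\supp{M'}=X[x\mapsto F]$ satisfying $\psi'$. I build $M$ on $X$ and a choice function $G\colon M\to A$ as follows: set $m(s):=|F(s)|>0$ (so $\supp{M}=X$), enumerate $F(s)=\{a_0^s,\dots,a_{k_s-1}^s\}$, and put $G(s,i):=a_i^s$. Then $\supp{M[x\mapsto G]}=X[x\mapsto F]=\supp{M'}$, and applying (a) to $\psi'$ transfers $\fA\models_{M'}\psi'$ to $\fA\models_{M[x\mapsto G]}\psi'$, yielding $\fA\models_M\exists x\,\psi'$. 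The universal case is analogous but simpler: any $M$ with $\supp{M}=X$, for instance $M=(X,s\mapsto 1)$, satisfies $\supp{M[x\mapsto A]}=X[x\mapsto A]=\supp{M'}$, and (a) again transfers satisfaction from $M'$ to $M[x\mapsto A]$.
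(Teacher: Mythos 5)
Your proposal is correct, and since the paper states this proposition without an explicit proof, your argument supplies essentially the intended one: part (a) is the argument of Proposition~\ref{prop: downward atom = ts atom} lifted to formulae via the preservation properties of Proposition~\ref{prop: mts closure} (the same reasoning the paper invokes for the right-to-left direction of Theorem~\ref{thm: mts[dep, mincl] dep on team iff dc}), and part (b) combines item (2) of the preceding proposition with a structural induction that uses (a) to transfer satisfaction between multiteams with equal support. The one genuinely delicate step, reconciling the lax set-valued Skolem functions $F\colon X\to\ppot{A}$ of team semantics with the occurrence-wise choice functions of multiteam semantics by taking multiplicities $m(s)=|F(s)|$ and enumerating $F(s)$ across the occurrences of $s$, is handled correctly.
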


The following theorem exhibits a stronger relation between downwards closed properties and those that depend only on the team.

\begin{theorem}
  \label{thm: mts[dep, mincl] dep on team iff dc}
  A formula $\psi\in\fom[\dep, \mincl]$ depends only on the team if, and only if, it is downwards closed.
\end{theorem}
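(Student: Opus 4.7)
The plan is to prove both directions of the biconditional separately.

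For $(\Leftarrow)$: by Proposition~\ref{prop: mts closure}, every $\psi \in \fom[\dep,\mincl]$ is closed under scalar multiplication. If $\supp M = \supp N$ then $N \subsetpeq |N|\cdot M$, since every $s \in \supp N = \supp M$ has multiplicity at most $|N|$ in $N$ and at least $|N|\cdot 1 = |N|$ in $|N|\cdot M$. Hence downwards closure yields $\fA\models_M\psi \Rightarrow \fA\models_{|N|\cdot M}\psi \Rightarrow \fA\models_N\psi$, and symmetrically.

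For $(\Rightarrow)$: suppose $\psi$ depends only on the team, let $\fA\models_M\psi$ and $N\subsetpeq M$; I aim to build a witnessing annotation for $\fA\models_N\psi$. Starting from a structurally valid annotation $\{M(\phi)\}_\phi$ for $\fA\models_M\psi$ (Proposition~\ref{prop:annotation}), define $\{N(\phi)\}_\phi$ inductively from the root with $N(\psi) = N$ and $N(\phi)\subsetpeq M(\phi)$ throughout: conjunctions copy the restriction, disjunctions $M(\phi) = M(\phi_1)\uplus M(\phi_2)$ are split compatibly (any $n(s) \leq m_1(s) + m_2(s)$ distributes as $n_i(s) \leq m_i(s)$), universal quantifiers extend to $N(\phi)[x \mapsto A]$, and existential quantifiers inherit the choice function restricted to the surviving occurrences. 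At first-order literal and $\dep$ leaves the restriction preserves satisfaction by downwards closure.

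The principal obstacle is at $\mincl$ leaves, where the inherited $N(\alpha) \subsetpeq M(\alpha)$ need not satisfy $\tx \mincl \ty$. My plan is to exploit team dependence together with the cycle decomposition of Proposition~\ref{prop: mincl basic properties}(5). Concretely, before restricting I would scale the annotation by $|N|$ (legitimate since by Proposition~\ref{prop: mts closure} the formula is closed under scalar multiplication, yielding a valid annotation $\{|N|\cdot M(\phi)\}_\phi$ for $\fA\models_{|N|\cdot M}\psi$, with $N\subsetpeq |N|\cdot M$ by the argument used in $(\Leftarrow)$). At each $\mincl$ leaf $\alpha$ the multiteam $|N|\cdot M(\alpha)$ then decomposes into unit-multiplicity cycle components each satisfying $\tx\mincl\ty$; realising the inherited restriction $N(\alpha)$ as a union of whole such components keeps $\mincl$ satisfied by union closure. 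The delicate point, which I expect to be the main difficulty, is \emph{global consistency}: the cycle-component choices at distinct $\mincl$ leaves must be simultaneously compatible with the disjunctive splits and existential choices higher in the tree. My approach here would be to reverse the order of construction, fixing cycle decompositions at the leaves first and then propagating the resulting leaf multiteams upwards, invoking team dependence at internal nodes to freely rearrange multiplicities while leaving $\dep$ leaves unaffected thanks to Proposition~\ref{prop: downward atom = ts atom}.
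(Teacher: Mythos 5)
Your $(\Leftarrow)$ direction is correct and is exactly the paper's argument. The $(\Rightarrow)$ direction, however, has a genuine gap, and it sits precisely at the point you flag as "the delicate point": global consistency is not a technical afterthought but the whole difficulty, and neither of your proposed remedies resolves it. Scaling by $|N|$ and decomposing each $\mincl$ leaf into unit-multiplicity cycle components does not help, because the inherited restriction $N(\alpha)\subsetpeq |N|\cdot M(\alpha)$ need not be a union of whole components; if you instead replace $N(\alpha)$ by such a union, you break structural validity with respect to the rest of the annotation (a conjunction forces both children to carry the identical multiteam, a universal quantifier forces exactly $N(\phi)[x\mapsto A]$, and the disjunctive splits above must still add up), so leaves cannot be repaired locally. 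Building leaf-first and propagating upwards has the same problem in reverse: the internal nodes are rigidly constrained, and there is no reason the propagated multiteams meet at the root in exactly $N$. Finally, "invoking team dependence at internal nodes to freely rearrange multiplicities" is not a legitimate move: the hypothesis that $\psi$ depends only on the team is a property of the whole formula, not of its subformulae, and subformulae of $\psi$ (an $\mincl$ leaf, for instance) in general do not depend only on the team.

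The paper's proof avoids all of this by using the hypothesis in a different, essential way: it supplies \emph{infinitely many} witnessing annotations rather than one to be modified. Since multiplicities are irrelevant by team-dependence, it suffices to show that a single assignment $s$ can be deleted from the support. One considers the multiteams $M_i$ in which $s$ has multiplicity $1$ and every other assignment has multiplicity $i$; each satisfies $\psi$, so each has an annotation $\cT_i$ as in Proposition~\ref{prop:annotation}. Every annotation decomposes into "layers" indexed by the originating assignment, and only finitely many layer types exist, so by a pigeonhole argument there are $i,j$ with $\cT_i(\theta)\subsetpeq\cT_j(\theta)$ for every subformula $\theta$. The difference $\cT\ceq\cT_j\setminus\cT_i$ is then a structurally valid annotation whose root multiteam omits $s$; downwards-closed leaves remain satisfied because they hold in $\cT_j$, and $\mincl$ leaves remain satisfied by the subtraction property, Proposition~\ref{prop: mincl basic properties}(4) (if $M$ and $M\uplus N$ satisfy $\tx\mincl\ty$, so does $N$) --- not by the cycle decomposition of item (5). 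The missing idea in your attempt is exactly this: use team-dependence to generate many independent witnesses and compare two of them, instead of trying to surgically restrict a single annotation, which is what creates the unsolvable consistency problem.
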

\begin{proof}
  The proof of the direction from right to left follows the argument given in
  Proposition \ref{prop: downward atom = ts atom} and uses the fact that all multiteam
  atoms are assumed to be closed under scalar multiplication and that the logical
  connectives preserve this property (cf.~Proposition \ref{prop: mts closure}).
  
  Assume that $\psi$ depends only on the team and that $\fA\models_{(X, n)}\psi$.
  For arbitrary $s\in X$ we show that also $\fA\models_{(X\setminus\{s\}, n)}\psi$.
  Let $M_i$ be the multiteam with base $X$ in which all assignments except $s$ have multiplicity $i$ and $s$ occurs just once. By the assumption,  $\fA\models_{M_i}\psi$ for all $i>0$.
  For each $i$ we fix an annotation $\cT_i$ of the syntax tree of $\psi$ witnessing that $\fA\models_{M_i}\psi$.
  Notice that every assignment $s'\in X$ gives rise to a layer in such an annotation, that is the restriction of $\cT_i$ to  just the assignment $s'$ (in the root) and all of its extensions (universal or existential).
  Moreover, note that only a finite number (depending on $X$, $\fA$ and $\psi$) of such restrictions exist.
  Denote them by $T_1,\dots,T_\ell$.
  Hence every $\cT_i$ is decomposable into $a^i_1\cdot T_1 \uplus \dots \uplus a^i_\ell\cdot T_\ell$ for $a^i_j\in\N$, where $n\cdot\cT$ means the annotated syntax tree in which every multiteam is multiplied by the factor $n$.
  By the pigeonhole principle, there exist two annotations $\cT_i$ and $\cT_j$ such that regarding their decompositions we have $a^i_1 \leq a^j_1, \dots, a^i_\ell\leq a^j_\ell$, where some inequality is strict. 
   In other words, $\cT_i(\theta) \subsetpeq \cT_j(\theta)$ for all subformulae $\theta$ of $\psi$.
  We can construct now the tree $\cT \ceq \cT_j\setminus\cT_i$, that is defined in the obvious way, by $\cT(\gamma) \ceq \cT_j(\gamma) \setminus \cT_i(\gamma)$. Clearly, $\cT$  is a structurally valid annotation and thus a possible 
  witness for $\fA\models_{(X\setminus\{s\}, k)}\psi$.
  We claim that, indeed, all atomic formulae are satisfied in $\cT$.
  This is clear for all downwards closed atoms (and in particular all first-order literals), since they are satisfied by $\cT_j$ already.
  For multiteam inclusion atoms $\beta$ we conclude from $\cT_i(\beta) \models \beta$, $\cT_j(\beta)\models\beta$, $\cT_i(\beta)\subsetpeq\cT_j(\beta)$ and Proposition \ref{prop: mincl basic properties} that also $\cT(\beta) \models \beta$ holds.
\end{proof}

\section{Multiteam semantics and existential second-order logic}
\label{sect:eso}

A common way for describing the expressive power of a logic with team semantics is to relate it
to a fragment of existential second-order logic with classical Tarski semantics. 
The question arises whether there are similar results for logics with multiteam semantics. 
Since multiteam semantics involves reasoning about multiplicities of assignments, it must be related
to a framework of classical logic that includes arithmetical operations on natural numbers, but 
should, at the same time, keep these separate from the objects in the structures and assignments under consideration.
Such frameworks have been developed, in a very general sense, in \emph{metafinite model theory} \cite{GraedelGur98}.
In general a metafinite structure is a triple $\fD=(\fA,\fN,W)$ where $\fA$ is a finite structure, $\fN$ is a (typically
infinite) numerical structure, possibly equipped with \emph{multiset operations}, and $W$ is a set of weight functions
$w\colon A^k\ra N$ mapping tuples from $\fA$ to numbers in $\fN$. 
Here we use a specific variant of metafinite structure where the numeric part 
is the standard model of arithmetic $\fN=(\N,0,+,\,\cdot\,,\Sigma)$, together with the summation 
operator over arbitrary finite multisets (mapping every finite multiset $M$ of natural numbers to the sum $\Sigma(M)$ of 
its elements).

\begin{definition}
  A \emph{multiteam structure} is a triple $\fD=(\fA,\fN,\{f\})$ where $\fN$ is as above and $f\colon A^k\to \N$ is a function, representing a multiteam $M_f=(X,m)$ with domain $x_1,\dots,x_k$, by $f(\ta)=m(s)$ if the assignment $s\colon \tx\mapsto\ta$ is in $X$, and $f(\ta)=0$ otherwise.
\end{definition}

In metafinite model theory, first-order logic is defined in the standard way, but with the
important stipulation that \emph{variables and quantifiers range only over elements of $\fA$}. 
Here, we further require that comparisons of terms $t=t'$ may only be used positively.
Thus, terms and formulae of first-order logic for multiteam structures, denoted $\fomts[+,\,\cdot\,]$, 
is defined by 
\[
  t  \cceq 0 \mid f\tx \mid t+ t\mid t\cdot t \mid \Sigma_{\tx} \{ t : \alpha\}\quad\text{ and }\quad
  \phi \cceq \alpha \mid
    t = t \mid
    \phi\land\phi \mid \phi\lor\phi \mid \E x\phi\mid  \A x\phi,
\]
where $\alpha$ are arbitrary literals in the vocabulary of $\fA$.
The semantics is the usual Tarski semantics. For each multiteam structure   
$\fD=(\fA,\fN,\{f\})$, and each assignment $s\colon\bar x\mapsto\bar a$,
it assigns to each term $t(\bar x)$,  a value $t^\fD[\bar a]\in\N$, and
defines, for each formula $\phi(\bar x)$, whether or not $\fD\models\phi[\bar a]$.
Everything is completely standard except perhaps the use of summation as a multiset operation.
Given a term $t(\bar x,\bar y)$ and a literal $\alpha(\bar x,\bar y)$,
the term $\Sigma_{\tx} \{ t : \alpha\}$ binds $\bar x$
but leaves $\bar y$ free. Given any assignment $\bar y\mapsto\bar b$, we obtain the
multiset $M_t[\bar b]:=\mset{ t^\fD[\bar a,\bar b]:  \bar a\in A^k, \fD\models\alpha[\bar a,\bar b]}$.
The value of  $\Sigma_{\tx} \{ t : \phi\}$ for $\fD$ and $\bar b$ is then defined
as $(\Sigma_{\bar x} \{ t : \alpha\})^\fD[\bar b]:=\sum M_t[\bar b]$. To simplify notation
we omit $\alpha$ in the trivial case that $\alpha=\true$.
The \emph{Presburger fragment} $\fomts[+]$   
of $\fomts[+,\,\cdot\,]$ is defined analogously,  but forbids multiplication
$t\cdot t'$ of numeric terms.

\medskip  To illustrate these logics we first show  how basic multiteam atoms can be expressed in them.
Let $\fD=(\fA,\fN,\{f\})$ be a multiset structure encoding a pair $(\fA,M)$,
with $\dom(M)=\{x_1,\dots,x_k\}$. The size of $M$ 
is expressed by the term $|M|:=\Sigma_{\bar x} \{ f(\bar x) \}$. 
For $i\leq k$, the value of the term $T_{i}(z):=\Sigma_{\bar x} \{ f(\bar x) : x_i=z\}$
on $\fD$, for $z=a$, is  $T_i^\fD[a]=|M_{x_i=a}|$, and terms $T_{i_1\dots,i_\ell}(z_1,\dots, z_\ell)$
for tuples of values $a_1,\dots,a_\ell$ are defined analogously. Then
\begin{itemize}
\item $\fA\models_M x_i\mincl x_j\ \Iff\ \fD\models \A z  \bigl(  T_i(z) =T_j(z) \bigr)$;
\item $\fA\models_M x_i\excl x_j\ \Iff\ \fD\models \A \bar x \A \bar y \big( x_i= y_j \ra ( f\bar x = 0 \lor f\bar y=0)\bigr) $;
\item $\fA\models_M x_i\mindep x_j\ \Iff\ \fD\models \A y\A z \bigl (T_i(y)\cdot T_j(z)= |M|\cdot T_{ij}(y,z) \bigr)$.
\end{itemize}
Notice that while inclusion and exclusion atoms are definable in the Presburger fragment of $\fomts$,
the definition of independence atoms requires multiplication.

To represent multiteam semantics beyond the atomic level, second-order features need to be added.
Different variants for existential second-order logic for metafinite structures (or, equivalently, metafinite spectra) are discussed extensively in \cite{GraedelGur98}.
It makes a difference, whether existential quantification is
used only over relations ranging over the finite structure $\fA$, or, as needed here, over functions $g\colon A^k\ra \N$.
Notice however, that the natural closure of $\fomts[+,\,\cdot\,]$ under existential second-order quantification,
i.e.~the set of sentences $\E g_1\dots\E g_m\phi$, where $\phi$ is in $\fomts[+,\,\cdot\,]$, extended to
weight functions $g_1,\dots,g_m$ rather than just $f$, is a logic in which one can easily 
express all of $\fom[\Omega]$ for any collection $\Omega$ of first-order definable multiset atoms,
but which is  far too powerful, with non-recursive semantics, capturing all r.e.~properties of 
multiteam structures (see \cite[Corollary~4.11]{GraedelGur98}).  Instead we consider a restricted language, 
that limits the range of the quantified functions, and uses only the Presburger fragment of $\fomts$.

\begin{definition}
  We define $\esomts[+]$ as the class of formulae of form
  \[   \psi(f):= (\E g_1\bound b_1)\dots (\E g_m\bound b_m)\phi(f,g_1,\dots,g_m)\]
  where $\phi$ is in the Presburger fragment $\fomts[+]$ and where the bounds $b_i$ are of form $|f|\cdot |A^{\ell_i}|$ for $\ell_i\geq 0$.
  The semantics is the obvious one: $\fD=(\fA,\fN,\{f\})\models \psi(f)$ if there are functions $g_1,\dotsc, g_m$, $g_i\colon A^{r_i}\ra \N$ of appropriate arity $r_i$ such that $|g_i|=\sum_{\ta} g_i(\ta) \leq |f|\cdot |A|^{\ell_i}$ and $(\fD,g_1,\dots,g_m)\models\phi$.
\end{definition}

We first observe that $\esomts[+]$ is powerful enough to capture $\fom[\mincl,{}\excl{}]$.
In fact, we prove a stronger result.

\begin{theorem}
  \label{thm: mts to eso}
  Let $\Omega$ be any collection of multiset atoms that are definable in $\fomts[+]$.
  For every formula $\phi(\bar x)\in\fom[\Omega]$ there exists a sentence $\psi(f)\in\esomts[+]$, such that for every multiteam structure $\fD=(\fA,\fN,\{f\})$ we have that $\fA\models_{M_f} \phi$ if, and only if, $\fD\models \psi(f)$.
\end{theorem}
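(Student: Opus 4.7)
The plan is to proceed by structural induction on $\phi$, translating each subformula $\theta$ of $\phi$ into a second-order function symbol $g_\theta$ that represents the multiteam $M(\theta)$ annotated at the node $\theta$ in a structurally valid annotation of $\cT(\phi)$, in the sense of Proposition~\ref{prop:annotation}. At the root we identify $g_\phi$ with the given weight function $f$; for every other subformula we existentially quantify a function $g_\theta\colon A^{r_\theta}\to\N$, where $r_\theta$ is the number of variables in scope at $\theta$. The sentence $\psi(f)\in\esomts[+]$ will then be the conjunction of (i) structural-validity clauses linking $g_\theta$ to the functions attached to its immediate subformulae, and (ii) satisfaction clauses for each atomic subformula.

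The structural clauses are straightforward and only use addition and summation, hence live in $\fomts[+]$. For $\theta = \chi_1 \land \chi_2$ we write $\A\bar x\,(g_\theta(\bar x) = g_{\chi_1}(\bar x) \land g_\theta(\bar x) = g_{\chi_2}(\bar x))$; for $\theta = \chi_1 \lor \chi_2$ we write $\A\bar x\,(g_\theta(\bar x) = g_{\chi_1}(\bar x) + g_{\chi_2}(\bar x))$; for $\theta = \A y\,\chi$ we write $\A\bar x\,\A y\,(g_\chi(\bar x, y) = g_\theta(\bar x))$; and for $\theta = \E y\,\chi$, which in multiteam semantics corresponds to a choice function preserving total multiplicity per assignment, we write $\A\bar x\,(g_\theta(\bar x) = \Sigma_y\{g_\chi(\bar x, y)\})$. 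For an atomic subformula $\alpha\in\Omega$, the clause is the hypothesised $\fomts[+]$-definition of $\alpha$, but with the weight function $f$ replaced by $g_\alpha$ (and any reference to variables outside $\free(\alpha)$ summed out by a $\Sigma$-term, justified by locality). For a first-order literal $\alpha$ we simply write $\A\bar x\,(g_\alpha(\bar x) = 0 \lor \alpha(\bar x))$, which uses only a positive equality.

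For the quantifier bounds required by $\esomts[+]$, note that universal quantification multiplies total multiteam size by $|A|$, while conjunction, disjunction and existential quantification do not increase it. Hence $|M(\theta)|\leq |f|\cdot |A|^{u(\theta)}$, where $u(\theta)$ is the number of universal quantifiers on the path from the root to $\theta$ in $\cT(\phi)$; setting $\ell_\theta := u(\theta)$ yields a legal bound $b_\theta = |f|\cdot|A^{\ell_\theta}|$. Correctness then follows in both directions from Proposition~\ref{prop:annotation}: a structurally valid annotation witnessing $\fA\models_{M_f}\phi$ gives witness functions satisfying $\psi(f)$ within the bounds, and, conversely, witnesses for $\psi(f)$ define such an annotation, whose atomic leaves satisfy their atoms in the corresponding multiteams.

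The main obstacle is the translation of the atomic layer: the hypothesis provides $\fomts[+]$-definitions for atoms evaluated over the single function $f$ encoding the whole multiteam, yet inside $\psi$ each atom $\alpha$ must be evaluated over the function $g_\alpha$, whose variable tuple is in general a proper superset of $\free(\alpha)$. Showing that the definitions lift correctly requires observing that the restriction of $g_\alpha$ to $\free(\alpha)$ is definable by an inner $\Sigma$-term summing out the remaining variables, and that substituting this term for $f$ in the Presburger fragment again yields a Presburger-fragment formula. Once this substitution principle is established the induction runs uniformly over all connectives and quantifiers.
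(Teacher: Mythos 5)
Your proposal is correct and takes essentially the same approach as the paper: the paper's proof is a structural induction that introduces, for each disjunction, universal and existential quantifier, a bounded weight function subject to exactly your constraints (pointwise sum, copying along the new coordinate, and the $\Sigma_y$-marginal condition), with the literal case $\A\tx(f\tx=0\lor\alpha(\tx))$ verbatim. Your flattened, annotation-based formulation — one function per syntax-tree node, bounds $|f|\cdot|A^{u(\theta)}|$, correctness via Proposition~\ref{prop:annotation} — is just a prenex repackaging of the paper's nested inductive construction (and your explicit marginalization of the extra variables at $\Omega$-atom leaves fills in a detail the paper leaves implicit).
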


\begin{proof} If $\phi(\tx)$ is a first-order literal, we translate into $\psi(f):= \A\tx(f\tx = 0\lor\phi(\tx))$.
If $\phi,\phi'\in \fom[\Omega]$ are translated into $\psi(f),\psi'(f)\in \esomts[+]$, respectively,
then
\begin{itemize}
  \item $\fA\models_M \phi\lor\phi'$ if, and only if, $\fD\models (\E g\bound |f|)(\E h\bound |f|)( \A\tx(g\tx+h\tx=f\tx) \land \psi(g)\land\psi'(h))$,
  \item $\fA\models_M \A y \phi$ if, and only if, $\fD\models (\E g \bound |f|\cdot |A|)( \A \tx \A y\ g(\bar x,y)=f(\bar x) \land \psi(g))$,
  \item $\fA\models_M \E y \phi$ if, and only if, $\fD\models (\E g \bound |f|)( \A \tx  (f(\tx)=\Sigma_y \{ g(\bar x,y)\}) \land \psi(g))$.
\end{itemize}
\negline
\end{proof}

It is much more difficult to establish a converse translation.
Since formulae in $\fom$ are always true for the empty multiteam, we only consider
formulae $\psi(f)\in \esomts[+]$ such that $\psi(0)$ is a tautology;
notice that for $f=0$, all existential quantifiers $(\E g_i\bound b_i)$
trivialise because $b_i=0$ and all term equalities reduce to $0=0$.
Thus $\psi(0)$ reduces to a first-order sentence over $\fA$.
If $\psi(0)$ is not a tautology, then we cannot translate $\psi(f)$ into a formula
$\phi\in\fom$ because $f=0$ encodes the empty multiteam and $\fA \models_{\mset{}} \phi$
holds for all $\fA$.

We start by transforming every sentence of $\esomts[+]$ into a normal form which
simplifies the translation of $\esomts[+]$ to $\fom[\mincl, {}\excl{}]$ afterwards.
Further, all structures are assumed to contain at least two elements, hence we may
quantify over Boolean variables (for which we use Greek letters $\mu_i, \chi_i$).

A sentence $\psi(f)\in\esomts[+]$ is in \emph{normal form} if it has the shape
\[
  \psi(f)= (\E g_1\bound b_1)\dots (\E g_m\bound b_m)\A \tx \Lor_{i\leq k} \psi_i
\]
where each $\psi_i$ is a conjunction of first-order literals $\alpha(\bar x)$ and
\emph{simple term equalities}, that are terms of form $g\tx = 0$ or
$\Sigma_{\ty} \{ g\tx\ty \}=\Sigma_{\tv} \{ h\tv\tz \}$.

\begin{lemma}
  \label{lem: eso nf}
  Every sentence $\psi(f)\in\esomts[+]$ such that $\psi(\tuple{0})$ is a tautology
  is equivalent to one in normal form.
\end{lemma}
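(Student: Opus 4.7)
The plan is to transform $\psi(f) = (\E g_1\bound b_1)\dots(\E g_m\bound b_m)\phi$ by successively normalising the $\fomts[+]$-matrix $\phi$ while adding fresh bounded function quantifiers, until the required shape is reached. Since $\psi(\tuple{0})$ is a tautology, the case $f=0$ is trivial, so I restrict attention to multiteam structures with $f\neq 0$, i.e.\ $|f|\geq 1$; the positivity of $|f|$ will be crucial for simulating the missing numerical constant $1$.

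First, I would bring $\phi$ into prenex form and Skolemise its existential first-order quantifiers. For each pattern $\A\tx\,\E y\,\chi$, introduce a fresh second-order function $s\colon A^{|\tx|+1}\to\N$, replace $\E y\,\chi$ by $\A y\,\bigl(s(\tx,y)=0 \lor \chi(\tx,y)\bigr)$, and add the defining constraint $\A\tx\,\A y_0\,\bigl(\Sigma_y\{s(\tx,y)\}=\Sigma_\ta\{f\ta\}\bigr)$, which forces $\Sigma_y s(\tx,y)=|f|\geq 1$ per slice. This simultaneously bounds $s$ by $|f|\cdot|A|^{|\tx|}$ and guarantees that for every $\tx$ at least one $y$ contributes, making the encoding equivalent to the original existential. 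After Skolemisation, all remaining first-order quantifiers are universal and can be collected into a single block $\A\tx$.

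Second, with the matrix now quantifier-free, distribute conjunctions and disjunctions into disjunctive normal form $\Lor_i\psi_i$, where each $\psi_i$ is a conjunction of first-order literals and term equations (no inequalities occur, since equality appears only positively in $\fomts[+]$). To reduce each term equation to simple form, recursively introduce, for every compound subterm $t(\tx)$, a fresh function $g_t(\tx)$ with defining constraint: $g_t\tx=0$ for $t=0$; $\Sigma_\emptyset\{g_t\tx\}=\Sigma_\emptyset\{h\tx\}$ for $t=h\tx$; for $t=t'+t''$, an indexing auxiliary $h(\tx,y)$ with $h(\tx,0)=g_{t'}\tx$, $h(\tx,1)=g_{t''}\tx$, $h(\tx,y)=0$ otherwise, together with $g_t\tx=\Sigma_y\{h(\tx,y)\}$; for $t=\Sigma_\ty\{t':\alpha\}$, an auxiliary $h(\tx,\ty)$ satisfying the two case-split constraints $\alpha\lor h(\tx,\ty)=0$ and $\neg\alpha\lor h(\tx,\ty)=g_{t'}(\tx,\ty)$, together with $g_t\tx=\Sigma_\ty\{h(\tx,\ty)\}$. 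One final round of DNF distribution absorbs the newly added constraints into the disjuncts, giving the required shape.

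The main obstacle is bookkeeping the bounds of all introduced functions. Because $\fomts[+]$ forbids multiplication, a straightforward structural induction on $t$ yields constants $c_t,d_t$ such that $|t^\fD(\ta)|\leq c_t(1+|f|)|A|^{d_t}$, and therefore $|g_t|=\sum_\tx g_t(\tx)\leq c_t(1+|f|)|A|^{|\tx|+d_t}$; since $|A|\geq 2$ and $|f|\geq 1$, the multiplicative constant $c_t$ is absorbed into an extra factor $|A|^k$, yielding the required bound of shape $|f|\cdot|A|^{\ell}$. At $f=0$ every second-order quantifier is forced to the zero function by its bound, so the normal form reduces to the first-order reduct of $\psi(\tuple 0)$, which is tautological by hypothesis and matches the original formula.
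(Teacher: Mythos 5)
Your proof is essentially correct and uses the same core devices as the paper's: guarded Skolemisation of first-order existentials by a bounded weight function whose slices are forced positive via an equation with $\Sigma_{\ta}\{f\ta\}$ (exploiting $|f|\geq 1$), flattening of compound terms through auxiliary functions pinned down by defining equations, a two-point indexing auxiliary summed over an extra coordinate to simulate addition, a guarded copy of $g$ to eliminate conditions in summations, a final DNF step, and absorption of multiplicative constants into powers of $|A|$ for the bounds. The one genuinely different organisational choice is how second-order quantifiers are managed: the paper introduces its auxiliary functions \emph{locally}, inside the scope of first-order universal quantifiers, and must then commute $\A x(\E g\bound |f|\cdot|A|^\ell)$ to the front; this is exactly where its ``subtle pitfall'' appears and is repaired with the extra function $\hat g$ and the per-slice equation $\Sigma_{\ty}\{g'\ty x\}=\Sigma_{z}\{\hat{g}z\}$. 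You sidestep that step entirely by parameterising every auxiliary function by the universally quantified variables from the outset, so all second-order quantifiers are born outermost and no commutation (hence no slice-size repair) is needed; the price is higher arities, which your bound bookkeeping $c_t(1+|f|)|A|^{d_t}$ handles, and over-declaring bounds is harmless because the auxiliaries are fixed by equations under the universal block.

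Two small repairs are needed. First, in the addition case you index $h$ by ``$0$'' and ``$1$'', but the finite structure has no such constants; you must instead use two distinct elements $a\neq b$ of $A$ (available since structures are assumed to contain at least two elements), introduced by a first-order existential that you then eliminate with your own Skolemisation trick --- this is precisely what the paper does with its $\E abcd$-clause. Second, at $f=\tuple{0}$ the normal form does not literally ``reduce to the first-order reduct of $\psi(\tuple{0})$'': the guards $s\tx y=0$ trivialise the former existentials, so it reduces to a \emph{weaker} first-order sentence. Since $\psi(\tuple{0})$ is a tautology and implies this weakening, both sentences are true when $f=\tuple{0}$ and equivalence still holds, so your appeal to the hypothesis is placed correctly, just argued loosely.
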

\begin{proof}
  We assume that $f \neq 0$.
  By renaming variables we can make sure that no re-quantification occurs.
  Towards first-order existential quantification we make use of Skolemisation as follows:
  we substitute a subformula $\E x \psi$ by $(\E f_x\bound |f|)(\Sigma_\tx\{f\tx\} = \Sigma_y\{f_xy\} \land \A x(f_xx=0 \lor \psi))$.
  The function $f_x$ must assign at least one value a non-negative weight because $\Sigma_\tx\{f\tx\} > 0$.

  Let $t(\tx)=t'(\ty)$ be a non simple term occurring in $\phi$, where simple means either $g\tx = 0$ or $\Sigma_{\ty} \{ g\tx\ty \}=\Sigma_{\tv} \{ h\tv\tz \}$.
  We introduce new appropriately bounded function variables $g_{t}$ and $g_{t'}$ (depending on the functions occurring in the terms $t$ and $t'$ and the number of additions) together with the equalities $g_{t}\tx = t(\tx) \land g_{t'}\ty = t'(\ty) \land g_{t}\tx = g_{t'}\ty$ and replace the original equality by this expression.
  Assume $t'$ is either $\Sigma_{\tx}\{t : \alpha\}$ or $g\ty + h\tz$ and occurs in a term other than $f\tv = t'$.
  Introduce the function $g_{t'}$ together with the equality $g_{t'}\ty\tz = t'$ and replace all other occurrences of $t'(\ty\tz)$ by $g_{t'}\ty\tz$.
  For $f'\tx = g\ty + g'\tz$ we again introduce a fresh function variable $h_{g+g'}$ of arity $2$ bounded by $|f|\cdot|A|^{r+r'+1}$, where $r$, resp.~$r'$, refers to the bound of $g$, resp.~$g'$.
  Moreover, we add $\E abcd(ac \neq bd \land h_{g+g'}ac = g\ty \land h_{g+g'}bd=g'\tz \land \A v\A w((vw\neq ac \lor vw\neq bd)\ra h_{g+g'}vw=0))$.
  Hence we can exchange $f'\tx = g\ty + g'\tz$ by $f'\tx = \Sigma_{xy}\{h_{g+g'}xy : \true\}$.
  The last non simple term that still might occur is $f\tx = \Sigma_{\ty}\{g\ty\tz : \alpha\}$.
  The same technique is applied in this case, that is a fresh function variable $h$ is quantified with the same arity and bound as $g$.
  Further the formula $\A\ty\A\tz((\alpha \ra h\ty\tz = g\ty\tz)\land (\neg\alpha \ra h\ty\tz = 0))$ ensures that $h$ behaves as intended, therefore $\Sigma_{\ty}\{g\ty\tz : \alpha\}$ can be exchanged by $\Sigma_{\ty}\{h\ty\tz\}$.
  Lastly, we may of course replace $g\tx$ by $\Sigma_{\ty} \{ g\tx \}$.

  The next task is to bring all second-order quantifiers in front.
  It is clear that function quantification distributes over conjunction and disjunction, but universal quantification has to be treated with care.
  Consider the case $\A x(\E g\bound |f|\cdot|A|^\ell)\psi$.
  Analogously to the translation into Skolem normal form in $\eso$, we increase the arity of $g$ by one (call it $g'$) and its bound by a factor of $|A|$.
  Then we replace all occurrences of $g\ty$ by $g'\ty x$.
  At this point a subtle pitfall appears: since the bound on $g'$ is increased by a factor of $|A|$ in order to store the function $g$ depending on $x$ it might be the case that for some values $a$ of $x$, $\Sigma_\ty g'\ty a > |f|\cdot|A|^\ell$, which is not equivalent to stating that for each $x$ a $g$ of the required bound exists.
  Hence we must make sure that the size of $g'$ for a given $x$ is no larger than that of $g$.
  The resulting formula thus is $(\E g'\bound |f|\cdot|A|^{\ell+1})(\E \hat{g}\bound|f|\cdot|A|^\ell)\A x\,(\Sigma_{\ty}\{g'\ty x\} = \Sigma_{z}\{\hat{g}z\} \land \psi[g\mapsto g'])$.
  Of course, since all variables have unique names we can move all universal quantifiers in front of the (otherwise) quantifier free formula $\psi$ to obtain $(\E g_1\bound b_1)\dots(\E g_k\bound b_k) \A\tx \psi$ and then rewrite $\psi$ to be in disjunctive normal form.

  Finally, we may exchange a bound $|g_i|\cdot|A|^r$ by $|f|\cdot|A|^\ell$ if $g_i$ is bounded by $|f|\cdot|A|^{\ell-r}$ and by iterating this procedure only the required bounds remain.
\end{proof}

\begin{theorem}
  \label{thm:eso to mts}
  For every sentence $\psi(f)\in\esomts[+]$ such that $\psi(0)\equiv\true$,  there exists a formula 
  $\phi\in  \fom[\mincl,{}\excl{}]$ such that, for every multiteam structure $\fD=(\fA,\fN,\{f\})$ where $f$ encodes a non-empty multiteam $M$,
  we have that $\fA\models_M \phi$ if, and only if, $\fD\models \psi(f)$.
\end{theorem}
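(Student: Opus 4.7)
The strategy is to translate $\psi(f)$ compositionally, guided by the normal form provided by Lemma~\ref{lem: eso nf}. We may assume $\psi(f) = (\E g_1 \bound b_1) \cdots (\E g_m \bound b_m) \A \bar x \bigvee_{i \leq k} \psi_i$, with each $\psi_i$ a conjunction of first-order literals and simple term equalities of the form $g\bar x = 0$ or $\Sigma_{\bar y}\{g\bar x\bar y\} = \Sigma_{\bar v}\{h\bar v\bar z\}$, and each bound $b_j = |f|\cdot |A|^{\ell_j}$ where $g_j$ has arity $r_j$. The hypothesis $\psi(0) \equiv \true$ enters only at the base case: it guarantees compatibility with the empty multiteam, on which every $\fom$-formula is vacuously true.

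The central mechanism is to simulate each second-order quantifier $(\E g_j \bound b_j)$ by a combination of multiteam quantifiers that create a block in which $g_j$ is encoded as multiplicities. Concretely, universally quantifying $\ell_j$ fresh variables $\bar u_j$ inflates the current multiteam by a factor $|A|^{\ell_j}$, producing a block whose total size matches $b_j$; then existentially quantifying $r_j$ variables $\bar v_j$ together with a Boolean flag $\mu_j$ allows $g_j(\bar a)$ to be recorded as the number of assignments in the flagged portion whose $\bar v_j$-projection equals $\bar a$. The flag is essential because $|g_j|$ may be strictly less than $b_j$: flagged assignments realise $g_j$, while unflagged ones provide slack. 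Iterating this construction and then processing $\A \bar x$ as a plain multiteam universal quantifier, one obtains a master multiteam that simultaneously encodes all of $g_1,\dots,g_m$.

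The atomic constraints of each $\psi_i$ are then expressed using $\mincl$ and $\excl$. For $g\bar x = 0$ we use the implication shorthand $\zeta \to \chi$ (which restricts evaluation to $M_\zeta$) to confine attention to flagged assignments and express via exclusion that no such assignment realises the forbidden $\bar v_g$-value. For a simple term equality $\Sigma_{\bar y}\{g\bar x\bar y\} = \Sigma_{\bar v}\{h\bar v\bar z\}$, the equation asserts that the multiplicity of each $\bar x$-value in the flagged $g$-encoding equals the multiplicity of each $\bar z$-value in the flagged $h$-encoding; by Proposition~\ref{prop: mincl basic properties}, this is exactly captured by a multiteam inclusion $\bar x \mincl \bar z$ applied after restricting to the flagged portion. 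First-order literals translate verbatim, and the disjunction $\bigvee_i \psi_i$ becomes a multiteam disjunction splitting the master multiteam into $k$ parts, each satisfying its corresponding translated $\psi_i$.

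The hard part will be reconciling the fact that successive universal quantifications inflate the master multiteam well beyond each individual bound $b_j$, so the encoding of $g_j$ must be carefully isolated from the extraneous universal variables $\bar u_{j'}$ with $j' \neq j$ and from $\bar x$; additional bookkeeping conjuncts built from $\excl$ enforce that the flagged $g_j$-portion has total size at most $b_j$ by comparison against the $(\bar x, \bar u_j)$-projection of the master multiteam. Soundness and completeness are then verified by induction on the structure of the normal form: for completeness, the crucial point is that multiteam Skolem choice functions may assign distinct values to distinct occurrences of the same assignment, providing enough flexibility to realise an arbitrary target $g_j$; for soundness, one reads off each $g_j$ from a structurally valid annotation of $\phi$'s syntax tree as in Proposition~\ref{prop:annotation} and verifies that the translated atoms faithfully reproduce the term equalities of the normal form.
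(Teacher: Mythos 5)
Your overall architecture matches the paper's: use the normal form of Lemma~\ref{lem: eso nf}, encode each existentially quantified function $g_j$ as multiplicities of assignments carrying a Boolean flag after inflating the multiteam with universal quantifiers, translate literals verbatim and simple term equalities by counting flagged assignments with inclusion atoms. However, two of your concrete design choices break the argument. First, you translate the first-order disjunction $\bigvee_{i\leq k}\psi_i$ as a plain multiteam disjunction that ``splits the master multiteam into $k$ parts''. This is unsound: a multiteam split is assignment-by-assignment and need not respect the $\tx$-slices, so a witnessing split can cut through the slice $M_{\tx=\ta}$ and redistribute flagged assignments among the disjuncts in a way that makes the translated term equalities true inside one part even though the globally encoded functions violate them; the converse direction of the theorem then fails. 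The paper avoids this by not splitting at all in the relevant sense: it existentially quantifies Boolean selectors $\chi_1,\dots,\chi_k$ constrained by $\dep(\tx,\chi_i)$, so that whole $\tx$-slices are routed uniformly, and it prefixes the inclusion atoms with $\tx$ (as in $\tx\nu\mincl\tx\lambda$), using Proposition~\ref{prop: mincl basic properties}(3) to force the count comparison separately within every slice. This uniformisation device is the key missing idea.

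Second, simulating each $(\E g_j\bound b_j)$ by its own universal block $\bar u_j$ of length $\ell_j$, with choices interleaved, encodes different functions at different scales: after the whole prefix, the flagged count for $g_j$ is $g_j(\ta)$ multiplied by $|A|^{\ell_{j+1}+\dots+\ell_m+|\tx|}$, so an inclusion atom comparing flagged counts for $g_i$ and $g_j$ with $i\neq j$ compares the two sums up to different multiplicative factors and no longer expresses the term equality. Your proposed repair, ``bookkeeping conjuncts built from $\excl$'' to bound the flagged portion, does not help here: exclusion atoms assert disjointness of value sets and cannot compare cardinalities, and in any case the problem is the cross-function scale mismatch, not only the bound $|g_j|\leq b_j$. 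The paper resolves both issues at once with a single universal block $\A z_1\dots\A z_\ell$ with $\ell=\max_i \ell_i$ quantified \emph{before} all the existential choices, together with the padding constraint in $\theta$ that $\mu_i$ may only be true on assignments with $\tz^{>\ell_i}=\bar 0$; then every $g_i$ is represented at the same scale in the same master multiteam and the bound $b_i=|f|\cdot|A|^{\ell_i}$ is automatic. Without these two devices your translation does not establish the stated equivalence.
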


\begin{proof}
By Lemma \ref{lem: eso nf} we may assume that $\psi$ is equivalent to
$(\E g_1\bound b_1)\dots (\E g_m\bound b_m)\A \tx \Lor_{i\leq k} \psi_i$,
where each $\psi_i$ is a conjunction of first-order literals and simple term equalities.
For a tuple of variables $\tz=z_1,\dots,z_m$ and $r\leq m$ we use the notation
$\tz=(\tz^{\leq r},\tz^{>r})$ where $\tz^{\leq r}=z_1,\dots,z_r$ and $\tz^{>r}=z_{r+1},\dots,z_m$.
To simplify notation we denote $f$ by $g_0$. Let $b_i=|f| \cdot |A|^{\ell_i}$,
and let $\ell=\max(\ell_1,\dots,\ell_k)$.
The desired formula has the form
\[
\phi(\ty_0) \ceq \A z_1\dots \A z_\ell\E\mu_0(\E\ty_1\E\mu_1)\dots (\E\ty_m\E\mu_m)\A\tx\;(\theta\land\psi^\#)
\]
where the $\mu_i$ are Boolean variables and $\ty_i$ is a tuple whose length matches the arity of $g_i$.
A multiteam $M$ over the variables $\ty,\tmu$ encodes the functions $g_0$ to $g_m$
by setting $g_i^M(\ta) \ceq |\mset{s\in M : s(\mu_i)=\true \text{ and } s(\ty_i) = \ta}|$.
The subformula $\theta$ controls the behaviour of the Boolean variables $\mu_i$
that indicate which values for $\ty_i$ are used for representing a function value of $g_i$.
Notice that the multiteam $M[\tz\mapsto A^\ell]$ has size $|f|\cdot|A|^\ell$.
Since this bound may be larger than the bound of $g_i$ we impose in $\theta$ that
only assignments satisfying $\tz^{>\ell_i}=0$ can encode values of $g_i$, by
\[
  \theta \ceq \Land\nolimits_{i\leq m} \bigl(\bigl( \tz^{>\ell_i}=\bar 0 \imp
  \dep(\tz^{\leq \ell_i}, \mu_i)\bigr) \land (\mu_i \ra  \tz^{>\ell_i}=\bar 0 )\bigr).
\]
Let us describe $\psi^\#$.
By assumption the quantifier-free part of $\psi$ is a disjunction over formulae
$\psi_i$ each of which is a  conjunction of  literals and simple term equalities.
Thus, we have to split all values for the variables $\tx$ such that each one
satisfies (at least one) formula $\psi_i$.
To simulate this, we quantify over Boolean variables $\chi_i$ with the indented
semantics that $\chi_i$ holds in an assignment $s$ if $s(\tx)$ is distributed to $\psi_i$.
Recall that $\alpha \imp \beta$ is a shorthand for $\neg\alpha \lor (\alpha\land\beta)$.
Formally,
\[
  \psi^\# \ceq \E \chi_1\dots\E\chi_k (\Land_{i\leq k} \dep(\tx, \chi_i) \land
  \Lor_{i\leq k}\chi_i \land \Land_{i\leq k}(\chi_i \ra \psi^\#_i)),
\]
where $\psi_i^\#$ is the conjunction of the literals in $\psi_i$ (without changes) and appropriate translations of its term equations:
\begin{itemize}
  \item Equations of form $g_i\tu = 0$ are translated into $\mu_i\ra\tu\neq\ty_i$.
  \item An equation $e(\tx)$ of form  $\Sigma_{\tv} \{ g_i\tu\tv \}=\Sigma_{\tw} \{ g_j\tu'\tw \}$ where $\tu,\tu'$ are tuples of length $r$ and $s$, respectively, of (not necessarily distinct) variables from $\tx$ is translated into
  \[
  \E \nu\E \lambda \bigl(
  (\nu \lra (\mu_i\land  \tu=\ty_i^{\leq r}))\land 
  (\lambda \lra (\mu_j\land  \tu'= \ty_j^{\leq s})) \land \tx \nu\mincl\tx \lambda \bigr).
  \]
\end{itemize}
Notice that the variables $\tx$ are universally quantified and hence the functions $g_i$ are represented multiple times
in the multiteam on which $\psi^\#$ is evaluated. But this is not a problem because the inclusion statement for $\nu$ and $\lambda$
must hold for every value of $\tx$ (cf.~Proposition \ref{prop: mincl basic properties}).

It remains to prove that for the translation of a
sentence $\psi(f)\in\esomts$ in normal form into $\phi\in\fom[\mincl,{}\excl{}]$, we indeed
have that $\fA\models_M \phi$ if, and only if, $\fD\models \psi(f)$,
for every $\fD=(\fA,\fN,\{f\})$ where $f$ encodes a non-empty multiteam $M$.
Recall that 
\begin{align*}
  \psi(f) &= (\E g_1\bound b_1)\dots (\E g_m\bound b_m)\A \tx \Lor_{i\leq k} \psi_i &\text{and}\\
  \phi(\ty_0) &= \A z_1\dots \A z_\ell \E \mu_0(\E\ty_1\E\mu_1)\dots (\E\ty_m\E\mu_m)\A\tx\;(\theta\land\psi^\#), &\text{with}\\
  \theta &= \Land_{i\leq k} \bigl(\bigl( \tz^{>\ell_i}=\bar 0 \imp \dep(\tz^{\leq \ell_i}, \mu_i)\bigr) \land
    \bigl(\mu_i \imp \tz^{>\ell_i}=\bar 0 \bigr)\bigr) &\text{and}\\
  \psi^\# &= \E \chi_1\dots\E\chi_k (\Land_{i\leq k} \dep(\tx, \chi_i) \land \Lor_{i\leq k}\chi_i \land
    \Land_{i\leq k}(\chi_i \imp \psi^\#_i)).
\end{align*}

Assume that $\fD\models\psi(f)$ that is $(\fD, g_1,\dotsc,g_m) \models \A\tx \bigvee_{i\leq k}\psi_i$ 
for appropriate functions $g_i$.
The evaluation of $\phi$ on $M$ leads to an expanded multiteam
$M_1=M[\tz\mapsto A^\ell,\mu_0\mapsto F_0,(\ty_1,\mu_1)\mapsto F_1,\dots,(\ty_m,\mu_m)\mapsto F_m]$
for appropriate choice functions $F_0,\dots,F_m$.
Recall that $M_1$ encodes $g_i$ if, for all $\ta$, we have that $g_i(\ta)=
|\mset{s\in M_1 : s(\mu_i)=\true \text{ and } s(\ty_i) = \ta}|$.
It is clear that choice functions $F_0,\dots,F_m$ exist, such that $M_1$ satisfies
$\theta$ and  indeed encodes $g_0$ up to $g_m$ in this sense.
Notice however, that $M_1$ is then further expanded by the universal quantifiers
over $\tx$ to $M_2=M_1[\tx\mapsto A^r]$ which then encodes the functions $|A^r|\cdot g_i$ rather than $g_i$.
However, for each value $\tc\in A^r$ the restricted multiteam
$M_1[\tx\mapsto A^r]_{\tx=\tc}$ again encodes $g_0,\dots,g_m$.

Let $A^r = U_1\cupdot\dotsb\cupdot U_{k}$ such that
$(\fD,g_1,\dots,g_k) \models  \psi_i(\tu)$ for all $\tu\in U_i$.
We claim that $\fA \models_{M_2}\psi^\# = \E\chi_1\dots\E\chi_{k}\tilde{\psi}$.
Consider the choice function $F$ such that for each $s \in M_3 \ceq M_2[\chi_1,\dotsc,\chi_{k}\mapsto F]$ we have
$s(\chi_i)=\true$ if, and only if, $s(\tx)\in U_i$.
Clearly, the dependencies $\dep(\tx, \chi_i)$ and the disjunction over all $\chi_i$ are satisfied in $M_3$.
It remains to show that indeed $\fA \models_{M_3}  \chi_i\ra\psi^\#_i$, which means that
$\fA \models_{M(i)} \psi^\#_i$ where $M(i):=\mset{s\in M_3:  s(\chi_i)=\true}$.
Notice that $s(\tx)\in U_i$ for all $s\in M(i)$.
Hence, for any first-order literal $\alpha(\tx)$ in $\psi^\#_i$, the same literal is also in
$\psi_i$ and hence $\fA\models \alpha[s]$ for $s\in M(i)$.
Recall that equations $e(\tx)$ in $\psi_i$ of form $g_i\tu = 0$ (for $\tu\subseteq\tx$)
are translated into $\mu_i\ra\tu\neq\ty_i$.
Since $(\fD,g_i)\models e(\ta)$ for all $\ta\in U_i$, and $M_1$ encodes $g_i$
there cannot be any $s\in M(i)$ with $s(\mu_i)=\true$ and $s(\tu)=s(\ty_i)$.
Finally consider equations $e(\tx)$ in $\psi_i$ of form $\Sigma_{\tv} \{ g_k\tu\tv \}=\Sigma_{\tw} \{ g_j\tu'\tw \}$
and their translations  in $\phi$ by
\[  e^\#:=\E \nu\E \lambda \bigl(
      (\nu \lra (\mu_k\land  \tu=\ty_k^{\leq r}))\land 
       (\lambda \lra (\mu_j\land  \tu'= \ty_j^{\leq s})) \land \tx \nu\mincl\tx \lambda \bigr).\]
By assumption the functions $g_k$ and $g_j$  satisfy $e(\ta)$ for $\ta\in U_i$.
To see that $e^\#$ is true in $M(i)$, we expand
all assignments $s\in M(i)$ with $s(\mu_k)=\true$ and $s(\tu)=s(\ty_k^{\leq r})$ to $s[\nu\mapsto \true]$,
and all others to $s[\nu\mapsto \false]$, and analogously for $\lambda$.
This produces a new multiteam $\tilde{M}(i)$, and  we have to prove that it
satisfies the inclusion  $\tx \nu\mincl\tx \lambda$.
Consider any value $\ta\in U_i$ for $\tx$; it induces values $\tc$ and $\tc'$ for the subtuples $\tu$ and $\tu'$ of $\tx$.
We have that
\begin{align*}
  & |\mset{s \in \tilde{M}(i) : s(\mu_k)=\true \text{ and }s(\tu)=\tc=s(\ty_k^{\leq r})}| = \Sigma_{\tv}\{g_k\tc\tv\}, \text{ and }\\
  & |\mset{s \in \tilde{M}(i) : s(\mu_j)=\true \text{ and }s(\tu')=\tc'=s(\ty_j^{\leq s})}| = \Sigma_{\tw}\{g_j\tc'\tw\}.
\end{align*}
Since the two values are same it follows that, for any tuple $\ta\in U_i$ we have that $\tilde{M}(i)_{\tx=\ta}$
satisfies the inclusion $\nu\mincl\lambda$.
Hence, by Proposition \ref{prop: mincl basic properties}, $\tilde{M}(i)$ satisfies $\tx \nu\mincl\tx \lambda$.

For the converse, assume that we have a structurally valid annotation of the syntax tree $\cT(\phi)$ 
witnessing that $\fA\models_M \phi$.
It assigns to every subformula $\eta$ of $\phi$ a multiteam $M(\eta)$ with $\fA\models_{M(\eta)} \eta$.
Thus $M(\A\tx(\theta \land \psi^\#))$ encodes functions $g_0$ to $g_k$ satisfying the
bounds $b_i$.  We claim that $(\fD, g_1,\dots,g_k) \models \A\tx \bigvee_{i\leq k}\psi_i$.
As above the values quantified for the $\chi_i$ give rise to a (a fortiori) disjoint cover $A^k = U_1\cupdot\dotsb\cupdot U_{k}$ where $\ta\in U_i$
whenever there is an assignment $s\in M(\psi^\#)$ with $s(\chi_i)=\true$ and $s(\tx)=\ta$.
Notice that $M(\psi_i^\#)=\mset{s\in M(\psi^\#):  s(\chi_i)=\true}$ satisfies $\psi_i^\#$.
We claim that $(\fD,g_1,\dots,g_m) \models \psi_i(\ta)$ for $\ta\in U_i$.
For first-order literals $\alpha(\tx)$ or equalities $g_j\tu=0$ in $\psi_i$, this follows directly from the fact that
 $\fA\models_{M(\psi_i^\#)} \psi_i^\#$.
 Finally, we consider equations $e(\tx)$  in $\psi_i$ of   form  $\Sigma_{\tv} \{ g_k\tu\tv \}=\Sigma_{\tw} \{ g_j\tu'\tw \}$.
 By assumption, the multiteam $M(\psi_i^\#)=M(e^\#)$ satisfies $e^\#$; hence there is a choice function $G$
 such that $M(\tx \nu\mincl\tx \lambda)=M(e^\#)[\nu\lambda\mapsto G]$ satisfies the inclusion $\tx \nu\mincl\tx \lambda$,
 which implies that its restrictions  $M(\tx \nu\mincl\tx \lambda)_{\tx=\ta}$ satisfy $\nu\mincl\lambda$.
Hence, on any such restriction, the choice function $G$ maps $\nu$ and $\lambda$ to $\true$ for an equal number of assignments.
By the stipulation on these values these numbers correspond to $\Sigma_{\tv} \{ g_k\tu\tv \}$ and $\Sigma_{\tw} \{ g_j\tu'\tw \}$, respectively.
Therefore $(\fD,g_1,\dots,g_m) \models e(\ta)$.
\end{proof}

By Theorem~\ref{thm: mts to eso} and \ref{thm:eso to mts}, we have a kind of equivalence between the multiteam logic $\fom[\,\excl\,,\mincl]$ and the existential second-order logic $\esomts[+]$.
Since we know that independence cannot be expressed in $\fom[\mincl,{}\excl{}]$, we infer that it also cannot be defined in $\esomts[+]$.
As we have seen, there is a simple translation of independence atoms into the stronger
variant of $\esomts$ that uses multiplication. However, this stronger logic cannot be translated back
into a logic with multiteam semantics. Indeed, with multiplication at hand it is not difficult to
write a formula stating that $|f| = 1$. But this implies that the logic is not closed under
multiplication of (functions that represent) multiteams.
To come up with an extension of $\esomts[+]$ that would capture $\fom[\mindep]$ we would have to
add a controlled form of multiplication that avoids such problems. It is not difficult to define
an operator that is essentially a direct translation of independence into the setting of
multiteam structures, but it is unclear at this point whether this is possible with a \emph{natural} algebraic operator.

\section{Multiteams with real weights}
\label{sec: real}

We now turn to a different notion of multiteams where the assignments do not come
with a multiplicity in $\N$, but with a \emph{weight}. Although our weight functions need not be probability 
distributions and can take arbitrary values in $\R_{>0}$,
this is related to the notion of a probabilistic team as in \cite{DurandHanKonMeiVir18a, HannulaHirKonKulVir19},
as long as we only consider dependencies that are invariant under
multiplication. This has also been noted in \cite{HannulaHirKonKulVir19}.

\begin{definition}
  A \emph{real valued multiteam} is a pair $(X, r)$ where $X$ is a team and $r\colon X\to\R_{>0}$.
\end{definition}

Notions such as $|M|, M \uplus N, M\subsetpeq N$ and so on are defined as for natural multiteams.
We view a splitting $(X, r) = R_1 \uplus R_2$ as the action of a function
$\cS\colon X \to [0, 1]$, with $R_1 = \cS\cdot (X, r) \ceq (\{s \in X : \cS(s) > 0\},
s\mapsto r(s) \cdot\cS(s))$ and $R_2 = (1-\cS)\cdot(X, r)$.
Regarding multiteam extensions we put $M[x \mapsto A] = (X[x\mapsto A], r'\colon s[x\mapsto a] \mapsto r(s))$
and $M[x\mapsto F] = (\{s[x\mapsto a] : s\in X, F(s)(a) > 0\}, r')$ with
$r'\colon s[x\mapsto a] \mapsto r(s) \cdot F(s)(a)$, where $F\colon X \to [0, 1]^A$
is a choice function, that is for all $s\in X$, $\sum_{a\in A}F(s)(a) = 1$ holds.

The semantics of  logical connectives for real valued multiteams is analogous to the one given in Definition~\ref{def: multiteam semantics}.
It only differs slightly for disjunction and existential quantification:
\begin{itemize}
  \item $\fA\models_M \phi_1\lor\phi_2 \defiff \fA\models_{\cS\cdot M}\phi_1$ and $\fA\models_{(1-\cS)\cdot M}\phi_2$ for some $\cS\colon X\mapsto [0, 1]$;
  \item $\fA\models_M \E x \phi \defiff \fA\models_{M[x\mapsto F]} \phi$ for some choice function $F\colon X\to[0, 1]^A$.
\end{itemize}
Next we define a notion of approximation for multiteams and the concepts of
topologically open and closed sets of multiteams and formulae.
The intention is that an open formula $\phi$ has the property that if $M$ satisfies
$\phi$ and $N$ is sufficiently close to $M$ then also $N$ should satisfy $\phi$.
A topologically closed formula $\psi$ will have the property that whenever all
elements of a sequence of multiteams satisfy $\psi$, then also their limit does.

\begin{definition}
  A multiteam $N = (Y, t)$ approximates $M = (X, r)$ with error $\error{M}(N)$
  if $Y \subseteq X$ and $\error{M}(N)=\sum_{s \in X} |r(s) - t(s)|$.
  If $Y\nsubseteq X$, we put $\error{M}(N)=\infty$.
\end{definition}

Notice that we only approximate multiteams ``from below'', i.e., whenever $\error{(X, r)}(Y, t) < \infty$ then $Y \subseteq X$.
The reason for this choice is that first-order formulae are preserved by subteams but not by superteams.
Hence, if a multiteam $N$ has more assignments (in its support) than $M$ then the truth of $\phi$ in $M$
gives us no indication whether $\phi$ holds in $N$.
The same holds for the opposite notion.
Let $(M_i)_{i\in\N}$ be a sequence of multiteams $M_i = (X, r_i)$.
If for every $s\in X$ the sequence $(r_i(s))_{i\in\N}$ converges we define $\lim\limits_{i\to\infty}M_i \ceq (Y, r)$, where $r(s) \ceq \lim\limits_{i\to\infty} r_i(s)$ and $Y = \{s\in X : r(s) > 0\}$.
We call $(M_i)_{i\in\N}$ convergent if its limit exists.

\begin{definition}
  \label{def: clopen}
  A set  $\cM$ of multiteams is called
  \begin{itemize}
    \item \emph{open} if for every $M \in \cM$ there is some $\epsilon > 0$ such that $\error{M}(N) < \epsilon$ implies $N \in \cM$, and
    \item \emph{closed} if for every convergent sequence $(M_i)_{i\in \N}$ of multiteams $M_i \in \cM$ also $\lim\limits_{i\to\infty} M_i \in \cM$.
  \end{itemize}
\end{definition}

We call a formula $\psi$ \emph{topologically open} or \emph{closed} if for every structure $\fA$ the set
$\psi^\fA = \{M : M \text{ is a multiteam over } A \text{ with } \free(M) = \free(\psi) \text{ and } \fA\models_M\psi\}$ is open, resp.\ closed.
We write $\overline{\psi^\fA}$ for the analogous set with $\fA\nmodels_M\psi$.
As usual, a formula is called \emph{clopen}, if it is both open and closed.
Open and closed sets are related in the usual way, if we tighten the definitions to allow only multiteams of the same support (see the discussion above).

\begin{lemma}
  Let $\fA$ be a structure, $\psi$ be a formula and $X$ a team.
  Then $\psi^\fA\res{X} = \{M \in \psi^\fA : \supp{M} = X\}$ is open if $\overline{\psi^\fA}\res{X} = \{M \in \overline{\psi^\fA} : \supp{M} = X\}$ is closed and vise versa.
\end{lemma}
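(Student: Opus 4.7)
The plan is to exploit the standard topological duality between open and closed sets, specialised to the subspace $\cM_X$ of multiteams with support exactly $X$. The key identification is that $\cM_X$ corresponds to the open positive orthant $\R_{>0}^X\subset\R^X$, and under this identification the quantity $\error{M}(N)$ restricted to pairs in $\cM_X$ is just the $\ell^1$ distance, while the paper's notion of convergence coincides with coordinate-wise convergence (provided no coordinate collapses to $0$, in which case the limit leaves $\cM_X$). With this picture in mind, both implications reduce to routine $\epsilon$--$N$ arguments; the only subtlety is that error balls around an $M=(X,r)$ may in principle reach outside $\cM_X$, which has to be excluded using the minimum weight.

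For the direction assuming $\overline{\psi^\fA}\res{X}$ closed, I would pick any $M=(X,r)\in\psi^\fA\res{X}$, set $\delta\ceq\min_{s\in X}r(s)>0$, and argue by contradiction: if no $\epsilon>0$ witnesses openness, then for every $n\in\N$ there is some $N_n$ with $\error{M}(N_n)<\min(\delta,1/n)$ yet $N_n\notin\psi^\fA\res{X}$. The crucial book-keeping step is that $\error{M}(N_n)<\delta$ forces $\supp{N_n}=X$, because any $s\in X$ missing from $\supp{N_n}$ alone contributes $r(s)\geq\delta$ to the error. Hence $N_n\in\overline{\psi^\fA}\res{X}$. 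Coordinate-wise, $|r(s)-t_n(s)|\leq \error{M}(N_n)\to 0$, so $(N_n)$ is convergent with limit $M$; closedness delivers $M\in\overline{\psi^\fA}\res{X}$, contradicting $M\in\psi^\fA\res{X}$.

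For the converse, assume $\psi^\fA\res{X}$ is open and let $(N_n)_{n\in\N}$ be a convergent sequence in $\overline{\psi^\fA}\res{X}$ with limit $N$. Only the case $\supp{N}=X$ needs attention: if $\supp{N}\subsetneq X$ then $N\notin\overline{\psi^\fA}\res{X}$ and the closedness requirement inside the subspace $\cM_X$ is vacuous. Assuming $\supp{N}=X$ and, for contradiction, $N\in\psi^\fA\res{X}$, openness gives an $\epsilon>0$ such that every $N'$ with $\error{N}(N')<\epsilon$ lies in $\psi^\fA\res{X}$. Since $X$ is finite and each coordinate of $N_n$ tends to the corresponding coordinate of $N$, we obtain $\error{N}(N_n)\to 0$, so some $N_n$ lands in $\psi^\fA\res{X}$, contradicting $N_n\in\overline{\psi^\fA}\res{X}$.

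The main obstacle is purely notational rather than mathematical: the paper's error measure is asymmetric (finite only when $\supp{N}\subseteq\supp{M}$) and the paper's convergence may have a limit whose support is strictly smaller than the common support of the sequence. Both issues have to be tamed by working throughout inside $\cM_X$, using the constant $\delta$ in the first direction and the case split on $\supp{N}$ in the second, after which the lemma is just the usual metric-space duality between open and closed sets.
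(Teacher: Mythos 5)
Your proof is correct and follows essentially the same route as the paper's: both directions are the standard duality argument between $\epsilon$-balls and convergent sequences, carried out by contradiction. The only difference is that you make explicit the support bookkeeping that the paper's proof leaves implicit, namely the bound $\delta=\min_{s\in X}r(s)$ forcing sufficiently close approximants to have support exactly $X$, and the case split on whether the limit of the sequence retains support $X$.
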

\begin{proof}
  Let $\psi^\fA\res{X}$ be open and $(M_i)_{i\in\N}$ a convergent sequence of multiteams in $\overline{\psi^\fA}\res{X}$.
  For the sake of the argument assume $M\ceq\lim\nolimits_{i\to\infty}M_i \notin \overline{\psi^\fA}\res{X}$.
  Then $M \in \psi^\fA\res{X}$, thus there is some $\epsilon > 0$ such that all multiteams with approximation error smaller than $\epsilon$ to $M$ also satisfy $\psi$ and hence belong to $\psi^\fA\res{X}$.
  But this means that $\lim\nolimits_{i\to\infty} M_i \neq M$.
  
  On the other hand, if $\psi^\fA\res{X}$ is closed, assume that $M \in \overline{\psi^\fA}\res{X}$ but for all $\epsilon > 0$ there is some $N$ with $\error{M}(N) < \epsilon$ and $\fA\models_N \psi$.
  Hence there is a sequence of multiteams $(N_i)_{i\in\N}$ approaching $M$ satisfying for all $i\in\N$, $\fA\models_{N_i} \psi$.
  Therefore $\lim\nolimits_{i\to\infty} N_i = M \in \psi^\fA\res{X}$.
\end{proof}

The logical connectives preserve topologically open and closed formulae.

\begin{lemma}
  \label{lem: preserve clopen}
  Let $\phi$ and $\psi$ be topologically open, resp.\ closed, formulae.
  Then $\phi\land\psi$, $\phi\lor\psi$ and $\A x\phi$ and $\E x\phi$ are
  topologically open, resp.\ closed, as well.
\end{lemma}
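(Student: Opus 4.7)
The approach is to handle the four connectives separately, treating the \emph{open} and \emph{closed} cases in parallel for each. The two recurring tools will be: (i) the locality principle (P1) together with the fact that restriction $M\mapsto M\res{V}$ is $1$-Lipschitz with respect to $\error{\cdot}(\cdot)$ and commutes with pointwise limits of weights on a common finite support; and (ii) compactness, since for a fixed finite support $X$ the spaces $[0,1]^X$ of splittings and $[0,1]^{X\times A}$ of choice functions are compact, so any sequence of witnesses admits a pointwise convergent subsequence whose limit still satisfies the closed conditions $\cS(s)\in[0,1]$ and $\sum_a F(s)(a)=1$.

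The conjunction case is immediate from locality: if $N$ is close to $M$, then the restrictions $N\res{\free(\phi)}$ and $N\res{\free(\psi)}$ are close to the corresponding restrictions of $M$ (the error only decreases under restriction). Openness follows by taking the minimum of the two witnessing radii, closedness by applying the hypotheses on $\phi$ and $\psi$ separately to the restricted sequence. Universal quantification is analogous: the map $M\mapsto M[x\mapsto A]$ scales the error by exactly $|A|$ and commutes with pointwise limits on $X[x\mapsto A]$, so openness transfers to $\A x\phi$ by rescaling the radius by $1/|A|$, and closedness transfers by directly computing the limit of $M_i[x\mapsto A]$.

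For existential quantification in the open case I reuse the witnessing choice function: if $\fA\models_M\E x\phi$ via $F\colon\supp{M}\to[0,1]^A$, then $F$ is also defined on any $N$ with $\supp{N}\subseteq\supp{M}$, and a short calculation using $\sum_a F(s)(a)=1$ yields $\error{M[x\mapsto F]}(N[x\mapsto F])=\error{M}(N)$, so the radius witnessing $\phi$ at $M[x\mapsto F]$ also witnesses $\E x\phi$ at $M$. Disjunction in the open case is symmetric: for a nearby $N$ reuse the same splitting $\cS$ and observe $\error{\cS\cdot M}(\cS\cdot N)\le\error{M}(N)$ because $\cS(s)\in[0,1]$, with the same estimate for $1-\cS$.

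The main obstacle is the closed case of $\E$ and $\lor$, where witnessing choice or splitting functions depend on $i$ along a sequence $M_i\to M$ with common support $X$. Here I would invoke compactness of $[0,1]^{X\times A}$ (respectively $[0,1]^{X}$) to extract a subsequence along which the witnesses $F_{i_k}$ (respectively $\cS_{i_k}$) converge pointwise to some $F$ (respectively $\cS$). The product rule for limits applied to the weights $r_{i_k}(s)F_{i_k}(s)(a)$ then shows that $M_{i_k}[x\mapsto F_{i_k}]$ converges pointwise to $M[x\mapsto F]$, with the analogous statement for the two halves of a split. Closedness of $\phi$ applied to the convergent subsequence yields $\fA\models_{M[x\mapsto F]}\phi$ and hence $\fA\models_M\E x\phi$, and likewise for $\lor$. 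The legitimacy of the limit witnesses is automatic, since $[0,1]$ is closed in $\R$ and the stochasticity constraint on $F$ is a closed condition under pointwise limits on a finite index set.
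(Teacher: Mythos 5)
Your proof is correct and follows essentially the same route as the paper's: for the open cases you reuse the witnessing splitting/choice function and bound the error of the transferred approximation, and for the closed cases of $\lor$ and $\E x$ you extract a convergent subsequence of witnesses by compactness (Bolzano--Weierstra\ss) and pass to the limit, exactly as in the paper. The only differences are minor refinements --- explicit use of locality for conjunction, rescaling the radius by $1/|A|$ for $\A x$ (the paper asserts the errors are equal, whereas your scaling is the accurate bookkeeping), and explicitly checking that the limit witnesses remain legitimate --- none of which changes the argument.
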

\begin{proof}
  The proof is carried out by structural induction on the formulae.
  For simplicity we assume that no variable is requantified and that the structure
  $\fA$ is fixed.
  First, we proof that topologically open formulae are preserved.
  Let $\fA\models_M\phi_i$ and $\epsilon^M_{\phi_i} > 0$ depend on $\fA$, $M$ and
  $\phi_i$ such that all multiteams $N$ that approximate $M$ up to an error smaller
  than $\epsilon^M_{\phi_i}$ also satisfy $\phi_i$ under $\fA$.
  
  For conjunction and disjunction we put $\epsilon^M_{\phi_1\circ\phi_2} \ceq
  \min(\epsilon^M_{\phi_1}, \epsilon^M_{\phi_2})$.
  By induction hypothesis it is clear that this bound suffices for the conjunction.
  If $\fA\models_M\phi_1\lor\phi_2$ then there is a splitting $\cS\colon X \to [0, 1]$
  for $M$ s.t.\ $\fA\models_{M_i}\phi_i$ with $M_i = \cS\cdot M$ (resp.\ $1-\cS$).
  Let $N$ be a multiteam with $\error{M}(N) < \epsilon^M_{\phi_1\lor\phi_2}$.
  We claim that $\cS$ is a splitting for $N$ such that $\fA\models_{N_i}\phi_i$
  for $N_1 = \cS\cdot N$ and $N_2 = (1-\cS) \cdot N$.
  Indeed, notice that $\error{\cS\cdot M}(\cS\cdot N) \leq \error{M}(N)$, hence
  $\fA\models_{N_i}\phi_i$ for $i=1,2$.
  
  Assume $\fA\models_M\A x\phi$.
  Thus, for $M' = M[x\mapsto A]$ we have $\fA\models_{M'}\phi$ and hence by induction
  hypothesis a bound $\epsilon_{\phi}^{M'} > 0$ with the required property exists.
  Put $\epsilon_{\A x\phi}^{M} \ceq \epsilon_{\phi}^{M'}$ and let $N$ be a multiteam
  in the $\epsilon_{\phi}^{M'}$ environment of $M$, that is $\error{M}(N) < \epsilon_{\phi}^{M'}$.
  Further, since for $N' = N[x\mapsto A]$ the errors of $N$ to $M$ and $N'$ to $M'$
  are identical, that is $\error{M'}(N') = \error{M}(N)$, and by $\fA\models_{N'}\phi$,
  which follows from $\phi$ being topologically open, we can conclude that also
  $\fA\models_N\A\phi$ as claimed.
  
  Finally, assume $\fA\models_M\E x \phi$.
  That means a choice function $F$ exists such that $\fA\models_{M'}\phi$, where
  $M' = M[x\mapsto F]$.
  As in the previous case let $\epsilon_{\E x\phi}^M \ceq \epsilon_\phi^{M'}$ and
  $N$ be such that $\error{M}(N) < \epsilon_{\E x\phi}^M$.
  For the restriction $F\res{N}$ of $F$ to $N$ we obtain $\error{M}(N) = \error{M'}(N[x\mapsto F\res{N}])$
  which, by induction hypothesis, implies $\fA\models_{N[x\mapsto F\res{N}]}\phi$.
  
  Before we turn our attention to topologically closed formulae and show that the
  logical connectives preserve also this property we recall the Bolzano-Weierstra\ss\ Theorem:
  Every bounded sequence in $\R^n$ has a convergent subsequence.
  
  Let $\psi$ be a topologically closed formula, $(M_i)_{i\in\N}$ be a convergent
  multiteam sequence over the support $X$ such that $\fA\models_{M_i}\psi$ holds
  for all $i\in\N$ and $M = \lim\limits_{i\to\infty}M_i$.
  \begin{itemize}
    \item Regarding conjunction, the property follows directly from induction hypothesis.
    \item Consider $\psi = \psi_1\lor\psi_2$.
    For every $i\in\N$ let $\cS_i \colon X \to [0, 1]$ be a splitting function
    witnessing $\fA\models_{M_i}\psi$.
    By the Bolzano-Weierstra\ss\ Theorem there is a convergent subsequence
    $(\cS_{n_i})_{i\in\N}$ of $(\cS_i)_{i\in\N}$.
    Hence we may apply the induction hypothesis on $(\cS_{n_i}\cdot M_{n_i})_{i\in\N}$
    and $((1-\cS_{n_i})\cdot M_{n_i})_{i\in\N}$ to conclude that $\fA\models_{M}\psi_1\lor\psi_2$.
    \item For a universally quantified formula $\forall x\phi$ the claimed property
    follows directly from an application of the induction hypothesis.
    \item Let $\exists x \phi$ be given where $\phi$ is a topologically closed formula.
    Thus $(F_i)_{i\in\N}$ exists such that for all $i\in\N$ the structure $\fA$
    satisfies $\phi$ under $M_i[x\mapsto F_i]$, in symbols $\fA\models_{M_i[x\mapsto F_i]}\phi$.
    Since $A$ is finite and $F\colon X \to [0, 1]^A$ we may apply the
    Bolzano-Weierstra\ss\ Theorem on the sequence $(F_i)_{i\in\N}$ and obtain a subsequence
    $(F_{n_i})_{i\in\N}$ s.t.\ $\lim\limits_{i\to\infty}(F_{n_i}) = \tilde{F}\colon Y\to[0,1]^A$ exists.
    The induction hypothesis is now applicable to the sequence $(M_{n_i}[x\mapsto F_{n_i}])_{i\in \N}$.
  \end{itemize}
\end{proof}

\begin{example}
  For all (pairwise distinct) variable tuples $\tx, \ty, \tz$ the following atoms are
  \begin{description}
    \item[open but not closed:] $\tx\fork{<q}\ty$ and $\tx\fork{>q}\ty$ for $q\in (0, 1)$,
    \item[closed but not open:] $\tx\mincl\ty$, $\tx\mindep_{\tz}\ty$ and $\tx\fork{\leq q}\ty$ and $\tx\fork{\geq q}\ty$ for $q\in (0, 1)$,
    \item[clopen:] $\dep(\tx, y), \tx\excl\ty$ and all first-order literals.
  \end{description}
  \negline
\end{example}
\begin{proof}
  The simple case are the downwards closed atoms $\delta$ (which includes first-order
  formulae and hence literals).
  It suffices to notice that if $\error{M}(N)$ is finite then $\supp{N} \subseteq \supp{M}$
  which implies that downwards closed atoms are topologically open.
  Furthermore, if all multiteams $M_i$ of a convergent sequence $(M_i)_{i\in\N}$
  satisfy $\delta$, then the for the limit $M = \lim\nolimits_{i\to\infty}M_i$
  holds $\supp{M} \subseteq \supp{M_i}$ for all $i\in\N$ which again implies the
  claim that $\delta$ is topologically closed.
  
  For the atoms claimed to be open, but not closed, take $\tx\fork{<p}\ty$ as an
  example and let $M$ be a multiteam satisfying this atom in $\fA$.
  Among all tuples $\ta\in\supp{M}(\tx)$ and $\tb\in\supp{M}(\ty)$ let $\ta_m$
  and $\tb_m$ be such that $\pr_M(\ty=\tb\mid\tx=\ta)$ is maximal and put
  $A=|M_{\tx\ty=\ta_m\tb_m}|$, $B=|M_{\tx=\ta_m}|$ and $m = pB-A$ (i.e.~$\pr_M(\ty=\tb\mid\ta=\tx) = A/B$).
  Pick an arbitrary $\epsilon$ satisfying the range condition $0<\epsilon<\frac{m}{1+p}$.
  Let $N$ be a real weight multiteam with $\error{M}(N) < \epsilon$ and $\ta\in\supp{N}(\tx)$ and $\tb\in\supp{N}(\ty)$.
  Notice that $\pr_N(\ty=\tb\mid\tx=\ta)$ is bounded from above by $\frac{A+\epsilon}{B-\epsilon}$.
  A basic calculation shows that $\pr_N(\ty=\tb\mid\tx=\ta) < p$:
  Indeed, $\frac{A+\epsilon}{B-\epsilon} < p \iff \epsilon < pB - A - p\epsilon = m-p\epsilon \iff 1 < \frac{m}{\epsilon} - p \iff \epsilon < \frac{m}{1+p}$, which by definition of $\epsilon$ holds.
  
  Regarding the closed atoms take $\tx\mincl\ty$ as an example.
  A multiteam $(X_i,r_i)$ satisfies this atom if for all appropriate $\ta$ the
  equation $\sum_{s\in X_i, s(\tx)=\ta}r_i(s) = \sum_{s\in X_i, s(\ty)=\tb}r_i(s)$ holds.
  Further, consider any convergent sequence of multiteams $(M_i)_{i\in\N}$ that each satisfy $\tx\mincl\ty$.
  It follows directly that the equation above must be satisfied by the limit of this sequence as well (as we can take the limit on both sides of the equation).
  The claim holds for the other atoms as well, since independence is similarly defined over an equation and the forking atoms are satisfied also for equality.
\end{proof}

It follows that logics based on topologically open, resp.\ closed, atoms are expressively separable.
Towards the question concerning which properties of the supporting team can be defined in a logic with real multiteam semantics
we explore the properties of clopen formulae.
Let $\Omega$ be a set of (multiplicity invariant) atomic dependency notions.

\begin{prop}
  \label{prop: clopen}
  For any $\psi \in \fo[\Omega]$ the following are equivalent.
  \begin{enumerate}
    \item\label{prop: clopen - a} $\psi$ is clopen.
    \item\label{prop: clopen - b} For all structures $\fA$ and all $M, N$ with $\error{M}(N) < \infty$ we have that
    $\fA\models_M \psi$ implies $\fA\models_N \psi$.
    \item\label{prop: clopen - c} $\psi$ is downwards closed.
  \end{enumerate}
\end{prop}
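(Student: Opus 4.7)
The plan is to prove the cycle $(3)\Rightarrow(2)\Rightarrow(1)\Rightarrow(3)$, with the last implication carrying all the substance.

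The step $(3)\Rightarrow(2)$ is by scalar invariance. Since the atoms in $\Omega$ are multiplicity invariant, a routine adaptation of Proposition~\ref{prop: mts closure} to the real-weighted setting shows that $\fom[\Omega]$ is closed under rescaling of weights by any positive real, not only by natural numbers. Given $\fA\models_M\psi$ and $N=(Y,t)$ with $\error{M}(N)<\infty$, i.e.\ $Y\subseteq\supp{M}$, I choose $c>0$ large enough that $t(s)\leq c\cdot r(s)$ on the finite set $Y$, obtaining $N\subsetpeq cM$, and conclude $\fA\models_N\psi$ by scalar invariance followed by downwards closure. The step $(2)\Rightarrow(1)$ is essentially immediate: for openness, any $N$ with $\error{M}(N)<\epsilon$ automatically satisfies $\supp{N}\subseteq\supp{M}$ and hence $\psi$ by~(2), so any $\epsilon>0$ works; for closedness, the sequences considered in Definition~\ref{def: clopen} share a common support $X$ and their limit has support contained in $X$, so~(2) applies again.

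The substance lies in $(1)\Rightarrow(3)$. Assume $\psi$ is clopen, $\fA\models_M\psi$ with $M=(X,r)$, and $N=(Y,t)\subsetpeq M$. I will consider the convex combination $M_\lambda$, for $\lambda\in[0,1]$, whose weight on $s$ is $(1-\lambda)\,r(s)+\lambda\,t(s)$ when $s\in Y$ and $(1-\lambda)\,r(s)$ when $s\in X\setminus Y$, so that $M_0=M$, $M_1=N$, and $\supp{M_\lambda}=X$ for every $\lambda<1$. I will show that $S\ceq\{\lambda\in[0,1):\fA\models_{M_\lambda}\psi\}$ is clopen in $[0,1)$: its openness is direct from topological openness of $\psi$, while its closedness within $[0,1)$ follows by applying topological closedness of $\psi$ to any sequence $M_{\lambda_n}$ with $\lambda_n\to\lambda<1$, all of which share the support $X$ and converge weight-wise to $M_\lambda$ (still of support $X$). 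Since $0\in S$ and $[0,1)$ is connected, $S=[0,1)$. A final application of topological closedness to any sequence $\lambda_n\to 1^-$ then transports satisfaction from the same-support multiteams $M_{\lambda_n}$ to their limit~$N$, whose support may properly drop to~$Y$.

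The main obstacle, and the reason neither openness nor closedness alone would suffice, is precisely this support jump at $\lambda=1$: the interpolation must remain inside the connected slice of multiteams of fixed support $X$ while $\lambda<1$, and only the final limit is allowed to drop the support to that of $N$ -- which is exactly the transition that downwards closure is meant to capture. A minor preparatory task is the extension of scalar invariance from natural to positive real scalars used in $(3)\Rightarrow(2)$; this goes through by the same inductive preservation argument as Proposition~\ref{prop: mts closure}, provided each atom in $\Omega$ is invariant under real rescaling of weights.
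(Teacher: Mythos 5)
Your proof is correct, and it reaches the same destination by the same underlying principle — connectedness of the space of weight assignments — but with a different decomposition and a more hands-on execution. The paper makes statement (2) the hub: it proves (2)$\Rightarrow$(1) and (2)$\Rightarrow$(3) as immediate, (3)$\Rightarrow$(2) by scaling plus downwards closure, and (1)$\Rightarrow$(2) by identifying all multiteams with support inside $X$ with points of $\R_{\geq 0}^{|X|}$ and invoking the fact that a connected space has no proper nonempty clopen subset; this yields in one stroke the stronger by-product that once one multiteam with support $X$ satisfies $\psi$, \emph{every} multiteam with support contained in $X$ does. You instead close the cycle (3)$\Rightarrow$(2)$\Rightarrow$(1)$\Rightarrow$(3) and run the connectedness argument along a one-parameter segment from $M$ to $N$, keeping the support fixed for $\lambda<1$ and handling the support drop by one extra application of closedness at the endpoint. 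What your route buys is that it stays entirely within the paper's sequence-based definitions of open and closed sets of multiteams, rather than asserting (as the paper does without proof, and with some blurring of "support exactly $X$" versus "support contained in $X$") that clopenness of $\psi^\fA\res{X}$ coincides with classical clopenness of the corresponding subset of $\R_{\geq0}^{n}$; the explicit treatment of the transition at $\lambda=1$ is exactly the point where downwards closure is extracted. Two smaller remarks: your choice of a sufficiently large constant $c$ with $t(s)\leq c\cdot r(s)$ in (3)$\Rightarrow$(2) is in fact more careful than the paper's use of the factor $|N|$, which need not dominate $N$ when some weights of $M$ are below $1$; and the real-scalar invariance you flag as a preparatory step is the same (implicit) assumption the paper relies on for its multiplicity-invariant atoms, so no extra hypothesis is being smuggled in.
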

\begin{proof}
  The implications ``\ref{prop: clopen - b} $\implies$ \ref{prop: clopen - a}'' and ``\ref{prop: clopen - b} $\implies$ \ref{prop: clopen - c}'' are immediate.
  Regarding ``\ref{prop: clopen - c} $\implies$ \ref{prop: clopen - b}'' assume $\fA\models_M \psi$,
  and let $N$ be a multiteam with $\error{M}(N) < \infty$, that means $\supp{N} \subseteq \supp{M}$.
  This implies $\fA\models_{|N|\cdot M}\psi$ and by downwards closure of $\psi$ also $\fA\models_N\psi$ as required.
  Towards ``\ref{prop: clopen - a} $\implies$ \ref{prop: clopen - b}'' assume that $\fA\models_M\psi$.
  For a fixed support $X$ there is a bijection $\rho$ mapping every multiteam with support $Y\subseteq X$ to an element of $\R_{\geq0}^n$ where $n=|X|$.
  Indeed, for a fixed ordering of $X = \{s_1,\dots,s_n\}$, every multiteam $(Y, r)$ can be written as a tuple $t \in \R_{\geq0}^n$, where the $i$th entry of $t$ is $r(s_i)$.
  We associate the set $\cT = \{t \in \R_{\geq0}^n : \fA\models_{\rho^{-1}(t)}\psi\}$ with $\fA$, $\psi$ and $X$.
  Then, $\cT$ is clopen in the classical sense if, and only if, $\psi^\fA\res{X}$ is clopen.
  It is well known that the only clopen sets of a connected topological space are $\emptyset$ and the whole space itself.
  Because $\fA\models_M\psi$ it follows that $\cT\neq\emptyset$, whence $\cT = \R_{\geq0}^n$.
\end{proof}

\begin{cor}
  \label{cor: real ts -> downwards closed}
  Every topologically closed formula $\psi\in\fom[\Omega]$ defining a team semantical property, meaning that $\fA\models_{(X, r)}\psi$ if, and only if, $\fA\models_{(X, r')}\psi$ holds for all $\fA, X, r$ and $r'$, is downwards closed.
\end{cor}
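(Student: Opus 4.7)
My plan is to exploit the fact that a team semantical property is insensitive to the concrete weights as long as the support stays the same, and to combine this with topological closedness via a limiting argument that lets us ``kill off'' assignments by driving their weights down to zero.

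Concretely, suppose $\psi \in \fom[\Omega]$ is topologically closed and satisfies the given team semantical property. To show downwards closure, fix a structure $\fA$, assume $\fA \models_M \psi$ for $M = (X, r)$, and let $N = (Y, t)$ be an arbitrary submultiteam of $M$, so that $Y \subseteq X$ and $t(s) \leq r(s)$ for every $s \in Y$. I want to conclude that $\fA \models_N \psi$.

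The key construction is an approximating sequence of multiteams all having \emph{the same} support $X$ as $M$. Define $M_i = (X, r_i)$ by setting $r_i(s) := r(s)$ for $s \in Y$ and $r_i(s) := r(s)/i$ for $s \in X \setminus Y$. Each $M_i$ still has full support $X$, and since $\psi$ is a team semantical property and $\fA \models_M \psi$, we obtain $\fA \models_{M_i} \psi$ for every $i \in \N$. The pointwise limits of the weight functions $r_i$ exist: on $Y$ they stabilise at $r(s) > 0$, while on $X \setminus Y$ they tend to $0$. By the definition of the limit of a multiteam sequence, this gives $\lim_{i \to \infty} M_i = (Y, r\res{Y})$.

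Now I invoke topological closedness: since $\fA \models_{M_i} \psi$ for every $i$ and the sequence $(M_i)_{i \in \N}$ is convergent with limit $(Y, r\res{Y})$, it follows that $\fA \models_{(Y, r\res{Y})} \psi$. Finally, one more appeal to the team semantical property, this time on the support $Y$, yields $\fA \models_{(Y, t)} \psi = \fA \models_N \psi$, as desired.

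I do not anticipate a serious obstacle here; the only subtlety is to make sure the approximating sequence genuinely converges in the sense of Section~\ref{sec: real}, which requires each $M_i$ to share a common support (namely $X$) so that pointwise convergence of the weight functions is well-defined. This is exactly why the construction keeps $r_i(s) > 0$ on all of $X$ and only shrinks the weights on $X \setminus Y$, rather than setting them to $0$ outright.
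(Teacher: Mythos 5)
Your proof is correct, but it takes a genuinely different route from the paper. The paper's own argument first observes that the team semantical property forces $\psi$ to be topologically \emph{open} as well (implicitly: choosing $\epsilon$ below the minimum weight of $M$ guarantees that any $N$ with $\error{M}(N)<\epsilon$ has the same support as $M$, so the weight-invariance applies), concludes that $\psi$ is clopen, and then invokes Proposition~\ref{prop: clopen}, whose clopen-implies-downwards-closed direction rests on the connectedness of the weight space $\R_{\geq 0}^n$. You bypass clopenness and Proposition~\ref{prop: clopen} entirely: you use weight-invariance to stay on the full support $X$ while shrinking the weights on $X\setminus Y$ along the sequence $M_i=(X,r_i)$, then apply closedness once to the convergent sequence to land on $(Y, r\res{Y})$, and finish with one more application of weight-invariance on support $Y$ to reach an arbitrary submultiteam $(Y,t)\subsetpeq M$. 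Your handling of the one delicate point — that the paper's notion of convergence requires a common support, which is why you keep $r_i>0$ on all of $X$ rather than zeroing weights outright — is exactly right, and the limit's support is indeed $Y$ by the definition $\{s\in X : \lim_i r_i(s)>0\}$ (the degenerate case $Y=\emptyset$ is also covered, yielding the empty multiteam). What each approach buys: the paper's proof is a two-line corollary once the clopen characterisation is in place, and it situates the statement as an instance of that characterisation; yours is more elementary and self-contained, uses only the definition of closedness (no openness, no connectedness argument), and makes visible that topological closedness alone, combined with weight-invariance, already forces downwards closure.
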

\begin{proof}
  The assumption on $\psi$ implies that $\psi$ must be topologically open as well, hence by Proposition \ref{prop: clopen} it follows at once that $\psi$ is downwards closed.
\end{proof}

This corollary does not hold for open formulae.
Indeed, consider $x\fork{<1}y$ which expresses that every value of $x$ in the multiteam 
appears with at least two values for $y$ in common assignments.
This property clearly depends only on the team and holds regardless of the involved multiplicities (because they are all greater than 0).
In logics with team semantics this atom is known as \emph{anonymity} $x\anon y$, which is union closed but not downwards closed.

As an application of this theory we show that the converse of Proposition \ref{prop: psi dc => psi^T and psi companions} holds.
To avoid confusion we write $\models^\mathsf{T}$, $\models^\N$, and $\models^\R$, for team semantic, multiteam semantics under natural and real numbers, respectively.
Let $\Omega$ be a collection of topologically closed atoms.
\begin{prop}
  \label{prop: psi^T and psi companions <=> psi dc}
  A formula $\psi \in \fom[\Omega]$ is downwards closed if, and only if, $\teamVer{\psi}$ and $\psi$ are companions. 
\end{prop}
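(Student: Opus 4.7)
The plan is to prove both implications, relying on the forward direction of Proposition \ref{prop: psi dc => psi^T and psi companions} for one direction and on the topological machinery of Section \ref{sec: real} for the other.

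The ``only if'' direction is essentially Proposition \ref{prop: psi dc => psi^T and psi companions}, whose hypothesis is that $\Omega$ consists of downwards closed atoms but whose proof only uses that $\psi$ itself is downwards closed together with invariance under scalar multiplication. Since every atom in $\Omega$ is (by our standing assumption) multiplication invariant, and the logical connectives preserve this property (Proposition \ref{prop: mts closure}), the argument of Proposition \ref{prop: downward atom = ts atom} lifts from atoms to $\psi$ and shows that $\psi$ depends only on its supporting team. That $\teamVer{\psi}$ is a team semantical version of $\psi$ then follows by a structural induction on $\psi$, using the corresponding property for atoms and the fact that each subformula of $\psi$ is again downwards closed and hence depends only on the team (so that witnessing multiteams for the subformulae can be chosen with a common support).

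For the ``if'' direction, assume that $\psi$ and $\teamVer{\psi}$ are companions, so that in particular $\psi$ depends only on the team. Each atom in $\Omega$ is topologically closed, and by Lemma \ref{lem: preserve clopen} the logical connectives $\land, \lor, \forall, \exists$ preserve topological closure; hence $\psi$ is itself topologically closed. The hypothesis that $\psi$ depends only on the team is precisely the statement that $\psi$ defines a team semantical property in the sense of Corollary \ref{cor: real ts -> downwards closed}. An application of that corollary yields that $\psi$ is downwards closed.

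The main obstacle, and the subtle step in the ``if'' direction, is to align the notion of ``depends only on the team'' from Definition \ref{def: team semantical version} (which in its original phrasing speaks about $\N$-valued multiteams) with the hypothesis of Corollary \ref{cor: real ts -> downwards closed} (which is phrased over $\R$-valued multiteams). I would bridge this by exploiting scalar invariance of the atoms: given any two real-valued multiteams $(X,r)$ and $(X,r')$ with the same support, approximate both by rational-weighted and, after clearing denominators, $\N$-valued multiteams on the same support; the natural-valued hypothesis then says $\psi$ is satisfied either on all or on none of the approximants, and topological closure of $\psi$ transfers this uniform satisfaction behaviour to the real-valued limits. Once this equivalence is in place, Corollary \ref{cor: real ts -> downwards closed} applies without further adjustment.
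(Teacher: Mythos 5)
Your ``if'' direction has two genuine gaps. First, the bridge you propose to the hypothesis of Corollary \ref{cor: real ts -> downwards closed} only works in the positive case: topological closure transfers \emph{satisfaction} along a convergent sequence to its limit, but it does not transfer non-satisfaction, and moreover $\fA\nmodels^\N_{(X,n)}\psi$ does not yield $\fA\nmodels^\R_{(X,n)}\psi$, because the real-weighted semantics admits strictly more witnesses (real-proportioned splits and choice functions). So from ``$\psi$ fails on all $\N$-weighted approximants'' you cannot conclude that it fails on the real-weighted limits, and the ``all or none'' equivalence over the reals that you need to apply the Corollary as a black box remains unproven. Only the positive half is available --- and, in fact, only that half is needed: from satisfaction on all natural weightings of a support $X$ one gets, via closure, satisfaction on all real weightings of $X$, and, by letting the superfluous weights tend to $0$, on all real-weighted multiteams with support $Y\subseteq X$; this is how the paper uses the topological machinery. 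Second, and more seriously, even where the Corollary applies its conclusion is downwards closure with respect to the \emph{real-weighted} semantics, whereas the proposition asserts downwards closure of $\psi$ as a formula of $\fom[\Omega]$, i.e.\ for $\N$-multiteams under $\models^\N$; these two semantics need not agree on $\N$-weighted multiteams, again because $\models^\R$ has more witnesses. The paper closes this gap with exactly the half of the companion hypothesis that you never invoke: from $\fA\models^\R_{(Y,t)}\psi$ it passes to $\fA\models^{\mathsf{T}}_Y\teamVer{\psi}$, and then, since $\teamVer{\psi}$ is a team semantical version of $\psi$ and $\psi$ depends only on the team, back to $\fA\models^\N_{(Y,n)}\psi$ for every $n$, hence for every submultiteam of the original multiteam. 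That your argument uses only ``depends on the team'' and never the team-semantical-version property is the structural symptom of this missing return step.

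A smaller issue concerns your ``only if'' sketch: the claim that every subformula of a downwards closed $\psi$ is again downwards closed is false in general (for instance $(x\mincl y)\lor x=x$ is downwards closed while its subformula $x\mincl y$ is not), so the proposed structural induction does not go through as stated; the paper simply defers this direction to Proposition \ref{prop: psi dc => psi^T and psi companions} and devotes its proof entirely to the right-to-left implication.
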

\begin{proof}
  It remains to prove the direction from right to left.
  Assume that $\fA\models^\N_{(X, n)} \psi$.
  Then, a fortiori, $\fA\models^\R_{(X, n)}\psi$.
  Since this holds for arbitrary $n\colon X\to\N_{>0}$ we may apply Corollary \ref{cor: real ts -> downwards closed} (from $\fA\models^\N_{(X, n)} \psi$ for all $n\colon X\to\N_{>0}$ follows $\fA\models^\R_{(X, r)}\psi$ for all $r\colon X\to\R_{>0}$ because $\psi$ is topologically closed).
  Hence for all $Y\subseteq X$, $\fA\models^\R_{(Y, t)}\psi$ and, in general, $\fA\models^\R_{(X, r)} \theta$ implies that $\fA\models^\mathsf{T}_X \teamVer{\theta}$.
  Thus, $\fA\models^\mathsf{T}_Y \teamVer{\psi}$ and by the fact that $\teamVer{\psi}$ and  $\psi$ are companions we conclude that $\fA\models^\N_{(Y, n)}\psi$.
\end{proof}

\bibliographystyle{plain}
\bibliography{arxiv}

\end{document}